\newtheorem{thm}{Theorem}[section]
\newtheorem{kor}[thm]{Corollar}
\newtheorem{lem}[thm]{Lemma}
\newtheorem{prop}[thm]{Proposition}
\theoremstyle{definition}
\theoremstyle{remark}
\newtheorem{rem}{Remark}[section]
\begin{document}
%
% paper title
% can use linebreaks \\ within to get better formatting as desired
\title{Minimax Robust Hypothesis Testing}

\author{G\"{o}khan G\"{u}l,~\IEEEmembership{Student~Member,~IEEE,}
        Abdelhak M. Zoubir,~\IEEEmembership{Fellow,~EEE,}
\thanks{G. G\"{u}l and A. M. Zoubir are with the Signal Processing Group, Institute
of Telecommunications, Technische Universität Darmstadt, 64283, Darmstadt,
Germany (e-mail: ggul@spg.tu-darmstadt.de; zoubir@spg.tu-darmstadt.de)}% <-this % stops a space
\thanks{Manuscript received April 19, 2005; revised January 11, 2007.}}

\maketitle
\begin{abstract}
The minimax robust hypothesis testing problem for the case where the nominal probability distributions are subject to both modeling errors and outliers is studied in twofold. First, a robust hypothesis testing scheme based on a relative entropy distance is designed. This approach provides robustness with respect to modeling errors and is a generalization of a previous work proposed by Levy. Then, it is shown that this scheme can be combined with Huber's robust test through a composite uncertainty class, for which the existence of a saddle value condition is also proven. The composite version of the robust hypothesis testing scheme as well as the individual robust tests are extended to fixed sample size and sequential probability ratio tests. The composite model is shown to extend to robust estimation problems as well. Simulation results are provided to validate the proposed assertions.

\end{abstract}
% IEEEtran.cls defaults to using nonbold math in the Abstract.
% This preserves the distinction between vectors and scalars. However,
% if the journal you are submitting to favors bold math in the abstract,
% then you can use LaTeX's standard command \boldmath at the very start
% of the abstract to achieve this. Many IEEE journals frown on math
% in the abstract anyway.

% Note that keywords are not normally used for peerreview papers.
\begin{IEEEkeywords}
Detection, hypothesis testing, robustness, least favorable distributions, minimax optimization, sequential probability ratio test.
\end{IEEEkeywords}

% For peer review papers, you can put extra information on the cover
% page as needed:
% \ifCLASSOPTIONpeerreview
% \begin{center} \bfseries EDICS Category: 3-BBND \end{center}
% \fi
%
% For peerreview papers, this IEEEtran command inserts a page break and
% creates the second title. It will be ignored for other modes.
% \IEEEpeerreviewmaketitle

\section{Introduction}
\label{sec:intro}
The detection of the presence, or absence of an event with a specified accuracy is fundamental to statistical inference and binary hypothesis testing is the usual starting point. There are many applications, where binary hypothesis testing is used, for instance, radar, sonar, digital communications or seismology. A natural extension of binary hypothesis testing is multiple hypothesis testing, which builds a basis for classification and its importance is evident, for example, with pattern recognition. The necessity for statistical inference lies in the randomness that is inherent in the natural world such that received data, or signal, has an additive random component or, as in cognitive radio, must be modeled in a purely random manner. The degree of randomness in the received data usually turns out to be a metric of detection accuracy \cite{kay}.\\
Formally, any real world example of binary decision making problem can be modeled by a binary hypothesis test, where under each hypothesis $\mathcal{H}_j$, a received data $y\in\mathbb{R}$ follows a particular probability distribution $F_j$, $j\in\{0,1\}$. Accordingly, the aim is to find a decision rule $\delta$ which assigns each $y$ either to $\mathcal{H}_0$ or $\mathcal{H}_1$, depending on a certain objective function, which can be, for instance, the error probability. An optimal decision rule $\delta$ minimizes the objective function if $y$ indeed follows $F_j$ under $\mathcal{H}_j$, $j\in\{0,1\}$. However, this condition is too strict and often there are deviations from the model assumptions \cite{levy}.\\
A traditional way of considering the deviations from the nominal distributions is via parametric modeling.
Such parameters could be, for instance, the imprecisely known frequency of a receive signal or the unknown variance of a noise source.
The shape of the probability distributions under each hypothesis is still assumed to be completely known. However, this assumption is  invalid for various applications, for instance sonar, or cognitive radio.
Obviously, in such cases, a parametric model is inappropriate, or if such a model is used, then severe performance degradation results.\\
The shortcomings of parametric modeling necessitate the use of non-parametric approaches. Such approaches are robust, are cheap to implement in practice, make (almost) no assumption on the nominal distributions and their performance is acceptable for a variety of detection problems \cite{nonparametric}. However, compared to an optimum detector, their performance can be far away from being satisfactory, especially if there is some a priori knowledge available about the nominal distributions. Therefore, a more realistic approach should be tunable, depending on how much knowledge is available on the nominal distributions, how much robustness/performance trade-off is allowed as well as how complex the detector structure can be. In this context, robust minimax hypothesis testing falls between parametric and non-parametric detection; it coincides with parametric detection when the robustness parameters are chosen to be zero and it tends to a non-parametric test, a \it sign test \rm, when the robustness parameters are chosen to be at maximum \cite[p. 271]{hube81}.\\
A well known formulation of minimax hypothesis testing is based on building uncertainty sets $\mathcal{F}_j$ for each hypothesis $\mathcal{H}_j$, where $\mathcal{F}_j$ are populated by all probability distributions $G_j$, which are at least $\epsilon_j$ close to the nominal distribution $F_j$ with respect to some well defined distance $D_j$, $j\in\{0,1\}$. The choice of the parameters $\epsilon_0$ and $\epsilon_1$ determines the degree of robustness and they can vary with application. The eventual aim of the designer is to determine a pair of distributions $({G}_0, {G}_1)\in \mathcal{F}_0\times \mathcal{F}_1$, and a decision rule $\delta$, such that a predefined performance measure is met, e.g. the bounded error probability. This type of optimization is called minimax optimization and the distributions solving this problem are called least favorable distributions (LFD)s.\\
In this research field, there are two main approaches: one of which was initiated by Huber \cite{hube65} and the other by Dabak et al. \cite{dabak} and Levy \cite{levy09}. In Huber's work, which was published as early as $1965$, he proposed a robust version of the probability ratio test for $\epsilon-$contamination and total variation classes of distributions. He proved the existence of LFDs for both classes and showed that the resulting robust test was a censored/clipped version of the nominal likelihood ratio test. In a follow up work, he showed that the same conclusion could be made if the $\epsilon-$contamination model was extended to a larger class, which included five different distances as special cases \cite{hube68}. A more general uncertainty class, called $2$-alternating capacities, was proposed later by Huber and Strassen \cite{hube73}. However, it was noted in \cite{levy} that the approach in \cite{hube68} is more suitable for engineering applications due to its simplicity.
In a recent work, an uncertainty class which allows the use of composite distances for robust hypothesis testing has been proposed \cite{gul4}.\\
The robust tests pioneered by Huber were designed for modeling outliers. More recent works by Dabak \cite{dabak}, and later by Levy \cite{levy09}, show that when the distance $D$ is chosen to be the relative entropy, the resulting robust tests are different from Huber's robust test, depending on the choice of the objective function to be minimized. While Dabak's approach minimizes the relative entropy between the LFDs and provides an asymptotically robust test, Levy's robust test minimizes the type I and type II errors and provides a minimax robust test for a single sample. In \cite{levy}, it was noted that the latter two robust tests are more appropriate for modeling errors instead of outliers. Recently, it has been shown that Levy's robust test can be extended to distributed detection problems where the communication from sensors to the fusion center is constrained \cite{gul2}.
It has also been shown that considering the squared Hellinger distance instead of the relative entropy might provide a more flexible design \cite{gul}, \cite{gul3}.\\
In this paper, a robust hypothesis testing scheme based on Kullback-Leibler (KL) divergence is proposed. The problem formulation doesn't make any assumption about the choice of nominal distributions and, thus, it includes \cite{levy09} as a special case. This robust scheme is then extended by use of a composite uncertainty set, which is built with respect to two different distances.
The first distance models the misassumptions on the nominal distributions and the second distance models the outliers.
It is proven that LFDs for this composite model exist and therefore a single test can be robust with respect to both modeling errors as well as outliers. Notice that this composite class is different from the one proposed in \cite{gul4}. Finally, the designed robust tests are extended to fixed sample size and sequential probability ratio tests. It is also shown that the composite model can be extended to robust estimation problems.\\
The organization of this paper is as follows. In the following section, the LFDs and the robust decision rule are derived when the uncertainty sets are closed balls with respect to the KL divergence. The uniqueness and monotonicity properties of the LFDs are further proven. It is shown that the proposed model reduces to the model given in \cite{levy09}, when the nominal distributions are symmetric and the nominal likelihood ratio is monotone. For comparison reasons, the asymptotically robust test \cite{dabak} is presented and the existence of LFDs is proven without considering the geometrical aspects of hypothesis testing. The implications of considering other distances to obtain the LFDs and the robust decision rules are also discussed. In Sec.~\ref{section3}, the composite uncertainty set, which models both the outliers as well as the modeling errors, is introduced. When this model reduces to single robust tests, the density function of the log likelihood ratios is derived for performance evaluation as well
as for asymptotic analysis. Similarly, the equations from which one can uniquely determine the maximum of the robustness parameters, above which a minimax robust test cannot be designed, are also derived. In Sec.~\ref{section4}, the robust methods are extended to fixed sample size tests. Especially, it is shown whether the robust tests maintain their LFD properties. The section is concluded with obtaining the limiting tests and the formulation of asymptotic analysis. In Sec.~\ref{section5}, the sequential probability ratio test is robustified via replacing the nominal likelihood ratios by robust ones. It is investigated whether the LFD properties are preserved in general as well as asymptotically for both robustified sequential tests. In Sec.~\ref{section6}, an extension of the composite uncertainty model for the design of robust estimation problems is briefly introduced. In Sec.~\ref{simulations}, simulation results are presented and finally in Sec.~\ref{conclusions}, the paper is concluded. %\Longlefttarrow: Since $a=b$, we have $l(y_m-y)=l(y_m+y)^{-1}$ for all $y$
\section{Robust detection for modeling errors}
Let $(\Omega,{\mathscr{A}})$ be a measurable space with the probability measures $F_0$, $F_1$, $G_0$ and $G_1$ defined on it, which are absolutely continuous with respect to a dominating measure $\mu$, e.g. $\mu=F_0+F_1+G_0+G_1$. Furthermore, let $f_0$, $f_1$, $g_0$ and $g_1$ be the density functions of the probability measures $F_0$, $F_1$, $G_0$ and $G_1$ with respect to $\mu$, respectively. Define the uncertainty classes
\begin{equation}\label{eq7}
{\cal{G}}_j=\{g_j:D(g_j,f_j)\leq \varepsilon_j\}\quad j\in\{0,1\},
\end{equation}
where every $g_j$ is at least $\varepsilon_j>0$ close to the nominal density $f_j$, with respect to the KL-divergence i.e.
\begin{equation}\label{eq89}
D(g_j,f_j):=\int_{\mathbb{R}}\ln(g_j/f_j)g_j\mbox{d} \mu,\quad j\in\{0,1\}.
\end{equation}
Now, consider the composite hypothesis testing problem
\begin{align}\label{eq1}
\mathcal{H}_0&: Y \sim G_0 \nonumber\\
\mathcal{H}_1&: Y \sim G_1
\end{align}
where $Y$ is a real-valued random variable (r.v.) on $\Omega$. Define a randomized decision rule (function) $\delta\in\Delta$, where $\Delta$ stands for the set of all possible decision rules. Assume for the moment that $\varepsilon_0=\varepsilon_1=0$.
Then, the decision rule
\begin{equation}\label{equation23}
\delta(y) = \begin{cases} 0, &l(y)<\rho  \\ \kappa(y), & l(y)=\rho\\ 1, & l(y)> \rho \end{cases}
\end{equation}
for some threshold $\rho=P(\mathcal{H}_0)/P(\mathcal{H}_1)$ and a function $\kappa:\mathbb{R}\rightarrow [0,1]$, given the likelihood ratio $l(y):=f_1/f_0(y)$,
is optimum in the sense that it minimizes the error probability both in the Bayes and the Neyman-Pearson sense and results in two types of errors: the false alarm probability
\begin{equation}
P_E^0(\delta,f_0)=\int_{\mathbb{R}}\delta f_0\mbox{d} \mu\label{eq3}
\end{equation}
and the miss detection probability
\begin{equation}
P_E^1(\delta,f_1)=\int_{\mathbb{R}}(1-\delta)f_1\mbox{d} \mu.\label{eq4}
\end{equation}
Accordingly, the minimum error probability is given by
\begin{equation}
P_E(\delta,f_0,f_1)=P(\mathcal{H}_0)P_E^0(\delta,f_0)+P(\mathcal{H}_1)P_E^1(\delta,f_1).\label{eq5}
\end{equation}
\begin{rem}\label{rem1}
The sets ${\cal{G}}_0$ and ${\cal{G}}_1$ are not compact in the topology induced by the distance $D$. However, since $D$ is a convex function, ${\cal{G}}_0$ and ${\cal{G}}_1$ are convex sets. As a result ${\mathcal{G}}_0\times {\mathcal{G}}_1$ is also convex.
Given the a priori probabilities $P(\mathcal{H}_0)$ and $P(\mathcal{H}_1)$, the probability of error $P_E(\delta,f_0,f_1)$ is continuous, real-valued and linear, and therefore both convex and concave in all three terms $\delta,f_0,f_1$. In general, the space of all randomized decision rules $\Delta=C^0(\mathbb{R},[0,1])$ is not compact. The compactness condition, however, is not required because the error minimizing decision rules are known to \it exist \rm and to be the likelihood ratio test for all $(g_0,g_1)\in \mathcal{G}_0\times\mathcal{G}_1$. Let $\delta_1$ and $\delta_2$ be two decision functions chosen from $\Delta$. Then, simply for $\delta=\alpha\delta_1+(1-\alpha)\delta_2$, $0 \leq\alpha\leq 1$, we have $\delta\in\Delta$ and therefore $\Delta$ is convex. Note that any finitely supported quantization of $g_0$ and $g_1$ makes both ${\mathcal{G}}_0\times {\mathcal{G}}_1$ and $\Delta$ compact with respect to the standard topology. This is a straightforward result of Heine-Borel theorem \cite{Rudin}.
%Accordingly, one can omit the technicalities for all practical purposes.
\end{rem}
Remark.~\ref{rem1} indicates that Sion's minimax theorem \cite{sion} is applicable,
\begin{equation}\label{eq8}
\sup_{(g_0, g_1)\in {\cal{G}}_0\times{\cal{G}}_1}\min_{\delta\in\Delta}P_E(\delta,g_0,g_1)=\min_{\delta\in\Delta}\sup_{(g_0, g_1)\in {\cal{G}}_0\times{\cal{G}}_1}P_E(\delta,g_0,g_1).
\end{equation}
Hence, $P_E(\delta,g_0,g_1)$ possesses a saddle-value on $\Delta\times ({\cal{G}}_0\times{{\cal{G}}_1})$ with the least favorable densities $(\hat{g}_0,\hat{g}_1)\in{\cal{G}}_0\times{{\cal{G}}_1}$ and the robust decision rule $\hat{\delta}\in\Delta$, i.e., $\{\hat{\delta},(\hat{g}_0,\hat{g}_1)\}$, resulting from Eq.~\eqref{eq8}. Consequently
\begin{equation}\label{eq9}
P_E(\delta,\hat{g}_0,\hat{g}_1)\geq P_E(\hat{\delta},\hat{g}_0,\hat{g}_1)\geq P_E(\hat{\delta},g_0,g_1).
\end{equation}
Since $P_E$ is distinct in $g_0$ and $g_1$, it follows that
\begin{align}\label{equation12}
&P_E^0(\hat{\delta},g_0)\leq P_E^0(\hat{\delta},\hat{g}_0),\nonumber\\
&P_E^1(\hat{\delta},g_1)\leq P_E^1(\hat{\delta},\hat{g}_1).
\end{align}
\begin{thm}\label{thm10}
Let $l_l$ and $l_u$ be two real numbers with $0<l_l\leq 1\leq l_u<\infty$. Then, for
\begin{align}\label{eq22aak}
z(l_l,l_u)=&\int_{l<l_l}f_1\mbox{d}\mu+\int_{l_l<l<l_u}\left(l_l^{-1}l\right)^{\frac{\ln\left(k(l_l,l_u)\right)}{\ln\left(l_u/l_l\right)}}f_1\mbox{d}\mu\nonumber\\
+&k(l_l,l_u)\int_{l>l_u}f_1\mbox{d}\mu.
\end{align}
and
\begin{equation}\label{eq22aab}
k(l_l,l_u)=\frac{\int_{l<l_l}(l_l-l) f_0\mbox{d}\mu}{ \int_{l>l_u}(l-l_u)f_0\mbox{d}\mu}
\end{equation}
the least favorable densities
\begin{align}\label{equation24}
\hat{g}_0(y) &= \begin{cases} \frac{l_l}{z(l_l,l_u)}f_0(y), &l(y)<l_l  \\
\frac{1}{z(l_l,l_u)}\left(l_l^{-1}l(y)\right)^{\frac{\ln\left(k(l_l,l_u)\right)}{\ln\left(l_u/l_l\right)}}f_1(y), & l_l \leq l(y)\leq l_u \\
\frac{l_u k(l_l,l_u)}{z(l_l,l_u)}f_0(y), & l(y)>l_u \end{cases}\nonumber\\
\hat{g}_1(y) &= \begin{cases} \frac{1}{z(l_l,l_u)}f_1(y), &l(y)<l_l  \\
\frac{1}{z(l_l,l_u)}\left(l_l^{-1}l(y)\right)^{\frac{\ln\left(k(l_l,l_u)\right)}{\ln\left(l_u/l_l\right)}}f_1(y), & l_l \leq l(y)\leq l_u \\
\frac{k(l_l,l_u)}{z(l_l,l_u)}f_1(y), & l(y)>l_u \end{cases}
\end{align}
and the decision rule
\begin{equation}\label{equation25}
\hat{\delta}(y) = \begin{cases} 0, &l(y)<l_l \\
\frac{\ln(l(y)/l_l)}{\ln(l_u/l_l)} & l_l \leq l(y)\leq l_u \\
1, & l(y)>l_u \end{cases}
\end{equation}
which is equivalent to the robust likelihood ratio
\begin{equation}\label{equation26}
\hat{l}(y)=l_l^{\hat{\delta}(y)-1}l_u^{-\hat{\delta}(y)}l(y)
\end{equation}
form the saddle value condition for Eq.~\eqref{eq8}. Furthermore the parameters $l_l$ and $l_u$ can be determined by solving
\begin{align}\label{equation29}
-&\ln(z(l_l,l_u))+\frac{1}{z(l_l,l_u)}\Bigg[l_l\ln(l_l)\int_{l<l_l}f_0\mbox{d}\mu\nonumber\\
+&\int_{l_l<l<l_u}\left(l_l^{-1}l\right)^{\frac{\ln\left(k(l_l,l_u)\right)}{\ln\left(l_u/l_l\right)}}\ln\left(l\cdot\left(l_l^{-1}l\right)^{\frac{\ln\left(k(l_l,l_u)\right)}{\ln\left(l_u/l_l\right)}}\right)f_1\mbox{d}\mu\nonumber\\
+&k(l_l,l_u)l_u\ln(k(l_l,l_u)l_u)\int_{l>l_u}f_0\mbox{d}\mu\Bigg]=\epsilon_0
\end{align}
and
\begin{align}\label{equation30}
-&\ln(z(l_l,l_u))+\frac{1}{z(l_l,l_u)}\Bigg[\nonumber\\
&\int_{l_l<l<l_u}\left(l_l^{-1}l\right)^{\frac{\ln\left(k(l_l,l_u)\right)}{\ln\left(l_u/l_l\right)}}\ln\left(\left(l_l^{-1}l\right)^{\frac{\ln\left(k(l_l,l_u)\right)}{\ln\left(l_u/l_l\right)}}\right)f_1\mbox{d}\mu\nonumber\\
+&k(l_l,l_u)\ln(k(l_l,l_u))\int_{l>l_u}f_1\mbox{d}\mu\Bigg]=\epsilon_1.
\end{align}
\end{thm}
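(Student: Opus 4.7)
The plan is to verify both halves of the saddle condition in Eq.~\eqref{eq9} directly. Since $P_E$ is linear in $(g_0,g_1)$ and decomposes via Eq.~\eqref{eq5}, the problem splits into two independent parts: (a) show that $\hat\delta$ is Bayes-optimal for the pair $(\hat g_0,\hat g_1)$; and (b), invoking Eq.~\eqref{equation12}, show that $\hat g_0$ maximizes $\int\hat\delta\,g_0\,\mbox{d}\mu$ over $\mathcal{G}_0$ while $\hat g_1$ maximizes $\int(1-\hat\delta)\,g_1\,\mbox{d}\mu$ over $\mathcal{G}_1$.

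For (a), I would compute the robust likelihood ratio $\hat l:=\hat g_1/\hat g_0$ piecewise from Eq.~\eqref{equation24}: a direct substitution gives $\hat l=l/l_l<1$ on $\{l<l_l\}$, $\hat l\equiv 1$ on $\{l_l\le l\le l_u\}$, and $\hat l=l/l_u>1$ on $\{l>l_u\}$, which coincides with Eq.~\eqref{equation26}. Consequently the Bayes test at threshold $\rho=1$ agrees with $\hat\delta$ on the outer regions, while on the flat middle region any measurable $\delta$ is Bayes-optimal, so the log-linear interpolation in Eq.~\eqref{equation25} is admissible.

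For (b), I would attack each constrained maximization by a Lagrangian. Consider
\begin{equation*}
\mathcal{L}(g_0)=\int\hat\delta\, g_0\,\mbox{d}\mu-\lambda\left(\int g_0\,\mbox{d}\mu-1\right)-\nu\left(\int g_0\ln(g_0/f_0)\,\mbox{d}\mu-\varepsilon_0\right).
\end{equation*}
Linearity of the objective and strict convexity of $D$ ensure that any KKT point with $\nu>0$ is the unique global maximizer. Stationarity yields $g_0=Cf_0\exp(\hat\delta/\nu)$ for some $C>0$. Matching this against Eq.~\eqref{equation24} region by region: on $\{l<l_l\}$ the condition $\hat\delta\equiv 0$ fixes $C=l_l/z$; on $\{l>l_u\}$ the condition $\hat\delta\equiv 1$ fixes $e^{1/\nu}=l_uk/l_l$; and on the middle set the log-linear form of $\hat\delta$ gives $\hat\delta/\nu=[\ln(l/l_l)/\ln(l_u/l_l)]\ln(l_uk/l_l)$, so $Cf_0\exp(\hat\delta/\nu)=(l_l/z)(l/l_l)^{1+\ln k/\ln(l_u/l_l)}f_0$, which, after substituting $f_1=lf_0$, reproduces the middle branch of $\hat g_0$ exactly. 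An analogous derivation for $(1-\hat\delta)$ and $g_1$ yields $\hat g_1$. The requirement $\int\hat g_j\,\mbox{d}\mu=1$ pins down $z$ as in Eq.~\eqref{eq22aak}, and enforcing both normalizations simultaneously forces the balance condition Eq.~\eqref{eq22aab} for $k$.

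Finally, substituting the explicit $\hat g_j$ into the active constraints $D(\hat g_j,f_j)=\varepsilon_j$ and simplifying region by region produces Eqs.~\eqref{equation29} and~\eqref{equation30}, which implicitly determine $l_l$ and $l_u$. The main obstacle is the bookkeeping in part (b): one must justify that the KL constraints bind (non-degeneracy $l_l<1<l_u$), confirm that $\nu>0$, and verify that the log-linear form of $\hat\delta$ on the transition region is precisely what makes $Cf_0\exp(\hat\delta/\nu)$ reproduce the exponent $\ln k/\ln(l_u/l_l)$ appearing in Eq.~\eqref{equation24}. Once this matching is in place, strict convexity of $D$ delivers uniqueness and hence the LFD property, completing the saddle-point verification.
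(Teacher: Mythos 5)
Your proposal is correct and follows essentially the same route as the paper: both split the saddle condition into (i) Bayes optimality of $\hat{\delta}$ for the pair $(\hat{g}_0,\hat{g}_1)$ at threshold $\rho=1$ and (ii) a Lagrangian/KKT maximization over each KL ball whose stationarity condition yields the exponential tilt $g_j\propto f_j e^{\pm\hat{\delta}/\nu}$, with $z$ and $k$ pinned down by the two normalizations and $(l_l,l_u)$ by the binding constraints \eqref{equation29}--\eqref{equation30}. The only difference is presentational --- you verify the stated densities against the stationarity form, whereas the paper derives them first in terms of the constants $c_1,\dots,c_4$ and then reparameterizes by $l_l=c_1/c_3$, $l_u=c_2/c_4$.
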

\begin{proof}
The solution of the minimax non-linear optimization problem
 \begin{align}\label{equation31}
\max_{g_j\in{\cal{G}}_j} \quad & P_E^{j}(\delta,g_j),\,\, j\in\{0,1\}\nonumber\\
\mbox{s.t.} \quad & g_j>0\nonumber\\
& \Upsilon(g_j)=\int_{\mathbb{R}}g_j\mbox{d}\mu=1,\,\, j\in\{0,1\}\nonumber\\
\min_{\delta\in\Delta} \quad & P_E(\delta,\hat{g}_0,\hat{g}_1),
\end{align}
directly leads to the assertion. First, the maximization stage is solved by considering the Karush-Kuhn-Tucker (KKT) multipliers. The subsequent minimization and optimization stages complete the proof.

\subsection{Maximization stage}
Consider the Lagrangian
\begin{align}\label{eq14}
L^j(g_j,\lambda_j,\mu_j)&=P_E^j(\hat{\delta},g_j)+\lambda_j(\varepsilon_j-D(g_j|f_j))\nonumber\\
&+\mu_j(1-\Upsilon(g_j))),\,\,\,j\in\{0,1\}.
\end{align}
where $\mu_j$, $\lambda_j$ are the KKT multipliers which are imposed to satisfy the constraints. Since $L^j$ is concave in $g_j$, a globally optimum solution is guaranteed if the necessary KKT conditions are met \cite{ekeland}. Writing \eqref{eq14} explicitly for $P_E^0$, it follows that
\begin{equation}\label{eq15}
L^0(g_0,\lambda_0,\mu_0)=\int_{\mathbb{R}}\Big[\delta g_0+\lambda_0\epsilon_0-\lambda_0\ln\frac{g_0}{f_0} g_0+\mu_0-\mu_0g_0\Big]\mbox{d}\mu
\end{equation}
Imposing the first KKT condition (stationarity), through taking G$\hat{\mbox{a}}$teaux's derivative of Eq.~\eqref{eq15} in the direction of $\psi$, yields
\begin{equation}\label{eq16}
\int_{\mathbb{R}}[\delta-\lambda_0\ln\frac{g_0}{f_0}-\lambda_0-\mu_0]\psi\mbox{d}\mu,
\end{equation}
which implies
\begin{equation}\label{eq17}
\delta-\lambda_0\ln\frac{g_0}{f_0}-\lambda_0-\mu_0=0,
\end{equation}
since $\psi$ is an arbitrary function. Hence, $\hat{g}_0$, and in a similar way $\hat{g}_1$ by solving \eqref{eq14} for $P_E^1$, can be obtained. The results are
\begin{equation}\label{eq18}
\hat{g}_0=c_1 \left(\frac{c_2}{c_1}\right)^\delta f_0,\,\,\quad \hat{g}_1=c_3 \left(\frac{c_4}{c_3}\right)^\delta f_1
\end{equation}
where $c_1=\exp({-\frac{\lambda_0+\mu_0}{\lambda_0}})$, $c_2=\exp({-\frac{-1+\lambda_0+\mu_0}{\lambda_0}})$, $c_3=\exp({-\frac{-1+\lambda_1+\mu_1}{\lambda_1}})$ and $c_4=\exp({-\frac{\lambda_1+\mu_1}{\lambda_1}})$.
This leads to the robust likelihood ratio
\begin{equation}\label{eq19}
\hat{l}=\left(\frac{c_3}{c_1}\right) e^{-\delta\ln\left(\frac{c_2c_3}{c_1c_4}\right)} l.
\end{equation} %l_0^{\hat{\delta}(y)+\mu_0}l_1^{1-\hat{\delta}(y)-\mu_1}l(y)
%
%Let
%\begin{equation}\nonumber
%c_1=e^{-\frac{\lambda_0+\mu_0}{\lambda_0}}, \,\,c_2=e^{-\frac{-1+\lambda_0+\mu_0}{\lambda_0}}
%\end{equation}
%\begin{equation}\nonumber
%c_3=e^{-\frac{-1+\lambda_1+\mu_1}{\lambda_1}}, \,\,c_4=e^{-\frac{\lambda_1+\mu_1}{\lambda_1}}\,\,c_l=e^{-\frac{-1+\lambda_0+\lambda_1+\mu_0+\mu_1}{\lambda_0+\lambda_1}}
%\end{equation}
\subsection{Minimization stage}
The decision rule $\delta$, which minimizes $P_E$ for any $(g_0,g_1)\in\mathcal{G}_0\times\mathcal{G}_1$, is known to be the likelihood ratio test \eqref{equation23}.
Solving $\hat{l}=1$ from Eq.~\eqref{eq19} and rewriting Eq.~\eqref{equation23} with $\rho=1$ for $\hat{l}$ yields
%Applying this to \eqref{eq19} for $\rho=1$, we get
\begin{equation}\label{eq211}
\hat{\delta}=\begin{cases} 0, &\hat{l}<1  \\
\frac{\ln\left(\frac{c_3}{c_1}l\right)}{\ln\left(\frac{c_2c_3}{c_1c_4}\right)}, & \hat{l}=1\\
 1, & \hat{l}> 1 \end{cases}
\end{equation}
Applying \eqref{eq211} to \eqref{eq18}, the least favorable distributions with respect to their density functions are obtained as
\begin{equation}\label{equation39}
\hspace{-2.5mm}\hat{g}_0 = \begin{cases} c_1 f_0, &\hat{l}<1 \\
c_0 l^{\frac{\ln(c_2/c_1)}{\ln\left(\frac{c_2 c_3}{c_1c_4}\right)}}f_0,& \hat{l}=1 \\
c_2 f_0, & \hat{l}> 1 \end{cases},\,\,
\hat{g}_1 = \begin{cases} c_3 f_1, &\hat{l}<1 \\
c_0 l^{\frac{\ln(c_4/c_3)}{\ln\left(\frac{c_2 c_3}{c_1c_4}\right)}}f_1,& \hat{l}=1 \\
c_4 f_1, & \hat{l}> 1 \end{cases}
\end{equation}
where $c_0=\exp\left({\frac{\ln c_2\ln c_3-\ln c_1 \ln c_4}{\ln(c_2 c_3)-\ln({c_1 c_4})}}\right)$. The unknown parameters can be obtained by imposing the constraints, or equivalently by solving the non-linear equations
\begin{align}\label{eq27}
&c_1\int_{\hat{l}<1} f_0\mbox{d}\mu+\int_{\hat{l}=1}\Phi\mbox{d}\mu+c_2\int_{\hat{l}> 1} f_0 \mbox{d}\mu=1\nonumber\\
&c_3\int_{\hat{l}<1} f_1\mbox{d}\mu+\int_{\hat{l}=1}\Phi\mbox{d}\mu+c_4\int_{\hat{l}> 1} f_1 \mbox{d}\mu=1\nonumber\\
&c_1\ln c_1\int_{\hat{l}<1} f_0\mbox{d}\mu+\int_{\hat{l}=1} \Phi \ln\frac{\Phi}{f_0} \mbox{d}\mu+c_2\ln c_2\int_{\hat{l}> 1} f_0\mbox{d}\mu=\varepsilon_0\nonumber\\
&c_3\ln c_3\int_{\hat{l}<1} f_1\mbox{d}\mu+\int_{\hat{l}=1} \Phi \ln\frac{\Phi}{f_1} \mbox{d}\mu+c_4\ln c_4\int_{\hat{l}> 1} f_1\mbox{d}\mu=\varepsilon_1
\end{align}
where $\Phi=c_0 \exp \left({\frac{(\ln c_4-\ln c_3)\ln l}{\ln({c_2 c_3})-\ln{(c_1c_4)}}}\right) f_1$. Note that the first two equations are required to make sure that $\hat{g}_0$ and $\hat{g}_1$ are density functions, i.e., they integrate to one and the other two equations are required to guarantee that $\hat{g}_0\in\mathcal{G}_0$ and $\hat{g}_1\in\mathcal{G}_1$.
\subsection{Optimization stage}
To complete the proof it is necessary to explain how $\hat{\delta}$, $\hat{g}_0$, $\hat{g}_1$ and the nonlinear equations can be represented in terms of $l_l$ and $l_u$. Let $l_l=c_1/c_3$ and $l_u=c_2/c_4$. Then, considering $\hat{l}=\hat{g}_1/\hat{g}_0$ from \eqref{equation39}, it follows that
 \begin{align}
&{\cal{R}}_1:=\{y:l(y)<{l_l}\} \equiv \{y:\hat{l}(y)<1\}\nonumber\\
&{\cal{R}}_2:=\{y:l_l\leq l(y)\leq l_u\} \equiv \{y:\hat{l}(y)=1\}\nonumber\\
&{\cal{R}}_3:=\{y:l(y)>{l_u}\} \equiv \{y:\hat{l}(y)>1\}\nonumber
\end{align}
Rewriting the integrals with the new limits (over $(\mathcal{R}_1,\mathcal{R}_2,\mathcal{R}_3)$), using the substitutions $c_1:=c_3 l_l$ and $c_2:=c_4 l_u$, dividing both sides of the first two equations in \eqref{eq27} by $c_3$, and equating them to each other via $1/c_3$ results in $c_4=k(l_l,l_u)c_3$. Accordingly, it follows that
\begin{equation}\label{eqsth}
\Phi=c_3 \left(l_l^{-1}l\right)^{\frac{\ln\left(k(l_l,l_u)\right)}{\ln\left(l_u/l_l\right)}}f_1.
\end{equation}
This allows the second equation in \eqref{eq27} to be written as $c_3:=1/z(l_l,l_u)$. Now, all constants $c_1,c_2,c_3$ and $c_4$ as well as $\Phi$ are parameterized by $l_l$ and $l_u$. Thus, Eq.~\eqref{eq19} can be rewritten as Eq.~\eqref{equation26} and $\hat{\delta},\hat{g}_0,\hat{g}_1$ as given in Theorem~\ref{thm10}. Finally, the last two equations of \eqref{eq27} reduce to \eqref{equation29} and \eqref{equation30}. This completes the proof.
\end{proof}
\subsection{Monotonicity of the relative entropy}
In the sequel it is shown that ordering in likelihood ratios implies ordering in KL-divergence. This explains the monotonic behavior of LFDs for increasing robustness parameters given that $l$ is monotone. The theory that will be presented will also be used in the next sections.
\begin{prop}\label{proppp2}
Let $F$ and $G$ be two probability measures on $(\Omega,{\mathscr{A}})$ with $\partial F/\partial G$ a non-decreasing function. Then, $G(y)\geq F(y)$ for all $y\in\mathbb{R}$.
\end{prop}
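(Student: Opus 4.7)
The plan is to leverage the normalization $F(\Omega) = G(\Omega) = 1$ together with the monotonicity of $r := \partial F/\partial G$ to locate a single crossover point $y_0$ at which $r$ transitions from below $1$ to above $1$, and then to bound each CDF by the trivial integral $\int dG$ on either side of $y_0$.

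First I would set $y_0 := \inf\{y : r(y) \ge 1\}$. Monotonicity of $r$ guarantees $r(y) \le 1$ for $y < y_0$ and $r(y) \ge 1$ for $y > y_0$. Such a threshold must exist (possibly at $\pm\infty$, which causes no trouble) because
\begin{equation*}
\int_{\Omega}(r-1)\, dG \;=\; F(\Omega) - G(\Omega) \;=\; 0,
\end{equation*}
so a non-decreasing $r-1$ cannot keep a single non-zero sign $G$-almost everywhere.

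For $y \le y_0$ I would bound the CDF directly,
\begin{equation*}
F(y) \;=\; \int_{-\infty}^{y} r\, dG \;\le\; \int_{-\infty}^{y} dG \;=\; G(y),
\end{equation*}
and for $y \ge y_0$ I would instead work with the upper tails,
\begin{equation*}
1 - F(y) \;=\; \int_{y}^{\infty} r\, dG \;\ge\; \int_{y}^{\infty} dG \;=\; 1 - G(y),
\end{equation*}
which rearranges to $G(y) \ge F(y)$. Combining the two cases covers every $y \in \mathbb{R}$ and gives the conclusion.

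The only real obstacle is the measure-theoretic one that a Radon-Nikodym derivative is defined only up to $G$-null sets. I would handle it by fixing once and for all a non-decreasing representative of $r$ (the hypothesis tacitly assumes such a version exists), which makes $y_0$ well defined and leaves every integral above unchanged under modification on $G$-null sets. If $G$ is not absolutely continuous with respect to $F$, then $\{r = 0\}$ may carry positive $G$-mass, but this only sharpens the inequality on $(-\infty, y_0]$ and is therefore harmless.
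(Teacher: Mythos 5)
Your proof is correct, but it takes a genuinely different route from the paper. The paper disposes of the claim in one line by invoking the FKG (Chebyshev association) inequality $\mathrm{E}\,\phi(X)\psi(X)\ge \mathrm{E}\,\phi(X)\,\mathrm{E}\,\psi(X)$ for $X\sim G$, with $\phi=\mathbf{1}_{[c,\infty)}$ and $\psi=\partial F/\partial G$, which gives $F([c,\infty))\ge G([c,\infty))\cdot 1$ and hence $G(y)\ge F(y)$ directly. You instead exploit the normalization $\int (r-1)\,\mbox{d}G=0$ to locate the single crossing $y_{0}$ of the non-decreasing density $r$ with the level $1$, and then bound $F$ by $G$ on the lower range and the complementary tails on the upper range. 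Your argument is self-contained and elementary --- it is essentially a direct proof of the one-dimensional instance of FKG that the paper quotes --- at the cost of a little bookkeeping: the degenerate case $r=1$ $G$-a.e.\ (where $F=G$ and the claim is trivial), the possibility $y_{0}=\pm\infty$ (only one of your two cases is then needed), and the boundary point $y=y_{0}$ itself, where if $G$ has an atom at $y_{0}$ with $r(y_{0})>1$ your first estimate need not apply but your tail estimate still does, so every $y$ is covered. You also correctly flag the measure-theoretic caveat of choosing a monotone version of the Radon--Nikodym derivative, which the paper leaves implicit. Both proofs are valid; the paper's buys brevity by citing a known inequality, yours buys transparency and avoids any external reference.
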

\begin{proof}
Due to a special case of the Fortuin-Kasteleyn-Ginibre (FKG) inequality, for any random variable $X$ and any two positive increasing functions $\phi$, $\psi$ we have $\mathrm{E} \, \phi(X) \psi(X) \ge \mathrm{E} \, \phi(X) \, \mathrm{E} \, \psi(X)$.
Applying this to $X$ distributed according to $G$ and the functions $\phi := \mathbf{1}_{[c,+\infty)}$, where $\mathbf{1}_{\{\cdot\}}$ is the indicator function, and $\psi := dF/dG$, we get $G(y)\geq F(y)$ for all $y\in\mathbb{R}$.
\end{proof}

\begin{rem}\label{rem2}
Let $X$ and $Y$ be two random variables defined on the same probability space $(\Omega,{\mathscr{A}})$, having continuous cumulative distribution functions $F$ and $G$, respectively. $X$ is called stochastically larger than $Y$, i.e., $X\succ_{ST}Y$, if $G(y)\geq F(y)$ for all $y$.
\end{rem}

\begin{kor}\label{korr1}
For every non-decreasing function $\phi$, $X\succ_{ST}Y\Longleftrightarrow\phi(X)\succ_{ST}\phi(Y)$, hence $X\succ_{ST}Y \Longleftrightarrow \mathrm{E}[\phi(X)]\geq \mathrm{E}[\phi(Y)]$
\end{kor}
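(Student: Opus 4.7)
The plan is to establish the chain of equivalences in the order the statement presents them. First I would prove $X \succ_{ST} Y \iff \phi(X) \succ_{ST} \phi(Y)$ for every non-decreasing $\phi$, and then deduce the expectation characterization from it, which is exactly what the word ``hence'' suggests.

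For the forward direction of the first equivalence, the key observation is that the super-level sets of a non-decreasing function are half-lines: for every $t \in \mathbb{R}$, the set $\{x : \phi(x) > t\}$ is either empty, all of $\mathbb{R}$, or of the form $(c_t, \infty)$ or $[c_t, \infty)$, depending on whether the level $t$ is attained. Translating the definition of Remark~\ref{rem2} into tail form gives $X \succ_{ST} Y \iff P(X > s) \geq P(Y > s)$ for all $s$, which via a left-approach limit also yields $P(X \geq s) \geq P(Y \geq s)$. Applying this at $s = c_t$ yields $P(\phi(X) > t) \geq P(\phi(Y) > t)$, hence $\phi(X) \succ_{ST} \phi(Y)$. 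The converse is immediate by choosing $\phi$ to be the identity.

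For the expectation equivalence, the forward direction follows from the layer-cake formula. Shifting $\phi$ by an additive constant (which changes neither side of the desired inequality) one may assume $\phi \geq 0$, so that $\mathrm{E}[\phi(Z)] = \int_0^\infty P(\phi(Z) > t) \mbox{d}t$ for $Z \in \{X,Y\}$. Because the integrand at $X$ pointwise dominates that at $Y$ by the first equivalence already proven, integrating yields $\mathrm{E}[\phi(X)] \geq \mathrm{E}[\phi(Y)]$. For a general (possibly unbounded, signed) non-decreasing $\phi$ one splits $\phi = \phi^+ - \phi^-$ and handles each part separately, or equivalently truncates to bounded $\phi$ and extends by monotone convergence. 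The reverse direction comes from specializing to the indicator $\phi = \mathbf{1}_{(c,\infty)}$, which is non-decreasing and recovers precisely the tail inequality that defines $X \succ_{ST} Y$.

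The main bookkeeping issue I anticipate lies in the case where $\phi$ is non-strictly increasing, so that a level $t$ is attained on a nondegenerate interval and the strict super-level set becomes a closed half-line $[c_t, \infty)$. This forces the small conversion from $P(X > s) \geq P(Y > s)$ to the closed counterpart $P(X \geq s) \geq P(Y \geq s)$, which is handled by approximating $\{X \geq s\} = \bigcap_{s' < s} \{X > s'\}$ from the left and using monotone continuity of probability. Once this conversion is cleanly in place, both the half-line argument and the layer-cake appeal are essentially mechanical.
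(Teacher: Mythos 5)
Your proof is correct. The paper does not supply its own argument for this corollary---it simply defers to the reference \cite{Wolfstetter96}---and what you have written is precisely the standard proof found there: super-level sets of a non-decreasing $\phi$ are half-lines (with the open/closed distinction handled by left limits), the expectation inequality follows by the layer-cake representation, and the converse directions come from specializing $\phi$ to the identity and to indicators of half-lines, so there is nothing in the paper to diverge from.
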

Proof of Corollary \ref{korr1} is simple and can be found for example in \cite{Wolfstetter96}.
\begin{thm}\label{thm2}
Let $X_0$, $Y_0$, $X_1$, and $Y_1$ be four continuous random variables defined on $(\Omega,{\mathscr{A}})$ and having distinct densities $f_0$, $g_0$, $f_1$, and $g_1$, respectively, with $f_1/g_1$, $g_1/g_0$, and $g_0/f_0$, all being non-decreasing functions. Then,
\begin{equation}\label{eq45}
D(f_1,f_0)>D(g_1,g_0)\quad\mbox{and}\quad D(f_0,f_1)>D(g_0,g_1)
\end{equation}
\end{thm}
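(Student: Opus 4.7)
The plan is to convert each monotone-ratio hypothesis into a stochastic ordering via Proposition~\ref{proppp2} and Remark~\ref{rem2}, and then to split the difference of the two KL divergences by telescoping the log-likelihood ratio. Concretely, the three hypotheses translate into $X_1\succ_{ST}Y_1$, $Y_1\succ_{ST}Y_0$, and $Y_0\succ_{ST}X_0$, so by transitivity the full chain $X_1\succ_{ST}Y_1\succ_{ST}Y_0\succ_{ST}X_0$ is available. The key algebraic identity is
$$\ln\frac{f_1}{f_0}=\ln\frac{f_1}{g_1}+\ln\frac{g_1}{g_0}+\ln\frac{g_0}{f_0},$$
and (for the second inequality) its mirror image with the two indices swapped.

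For the first inequality, taking $\mathrm{E}_{X_1}$ of the identity above and subtracting $D(g_1,g_0)=\mathrm{E}_{Y_1}[\ln(g_1/g_0)]$ regroups the difference as
\begin{align*}
D(f_1,f_0)-D(g_1,g_0)
&=D(f_1,g_1)\\
&\quad+\bigl(\mathrm{E}_{X_1}[\ln(g_1/g_0)]-\mathrm{E}_{Y_1}[\ln(g_1/g_0)]\bigr)\\
&\quad+\mathrm{E}_{X_1}[\ln(g_0/f_0)].
\end{align*}
The first summand is a KL divergence, and it is strictly positive since $f_1\neq g_1$. The second is non-negative by Corollary~\ref{korr1}, applied to the non-decreasing $\ln(g_1/g_0)$ and to $X_1\succ_{ST}Y_1$. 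For the third I would again invoke Corollary~\ref{korr1}, but now with the ordering $X_1\succ_{ST}Y_0$ (obtained by transitivity) and the non-decreasing $\ln(g_0/f_0)$, to get
$$\mathrm{E}_{X_1}[\ln(g_0/f_0)]\ \geq\ \mathrm{E}_{Y_0}[\ln(g_0/f_0)]\ =\ D(g_0,f_0)\ \geq\ 0.$$
Summing the three bounds yields the first strict inequality.

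The inequality $D(f_0,f_1)>D(g_0,g_1)$ is handled by the mirror decomposition $\ln(f_0/f_1)=\ln(f_0/g_0)+\ln(g_0/g_1)+\ln(g_1/f_1)$, taken in expectation under $X_0$. It produces $D(f_0,g_0)>0$, a non-negative contribution from $\ln(g_0/g_1)$ (non-increasing) compared via $Y_0\succ_{ST}X_0$, and a term $\mathrm{E}_{X_0}[\ln(g_1/f_1)]$ that is bounded below by $\mathrm{E}_{Y_1}[\ln(g_1/f_1)]=D(g_1,f_1)\geq 0$ using $Y_1\succ_{ST}X_0$ together with the fact that $\ln(g_1/f_1)$ is non-increasing. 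The only genuinely delicate point throughout is selecting, for each residual expectation, the correct element of the chain to pair it with: the naive estimate $\mathrm{E}_{X_1}[\ln(g_0/f_0)]\geq \mathrm{E}_{X_0}[\ln(g_0/f_0)]=-D(f_0,g_0)$ has the wrong sign, whereas pairing $X_1$ with the intermediate $Y_0$ (respectively $X_0$ with $Y_1$) converts the residual expectation into a KL divergence and closes the argument.
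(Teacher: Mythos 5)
Your proof is correct and takes essentially the same route as the paper's: both arguments turn the monotone-ratio hypotheses into the stochastic ordering chain via Proposition~\ref{proppp2} and Corollary~\ref{korr1}, telescope the log-likelihood ratio, and bound each residual expectation by pairing it with the right member of the chain, arriving at the same bound $D(f_0,f_1)\geq D(f_0,g_0)+D(g_0,g_1)+D(g_1,f_1)$. The only difference is organizational — the paper reaches this in two steps through the intermediate mixed divergence $D(g_0,f_1)$, while you perform the three-term split in a single pass (and write out both inequalities explicitly, where the paper omits the first as ``similar'').
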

\begin{proof}
By Prop.~\ref{proppp2} and Remark~\ref{rem1}, we have $Y_1\succ_{ST}Y_0$ and $Y_0\succ_{ST}X_0$ since $g_1/g_0$, and $g_0/f_0$ are non-decreasing functions. Increasing $f_1/g_1$ and $g_1/g_0$ implies increasing $f_1/g_0$ and using Corollary \ref{korr1}, and denoting $\phi(Y)=\ln g_0/f_1(Y)$, we have $\mathrm{E}_{X_0}[\phi(Y)]\geq \mathrm{E}_{Y_0}[\phi(Y)]$. Hence, the identity $D(f_0,f_1)=\mathrm{E}_{X_0}[\phi(Y)]+D(f_0,g_0)$, together with $\mathrm{E}_{X_0}[\phi(Y)]\geq \mathrm{E}_{Y_0}[\phi(Y)]$, results in $D(f_0,f_1)\geq D(f_0,g_0)+D(g_0,f_1)\Longrightarrow D(f_0,f_1)>D(g_0,f_1)$. It is well known that
\begin{equation}\label{eq48}
\mathrm{E}_{Y_1}\left[\ln\frac{g_1(Y)}{f_1(Y)}\right]>0.
\end{equation}
Again, using the Corollary \ref{korr1}, and denoting $\psi(Y)=\ln f_1/g_1(Y)$, we have $\mathrm{E}_{Y_1}[\psi(Y)]\geq \mathrm{E}_{Y_0}[\psi(Y)]$, which implies $-\mathrm{E}_{Y_0}[\psi(Y)]>0$ in comparison with $-\mathrm{E}_{Y_1}[\psi(Y)]>0$.
We conclude that $D(f_0,f_1)>D(g_0,f_1)$ together with $-\mathrm{E}_{Y_0}[\psi(Y)]>0$ implies $D(f_0,f_1)>D(g_0,g_1)$. The proof for the case $D(f_1,f_0)>D(g_1,g_0)$ is similar and is omitted.
\end{proof}
Now, let the likelihood ratio with respect to the nominal distributions, $f_1/f_0$, be monotonically increasing. From Eq.~\eqref{equation24} it follows that $g_1/g_0$ and $f_1/g_1$ are all non-decreasing functions. Theorem~\ref{thm2} indicates that $D(f_i,f_{1-j})>D(g_i,g_{1-j})$ for $j\in\{0,1\}$, and this implies that $g_0$ and $g_1$ move towards each other monotonically.

\subsection{Symmetric density functions}
Depending on the extra constraints imposed on the nominal probability distributions, the equations that need to be
solved to determine the parameters of the LFDs can be simplified. Assume $f_0(y)=f_1(-y)$ for all $y\in\mathbb{R}$ and $\varepsilon=\varepsilon_0=\varepsilon_1$. This implies $l_u=1/l_l$. With this assumption Eq.~\eqref{equation29} and Eq.~\eqref{equation30} reduce to
\begin{align}\label{eq30}
-&\ln(z(l_u))-\frac{1}{z(l_u)}\Bigg(l_u^{-1/2}\ln l_u\int_{1<l<l_u}\sqrt{f_0f_1}\mbox{d}\mu\nonumber\\
+&l_u^{-1}\ln l_u\int_{l>l_u}f_1 \mbox{d}\mu\Bigg)=\varepsilon
\end{align}
where
\begin{align}\label{eq31}
z(l_u)&=\int_{l<1/l_u}f_1 \mbox{d}\mu+2 l_u^{-1/2}\int_{1<l<l_u}\sqrt{f_0f_1}\mbox{d}\mu\nonumber\\
&+l_u^{-1}\int_{l>l_u}f_1 \mbox{d}\mu.
\end{align}
The symmetry condition also implies $l(y)=1/l(-y)$ and $\hat{l}(y)=1/\hat{l}(-y)$ for all $y$. Accordingly, it follows that $k(l_l,l_u)={l_u}^{-1}$  and
$\hat{g}_0(y)=\hat{g}_1(-y)\forall y$. Notice that if $l$ is monotone, Eq.~\eqref{eq30} can be redefined in terms of $y_u$ by $l_u=l(y_u)$, $\{l>l_u\}\equiv(y_u,\infty)$ and due to symmetry $1/l_u=l(-y_u)$, $\{l<1/l_u\}\equiv(-\infty,-y_u)$. This proves that Theorem~\ref{thm10} is a generalization of the results of \cite{levy09}.
\subsection{Asymptotically robust hypothesis test}
So far, the problem of minimax robust hypothesis testing, for the case where the objective function to maximize was the error probability, has been studied. For the same uncertainty model \eqref{eq7}, Dabak and Johnson proposed a geometrically based robust detection scheme much earlier than \cite{levy09}. From \cite[p.254]{levy}, it is also known that the work of Dabak can be recreated by considering the same minimax optimization problem that has been introduced, see \eqref{equation31}, but changing the objective functions $P_E^0$ and $P_E^1$ to $-D(g_0,\bar{g}_1)$ and $-D(g_1,\bar{g}_0)$. Here, $D$ is again the relative entropy and $(\bar{g}_0,\bar{g}_1)$ are the least favorable densities,
\begin{equation}\label{equation61}
\bar{g}_0(y)=\frac{w(y;u)}{k(u)},\quad \bar{g}_1(y)=\frac{w(y;1-v)}{k(1-v)}
\end{equation}
where $u,v$ are parameters to be determined such that
\begin{equation}\label{equation62}
D(\bar{g}_0,f_0)=\epsilon_0,\quad D(\bar{g}_1,f_1)=\epsilon_1.
\end{equation}
Again by \cite{levy}, the fixed sample size test in the log domain
\begin{equation}\label{equation62q}
\frac{1}{n}\sum_{i=1}^ n \ln \left(l(y_i)\right)\stackrel{{\cal{H}}_1}{\underset{{\cal{H}}_0}{\gtrless}}\frac{\ln\left(\frac{k(1-v)}{k(u)}\right)}{1-(u+v)},
\end{equation}
is still a likelihood ratio test, but with a modified threshold ($\gamma=1$). The following proposition and the proof show that $\bar{g}_0$ and $\bar{g}_1$ are indeed LFDs without consideration of the geometrical aspects of hypothesis testing.
\begin{prop}
The pair of density functions $\bar{g}_0$ and $\bar{g}_1$ satisfy
\begin{equation}\label{equation62t}
\bar{g}_0=\max_{g_0\in\mathcal{G}_0} \mathrm{E}_{g_0}\ln\left(\frac{\bar{g}_1}{\bar{g}_0}\right),\quad \bar{g}_1=\min_{g_1\in\mathcal{G}_1} \mathrm{E}_{g_1}\ln\left(\frac{\bar{g}_1}{\bar{g}_0}\right).
\end{equation}
\end{prop}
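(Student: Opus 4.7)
The plan is to read the two displayed equations as first-order optimality conditions and verify that $\bar{g}_0,\bar{g}_1$ from \eqref{equation61} are exactly the KKT stationary points of constrained optimization problems on $\mathcal{G}_0$ and $\mathcal{G}_1$. Each objective $g_j\mapsto\mathrm{E}_{g_j}\ln(\bar{g}_1/\bar{g}_0)$ is linear in $g_j$, and $\mathcal{G}_j$ is convex by convexity of the KL-divergence, so the Lagrangian with a nonnegative multiplier on the KL constraint is concave in the optimization variable; stationarity then yields a global extremum, paralleling the maximization stage in the proof of Theorem~\ref{thm10}.

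First I would form
\begin{equation*}
L^0(g_0,\lambda_0,\mu_0)=\int g_0\ln(\bar{g}_1/\bar{g}_0)\mbox{d}\mu+\lambda_0(\epsilon_0-D(g_0,f_0))+\mu_0(1-\Upsilon(g_0)),
\end{equation*}
take a G\^{a}teaux derivative in an arbitrary direction $\psi$, and exploit arbitrariness of $\psi$ to extract the pointwise stationarity condition $\ln(\bar{g}_1/\bar{g}_0)-\lambda_0\ln(g_0/f_0)-\lambda_0-\mu_0=0$, i.e.\ $g_0\propto f_0(\bar{g}_1/\bar{g}_0)^{1/\lambda_0}$. Using the exponential-family form $w(y;t)=f_0^{1-t}f_1^t$ that underlies \eqref{equation61}, one has $\bar{g}_1/\bar{g}_0=(f_1/f_0)^{1-u-v}\cdot k(u)/k(1-v)$, so the stationary $g_0$ is proportional to $f_0^{1-(1-u-v)/\lambda_0}f_1^{(1-u-v)/\lambda_0}$. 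The choice $\lambda_0=(1-u-v)/u>0$ (valid in the nondegenerate regime $u+v<1$ implicit in the threshold of \eqref{equation62q}) makes the $f_1$-exponent equal to $u$, and imposing $\Upsilon(g_0)=1$ pins the constant to $1/k(u)$, recovering $\bar{g}_0$ exactly; then $\mu_0$ is read off from the stationarity equation.

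The second identity is handled symmetrically by rewriting $\min$ as $\max$ of the negated objective and repeating the KKT derivation on $\mathcal{G}_1$. Stationarity yields $g_1\propto f_1(\bar{g}_0/\bar{g}_1)^{1/\lambda_1}$, and the choice $\lambda_1=(1-u-v)/v>0$ reproduces $\bar{g}_1$. Primal feasibility and complementary slackness are automatic since \eqref{equation62} forces $D(\bar{g}_j,f_j)=\epsilon_j$, so the KL constraints are active at the candidate solutions. Global optimality follows because each Lagrangian is concave in its own variable: the objective is linear, and $-\lambda_j D(\cdot,f_j)$ is concave for $\lambda_j\ge 0$ by convexity of the KL-divergence.

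The main obstacle is the algebraic bookkeeping required to match the pointwise stationarity condition with the explicit geodesic form of $(\bar{g}_0,\bar{g}_1)$: one must back out the correct Lagrange multipliers $\lambda_0,\lambda_1$ in terms of $u,v$ and verify their positivity. Once this identification is in place, concavity of the Lagrangian closes the argument with no appeal to the geometric content of Dabak's construction, which is precisely what the proposition asserts.
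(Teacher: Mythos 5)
Your proposal is correct and follows essentially the same route as the paper: both replace the objectives $P_E^0,P_E^1$ in the Lagrangian \eqref{eq14} by $\mathrm{E}_{g_j}\ln(\bar{g}_1/\bar{g}_0)$, apply the G\^{a}teaux/KKT stationarity argument to land on the exponential-family form $w(y;\cdot)/k(\cdot)$, and use concavity plus the active constraints \eqref{equation62} to conclude $g_j=\bar{g}_j$. The only difference is cosmetic: you pin down the solution by exhibiting the multipliers $\lambda_0=(1-u-v)/u$, $\lambda_1=(1-u-v)/v$ explicitly, whereas the paper invokes uniqueness of the solution of the convex equations \eqref{equation62}.
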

\begin{proof}
Consider the Lagrangian function defined in \eqref{eq14}, where the objective functions $P_E^0$ and $P_E^1$ are replaced by $\mathrm{E}_{g_0}\ln\left({\bar{g}_1}/{\bar{g}_0}\right)$ and $\mathrm{E}_{g_1}\ln\left({\bar{g}_1}/{\bar{g}_0}\right)$ \eqref{equation62t}. Then, following similar steps to \eqref{eq15}-\eqref{eq18}, it can be shown that $g_0$ and $g_1$ have the same parametric forms as given in \eqref{equation61}. The equations in \eqref{equation62} are convex \cite{levy}, hence their solution is unique. Since $(g_0,g_1)$ must satisfy \eqref{equation62} with the same $(\epsilon_0,\epsilon_1)$ that $(\bar{g}_0,\bar{g}_1)$ must satisfy, we have $g_0=\bar{g}_0$ and $g_1=\bar{g}_1$.
\end{proof}
Note that $\bar{g}_0$ and $\bar{g}_1$ are denoted as least favorable densities \it only \rm in the sense that they are solutions to the equations in \eqref{equation62t}. In the sequel, the statistical test based on the likelihood ratio $\bar{g}_1/\bar{g}_0$ will be denoted as the (a)-test. The property defined by \eqref{equation62t} will be used in the next sections.
\subsection{Other distances}
The distance $D$ can be chosen in various ways based on mathematical tractability or the practical application \cite{Gibbs}. Symmetric distances are preferable due to their nice properties; for instance, the symmetric version of the relative entropy $D(f_0,f_1)+D(f_1,f_0)$. However, this distance does not yield an analytic expression for the LFDs and the decision rule as $$\ln l(y)=W(e^{z_0\delta(y)+z_1})-W(e^{z_1\delta(y)+z_2})+z_3 \delta(y)$$ needs to be solved to obtain the decision rule $\hat{\delta}(y)$  for $\hat{l}=1$, where $z_1$, $z_2$ and $z_3$ are constants and $W$ is the Lambert $W$-function. Symmetrized $\chi^2$ distance, i.e. $\chi^2(f_0,f_1)+\chi^2(f_1,f_0)$, is another example where the LFDs can be obtained analytically.
However, the relation between $y_u$ and $l_u$, and similarly between $l_l$ and $y_l$, cannot be obtained analytically. Another example for a symmetric distance is the squared Hellinger distance. This distance is more appealing as it scales in $[0,1]$ and it is mathematically tractable \cite{gul}, \cite{gul3}.\\
For various robust tests, including the relative entropy distance, the $\chi^2$ distance and the squared Hellinger distance, the likelihood ratio test is given by \eqref{equation26}. For the symmetrized $\chi^2$ distance, however, the test is slightly different as $\hat{l}/l$ is not a constant function for $\hat{\delta}=0$ and $\hat{\delta}(y)=1$, c.f. Sec.~\ref{simulations}. In general, designing a robust test is equivalent to determining $\hat{l}=\psi(f_0,f_1)$ for some suitable functional $\psi$ which accounts for the unmodeled uncertainties by the nominal model while maintaining the detection performance above a certain threshold.
\section{Robust Detection for the Composite Uncertainty Model}\label{section3}
Minimax robust tests, which are designed based on a neighborhood set, where every probability measure belonging to the set
is absolutely continuous with respect to the nominal distribution, e.g. \eqref{eq7}, \cite{gul3}, are more suitable for modeling errors
than the tests designed based on a neighborhood set, where not all distributions are absolutely continuous with respect to the nominals
e.g. \cite{hube65}; see \cite{dabak} and \cite{levy09}. In many practical applications, however, both types of uncertainties, namely
both modeling errors as well as outliers can occur and a reasonable approach is to build a single test which is uniformly minimax robust.
This can be done by combining one of Huber's clipped likelihood ratio tests \cite{hube73} with a robust test which is more suitable for modeling errors. The following proposition explains how this can be done.
\begin{prop}\label{prop31}
Let the inner uncertainty set be the extended version of \eqref{eq7}, i.e. \begin{equation}\label{eq1a}
{\mathcal{G}}_j=\{g_j:D(g_j,f_j)\leq \varepsilon_j\},\quad j\in\{0,1\}
\end{equation}
where $D$ is a convex distance (possibly different for each hypothesis), $0\leq \varepsilon_j<1$ are some numbers and $l=f_1/f_0$ is a monotone increasing function.
Assume that there exist $\hat{g}_0\in \mathcal{G}_0$ and $\hat{g}_1\in \mathcal{G}_1$ corresponding to probability measures $\hat{G}_0$ and $\hat{G}_1$, respectively, such that
\begin{align}\label{equation2}
&G_0[\hat{g}_1/\hat{g}_0\leq t]\geq \hat{G}_0[\hat{g}_1/\hat{g}_0\leq t]\,\,\, \forall t\in\mathbb{R}, \forall {g}_0\in\mathcal{G}_0\nonumber\\
&G_1[\hat{g}_1/\hat{g}_0\leq t]\leq \hat{G}_1[\hat{g}_1/\hat{g}_0\leq t]\,\,\, \forall t\in\mathbb{R}, \forall {g}_1\in\mathcal{G}_1.
\end{align}
Define the composite uncertainty sets
\begin{equation}\label{eqfg}
{\mathcal{F}}_{j}=\{Q_j|Q_j=(1-\epsilon_j)G_j+\epsilon_j H_j, H_j\in\Xi, g_j\in {\mathcal{G}}_j \},\,j\in\{0,1\},
\end{equation}
where $\Xi$ is the set of all probability measures on $(\Omega,{\mathscr{A}})$ and $0<\epsilon_0,\epsilon_1<1$. Then, there exist a pair of LFDs, $(\hat{Q}_0,\hat{Q}_1)$ which satisfy the saddle value condition
\begin{align}\label{eqasda}
&Q_0[\hat{q}_1/\hat{q}_0\leq t]\geq \hat{Q}_0[\hat{q}_1/\hat{q}_0\leq t]\,\,\, \forall t\in\mathbb{R}, \forall Q_0\in\mathcal{F}_0\nonumber\\
&Q_1[\hat{q}_1/\hat{q}_0\leq t]\leq \hat{Q}_1[\hat{q}_1/\hat{q}_0\leq t]\,\,\, \forall t\in\mathbb{R}, \forall Q_0\in\mathcal{F}_0,
\end{align}
if $\varepsilon_j$ and $\epsilon_j$ are small enough, i.e., $\mathcal{F}_0$ and $\mathcal{F}_1$ do not overlap, where $\hat{q}_0$ and $\hat{q}_1$ are the least favorable densities
\begin{align}
\hat{q}_0(y)&=(1-\epsilon_0)\hat{g}_0(y)\quad \quad \quad \quad  \mbox{for}\quad  \hat{g}_1(y)/\hat{g}_0(y)<c_u\nonumber\\
&=(1/c_u)(1-\epsilon_0)\hat{g}_1(y)\quad\,\,\mbox{for}\quad \hat{g}_1(y)/\hat{g}_0(y)\geq c_u\nonumber\\
\hat{q}_1(y)&=(1-\epsilon_1)\hat{g}_1(y)\quad \quad \quad \quad\mbox{for} \quad \hat{g}_1(y)/\hat{g}_0(y)>c_l\nonumber\\
&=c_l(1-\epsilon_1)\hat{g}_0(y)\quad \quad \quad \,\,\mbox{for} \quad \hat{g}_1(y)/\hat{g}_0(y)\leq c_l
\end{align}
corresponding to $\hat{Q}_0$ and $\hat{Q}_1$, respectively.
\end{prop}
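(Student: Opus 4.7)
The plan is to adapt Huber's classical construction of the LFDs for an $\epsilon$-contamination model, playing the inner LFDs $\hat{g}_0, \hat{g}_1$ against the outer contamination so that the resulting composite LFDs take the form of a clipped/censored version of the \emph{inner} likelihood ratio $\hat{g}_1/\hat{g}_0$ rather than of the nominal $f_1/f_0$. Throughout, I would treat the clipping levels $c_l, c_u$ as unknowns to be pinned down at the end by the two normalization conditions $\int \hat{q}_0 \, d\mu = \int \hat{q}_1 \, d\mu = 1$.

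First I would check that $\hat{q}_0$ and $\hat{q}_1$ indeed belong to $\mathcal{F}_0$ and $\mathcal{F}_1$. Rewriting
\begin{equation*}
\hat{q}_0 = (1-\epsilon_0)\hat{g}_0 + \epsilon_0 \hat{h}_0, \qquad \epsilon_0 \hat{h}_0 = (1-\epsilon_0)\bigl(\hat{g}_1/c_u - \hat{g}_0\bigr) \mathbf{1}_{\{\hat{g}_1/\hat{g}_0 \geq c_u\}},
\end{equation*}
exhibits the correction term as non-negative, since $\hat{g}_1 \geq c_u \hat{g}_0$ on the support of the indicator, and symmetrically for $\hat{q}_1$. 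The non-overlap hypothesis on $(\mathcal{F}_0, \mathcal{F}_1)$ is precisely what ensures that the normalization system has an admissible solution with $0 < c_l < c_u < \infty$; if the uncertainty sets were too large, the clipping levels would collapse to $c_l \geq c_u$ and no minimax robust test could exist, consistent with Huber's original setting.

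Next I would perform a case analysis on the three regions $\{\hat{g}_1/\hat{g}_0 \leq c_l\}$, $\{c_l < \hat{g}_1/\hat{g}_0 < c_u\}$, and $\{\hat{g}_1/\hat{g}_0 \geq c_u\}$ to show that $\hat{q}_1/\hat{q}_0$ is a non-decreasing function of $\hat{g}_1/\hat{g}_0$, clipped between $c_l(1-\epsilon_1)/(1-\epsilon_0)$ and $c_u(1-\epsilon_1)/(1-\epsilon_0)$. Consequently, any event $A := \{\hat{q}_1/\hat{q}_0 \leq t\}$ coincides (up to null sets) with either $\emptyset$, $\Omega$, or a set of the form $\{\hat{g}_1/\hat{g}_0 \leq t'\}$ with $c_l \leq t' \leq c_u$, which reduces the composite saddle-value condition to a stochastic-ordering statement about the inner likelihood ratio --- exactly where hypothesis \eqref{equation2} applies.

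For the saddle-value verification itself, pick any $Q_0 = (1-\epsilon_0) G_0 + \epsilon_0 H_0 \in \mathcal{F}_0$. In the nontrivial case, the support of the additional mass in $\hat{q}_0$ lies in $\{\hat{g}_1/\hat{g}_0 \geq c_u\}$, which is disjoint from $A$; hence $\hat{Q}_0[A] = (1-\epsilon_0) \hat{G}_0[A]$, while $Q_0[A] \geq (1-\epsilon_0) G_0[A] \geq (1-\epsilon_0) \hat{G}_0[A]$ by \eqref{equation2} and $H_0 \geq 0$. Symmetrically, the additional mass in $\hat{q}_1$ is supported on $\{\hat{g}_1/\hat{g}_0 \leq c_l\} \subseteq A$, so $\hat{Q}_1[A] = (1-\epsilon_1) \hat{G}_1[A] + \epsilon_1$, which dominates $Q_1[A] = (1-\epsilon_1) G_1[A] + \epsilon_1 H_1[A]$ by combining \eqref{equation2} with $H_1[A] \leq 1$. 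The main obstacle is the first step: establishing the existence of admissible clipping levels $(c_l, c_u)$ under the smallness assumption, since naive choices could violate non-negativity of $\hat{h}_j$ or produce inconsistent normalization. Once this is settled, the rest is essentially probabilistic bookkeeping enabled by the clipped-monotone structure of $\hat{q}_1/\hat{q}_0$.
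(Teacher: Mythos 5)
Your proposal is correct and follows essentially the same route as the paper's proof: both reduce the composite saddle-value condition to the inner stochastic-ordering hypothesis \eqref{equation2} by observing that $\hat{q}_1/\hat{q}_0$ is a clipped, non-decreasing function of $\hat{g}_1/\hat{g}_0$, splitting into the trivial full/null cases outside $(bc_l, bc_u]$ and using $Q_0[A]\geq(1-\epsilon_0)G_0[A]\geq(1-\epsilon_0)\hat{G}_0[A]=\hat{Q}_0[A]$ in the middle band. The only difference is one of emphasis: you explicitly verify non-negativity of the contamination components $\hat{h}_j$ and flag the existence of admissible clipping levels $(c_l,c_u)$ as the remaining obstacle, whereas the paper takes the Huber construction of $c_l, c_u$ as given and works with the cumulative-distribution-function characterization of $\mathcal{F}_j$ rather than the mixture form directly.
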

\begin{proof}
The proof follows directly from the definition of the uncertainty sets
\begin{align}
\mathcal{F}_0&=\{Q_0:Q_0[Y<y]\geq (1-\epsilon_0)G_0[Y<y]-\nu_0\}\nonumber\\
\mathcal{F}_1&=\{Q_1:1-Q_1[Y<y]\geq (1-\epsilon_1)(1-G_1[Y<y])-\nu_1\}
\end{align}
with $\nu_0=\nu_1=0$ for $\epsilon-$contamination neighborhood \cite{hube68} and the stochastic ordering defined by Corollary.~\ref{korr1}. Only the first inequality in \eqref{eqasda} is proven as the second inequality can be proven using the same line of arguments. Let $b=(1-\epsilon_1)/(1-\epsilon_0)$. Then, for every $t>bc_u$ and $Q_0\in\mathcal{F}_0$, the event $E=[\hat{q}_1/\hat{q}_0\leq t]$ has full probability and for every $t\leq bc_l$ and $Q_0\in\mathcal{F}_0$, the event $E$ has null probability. Hence, \eqref{eqasda} is trivially true for these cases. For $bc_l<t\leq bc_u$, assume that the likelihood ratio $\hat{g}_1/\hat{g}_0$ is non-decreasing, which is true when $l$ is monotone and the distance is either one of Huber's distances \cite[p. 271]{hube81} or any distance with the likelihood ratio given by Eq.~\ref{equation26}, or in general a distance which results in a non-decreasing $\hat{l}=\hat{g}_1/\hat{g}_0$ for monotone $l$. Then, by Corollary~\ref{korr1} it follows that $G_0[Y\leq t]\geq \hat{G}_0[Y\leq t]$ for all $t=\hat{l}^{-1}(y)$. Let $\hat{Q}_0[Y\leq t]:=(1-\epsilon_0)\hat{G}_0[Y\leq t]$. Obviously $Q_0[
Y\leq t]\geq \hat{Q}_0[Y\leq t]$ for all $bc_l<t\leq bc_u$ and $Q_0\in\mathcal{F}_0$. Note that for non-decreasing $\hat{g}_1/\hat{g}_0$, $\hat{q}_1/\hat{q}_0$ is also non-decreasing. Hence, again by Corollary~\ref{korr1} we get $Q_0[\hat{q}_1/\hat{q}_0\leq t]\geq \hat{Q}_0[\hat{q}_1/\hat{q}_0\leq t]$ for all $t$ and $Q_0\in\mathcal{F}_0$ as claimed.
\end{proof}
The proof is independent of the choice of $D$ as long as the LFDs exist. When $D$ is the relative entropy, it follows that
\begin{align}
\hat{G}_0[\hat{g}_1/\hat{g}_0\leq \rho]&=\int_{[\hat{g}_1/\hat{g}_0<\rho]}\hat{g}_0\mbox{d}\mu+\int_{[\hat{g}_1/\hat{g}_0=\rho]}\hat{\delta}\hat{g}_0\mbox{d}\mu\nonumber\\
&\leq\int_{[\hat{g}_1/\hat{g}_0<\rho]}g_0\mbox{d}\mu+\int_{[\hat{g}_1/\hat{g}_0=\rho]}\hat{\delta}g_0\mbox{d}\mu\nonumber\\
&=G_0[\hat{g}_1/\hat{g}_0\leq \rho]
\end{align}
and in a similar way $G_1[\hat{g}_1/\hat{g}_0\leq \rho]\leq\hat{G}_1[\hat{g}_1/\hat{g}_0\leq \rho]$. This proves that the uncertainty sets based on the $\epsilon$-contamination model and the relative entropy can be combined into a composite uncertainty set
\eqref{eqfg} which accepts LFDs, $\hat{Q}_0$ and $\hat{Q}_1$ satisfying \eqref{eqasda}. Clearly, the same conclusions hold when $\nu_0$ and $\nu_1$ are non-zero.
This includes the total variation distance as a special case with $\epsilon_0=\epsilon_1=0$. Note that Prop.~\ref{prop31} is general for all thresholds.
However, when the inner uncertainty set is the KL-divergence, the decision rule $\hat{\delta}$ must be used to guarantee minimax robustness.
For a comparison, one can see that the composite model proposed in \cite{gul4} is robust only against outliers, with some flexibility, while the composite model proposed in this work is robust
against both modeling errors as well as outliers.
The LFDs, corresponding to the composite model based on the relative entropy distance, can also be obtained as
%Notice that the monotonicity of $\hat{l}$ is not required in this case. By \cite{hube65}, later stated by \cite{hube68} and \cite{hube73},
%it is also known that for any fixed $(G_0,G_1)$, \eqref{eqasda} is
\begin{align}\label{equationlfd}
\hat{q}_0(y) &= \begin{cases}\frac{(1-\epsilon_0) l_l}{z(l_l,l_u)}f_0(y), &l(y)<l_l  \\
\frac{(1-\epsilon_0)}{z(l_l,l_u)}\left(l_l^{-1}l(y)\right)^{\frac{\ln\left(k(l_l,l_u)\right)}{\ln\left(l_u/l_l\right)}}f_1(y), & l_l \leq l(y)\leq l_u \\
\frac{(1-\epsilon_0)l_u k(l_l,l_u)}{z(l_l,l_u)}f_0(y), & l_u<l(y)<c_u\\
\frac{(1-\epsilon_0)k(l_l,l_u)}{c_u z(l_l,l_u)}f_1(y), & l(y)\geq c_u\\
 \end{cases}\nonumber\\
\hat{q}_1(y) &= \begin{cases}\frac{c_l(1-\epsilon_1) l_l}{z(l_l,l_u)}f_0(y), & l(y)\leq c_l\\
 \frac{(1-\epsilon_1)}{z(l_l,l_u)}f_1(y), &c_l<l(y)<l_l\\
\frac{(1-\epsilon_1)}{z(l_l,l_u)}\left(l_l^{-1}l(y)\right)^{\frac{\ln\left(k(l_l,l_u)\right)}{\ln\left(l_u/l_l\right)}}f_1(y), & l_l \leq l(y)\leq l_u \\
\frac{(1-\epsilon_1)k(l_l,l_u)}{z(l_l,l_u)}f_1(y), & l(y)>l_u \end{cases}
\end{align}
with the corresponding likelihood ratio
\begin{align}
\hat{l}(y) &= \begin{cases}b c_l, &l(y)\leq c_l  \\
\frac{b l(y)}{l_l}, & c_l < l(y)< l_l \\
b, & l_l \leq l(y)\leq l_u\\
\frac{b l(y)}{l_u}, & l_u<l(y)<c_u\\
b c_u, & l(y)\geq c_u\\
\end{cases}
\end{align}
The choice of $D$ can be adjusted depending on the application.
For instance symmetrized $\chi^2$ distance can be preferred if the tail structure is expected to be roughly preserved. It is also not difficult to see that for variety of distances, \eqref{equationlfd} remains the same. However, special care should be taken for the choice of $b$, since it is equivalent to $\rho$. In the sequel, $D$ will be assumed to be KL-divergence with the LFDs given by \eqref{equation24} unless mentioned otherwise. For example the parameters $\epsilon_0=\epsilon_1=0$ indicate a pure KL-divergence uncertainty set with the corresponding LFDs denoted by $\hat{Q}_0:=\hat{G}_0$ and $\hat{Q}_1:=\hat{G}_1$. In the following, the corresponding test $\mathrm{d}\hat{G}_1/\mathrm{d}\hat{G}_0$ will be denoted as the (m)-test and similarly, the minimax robust test for $\varepsilon_0=\varepsilon_1=0$ will be denoted as the (h)-test and the composite test will be denoted as the (c)-test.
\subsection{Distribution of the log-likelihood ratios of LFDs}
In order to gain further insights about the minimax robust tests and to evaluate their performance, it is desirable to have the density function of the log likelihood ratio of the LFDs, i.e. $h_j^*\sim\ln \hat{q}_1/\hat{q}_0(Y)$, when $Y\sim\hat{Q}_j$, as a function of the density function of the log likelihood ratio of the nominal distributions $h_j\sim\ln f_1/f_0(Y)$ when $Y\sim F_j$, $j\in\{0,1\}$. Then, for the (h)-test, it follows that
\begin{align}\label{eol}
h_i^*(x)=&r_i^0 {\delta_x}(x-\ln (bc_l))\nonumber\\
+&(1-\epsilon_i)h_i(x-\ln b){\bf 1}_{\{\ln (bc_l)<x<\ln (bc_u)\}}\nonumber\\
+&r_i^1 {\delta_x}(x-\ln (bc_u)),\quad i\in\{0,1\}
\end{align}
where $\delta_x$ is a dirac delta function and
\begin{align}\label{eol1t}
r_0^0=(1-\epsilon_0)F_0[f_1/f_0\leq c_l],\,\,r_0^1=\frac{1}{c_u}(1-\epsilon_0)F_1[f_1/f_0\geq c_u]\nonumber\\
r_1^0=c_l(1-\epsilon_1)F_0[f_1/f_0\leq c_l],\,\,r_1^1=(1-\epsilon_1)F_1[f_1/f_0\geq c_u]
\end{align}
Similarly, for the (m)-test,
\begin{align}\label{eol2}
h_0^*(x)=&\frac{l_l}{z(l_l,l_u)}h_0(x+\ln l_l){\bf 1}_{\{x<0\}}+\frac{r}{z(l_l,l_u)}{\delta_x}(x)\nonumber\\
+&\frac{l_u k(l_l,l_u)}{z(l_l,l_u)}h_0(x+\ln l_u){\bf 1}_{\{x>0\}}\nonumber\\
h_1^*(x)=&\frac{1}{z(l_l,l_u)}h_1(x+\ln l_l){\bf 1}_{\{x<0\}}+\frac{r}{z(l_l,l_u)}{\delta_x}(x)\nonumber\\
+&\frac{k(l_l,l_u)}{z(l_l,l_u)}h_1(x+\ln l_u){\bf 1}_{\{x>0\}}
\end{align}
where
\begin{equation}
r=\int_{l_l<l<l_u}\left(l_l^{-1}l\right)^{\frac{\ln\left(k(l_l,l_u)\right)}{\ln\left(l_u/l_l\right)}}f_1\mbox{d}\mu.
\end{equation}
It can be seen that Huber's test ((h)-test) creates two point masses at the clipping thresholds $(\ln(bc_l),\ln(b_cu))$ and between them the density
of the log-likelihood ratio of the nominal distributions is shifted by $\ln b$. The robust test based on modeling errors ((m)-test),
on the other hand, shifts the density of the log-likelihood ratio of the nominal distributions $(h_0,h_1)$ by $|\ln l_l|$ to the right and adds
another part of the same density, which is shifted by $\ln l_u$, to the left. The total loss of area due to the shifting is stacked as a point mass at $x=0$.\\
The equations \eqref{eol} and \eqref{eol2} are of particular importance, first in calculating the false alarm and miss detection probabilities $\int_{t}^\infty h_0^*(x)\mbox{d}x$ and $\int_{-\infty}^t h_1^*(x)\mbox{d}x$, respectively, and second in finding the approximate
distribution of the test statistic $S_n=\sum_{i=1}^n\ln \hat{l}(Y_i)$, for $n$ independent r.v.s $Y_1,Y_2,\ldots,Y_n$, in terms of nominal distributions. However, to calculate the false alarm and miss detection probabilities, the factor of randomization, $\delta$ in Eq.~\eqref{eol2}, needs to be taken into account. That is, the contribution of the point mass at $x=0$ to the false alarm and miss detection probabilities needs to be determined.

\subsection{Limiting robustness parameters for the (m)-test}
The composite hypotheses start overlapping when the LFDs become identical. For the (m)-test, this occurs when $\mathcal{R}_1$ and $\mathcal{R}_3$ are empty sets. Let $u=1+\ln(k(l_l,l_u))/\ln (l_u/l_l)$, $w(y;u)=f_{1}(y)^{u} f_0(y)^{1-u}$ and $k(u)=\int_{\mathbb{R}}w(y;u)\mathrm{d}y$. Then, equations \eqref{equation29} and \eqref{equation30} reduce to
\begin{equation}\label{eq32}
\hspace{-1.5mm}\varepsilon_j(u)={-\ln k(u)}+\frac{u-j}{k(u)}\int_{\mathbb{R}}w(y;u)\ln l(y)\mathrm{d}y,\,\, j\in\{0,1\}
\end{equation}
\begin{prop}\label{prop4}
$\varepsilon_0$ is monotone increasing in $u$ and $\varepsilon_1$ is monotone decreasing in $u$. Hence, $0\leq \varepsilon_0 \leq D(f_1,f_0)$ and $0\leq \varepsilon_1\leq D(f_0,f_1)$.
\end{prop}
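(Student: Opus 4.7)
The key observation is that the quantity $k(u)=\int w(y;u)\mathrm{d}\mu = \int f_0\, l^u\, \mathrm{d}\mu = \mathrm{E}_{f_0}[e^{u\ln l}]$ is the moment generating function of $\ln l$ under $f_0$, so $\phi(u):=\ln k(u)$ is the cumulant generating function, which is classically convex. The plan is to rewrite $\varepsilon_j(u)$ in terms of $\phi$, differentiate once, and then use convexity of $\phi$ to control the sign.

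Concretely, I would first note that, since $\frac{\partial}{\partial u}l^u = l^u \ln l$, differentiation under the integral gives $k'(u)=\int w(y;u)\ln l(y)\,\mathrm{d}\mu$, and hence
\begin{equation*}
\varepsilon_j(u)=-\phi(u)+(u-j)\phi'(u),\qquad j\in\{0,1\}.
\end{equation*}
Differentiating once more, the two $\phi'(u)$ terms cancel and I obtain the clean identity
\begin{equation*}
\varepsilon_j'(u)=(u-j)\phi''(u).
\end{equation*}

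Next, I would verify convexity of $\phi$ directly by computing
\begin{equation*}
\phi''(u)=\frac{k''(u)}{k(u)}-\left(\frac{k'(u)}{k(u)}\right)^{2}
=\mathrm{Var}_{P_u}[\ln l]\,\geq\,0,
\end{equation*}
where $P_u$ is the tilted probability measure with density $w(y;u)/k(u)$; this uses $k''(u)=\int w(y;u)(\ln l)^{2}\mathrm{d}\mu$. Since $\phi''(u)\ge 0$, the sign of $\varepsilon_j'(u)$ is governed entirely by $(u-j)$. On the relevant range $u\in[0,1]$ — which is forced by the definition $u=1+\ln k(l_l,l_u)/\ln(l_u/l_l)$ together with $l_l\le 1\le l_u$ and $k(l_l,l_u)\in(0,1]$ implicit in the formulation — this yields $\varepsilon_0'(u)=u\phi''(u)\ge 0$ and $\varepsilon_1'(u)=(u-1)\phi''(u)\le 0$, establishing the claimed monotonicity.

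Finally, the bounds follow by evaluating the endpoints. At $u=0$ we have $k(0)=1$, so $\phi(0)=0$ and $\varepsilon_0(0)=0$; at $u=1$ we have $k(1)=1$ and $\phi'(1)=k'(1)=\int f_1\ln l\,\mathrm{d}\mu=D(f_1,f_0)$, giving $\varepsilon_0(1)=D(f_1,f_0)$. Symmetrically, $\varepsilon_1(1)=0$ and $\varepsilon_1(0)=-k'(0)=-\int f_0\ln l\,\mathrm{d}\mu=D(f_0,f_1)$. Combined with the monotonicity this yields the two inclusions $0\le\varepsilon_0\le D(f_1,f_0)$ and $0\le\varepsilon_1\le D(f_0,f_1)$. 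The only slightly delicate point is justifying differentiation under the integral sign for $k(u)$ on $[0,1]$, which is the main technical obstacle; it can be discharged by a dominated-convergence argument using that $l^u\le 1+l$ uniformly for $u\in[0,1]$ and $k(0)=k(1)=1$.
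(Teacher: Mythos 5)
Your route is, at its core, the paper's own: the paper computes $\partial\varepsilon_0/\partial u=\frac{u}{k(u)^2}\bigl[k(u)k''(u)-k'(u)^2\bigr]$ and identifies the bracket as $\|\ln l\|_w^2\|1\|_w^2-(\ln l,1)_w^2>0$ by Cauchy--Schwarz in the weighted space $L^2_w$ --- which is exactly your statement $\phi''(u)=\mathrm{Var}_{P_u}[\ln l]\ge 0$ for the tilted measure. Your cumulant-generating-function packaging is tidier in two respects: the identity $\varepsilon_j'(u)=(u-j)\phi''(u)$ dispatches $j=0$ and $j=1$ simultaneously, whereas the paper treats $j=0$ explicitly and obtains $j=1$ by the substitution $u'=1-u$ with $f_0$ and $f_1$ swapped; and your endpoint evaluations $\varepsilon_0(0)=\varepsilon_1(1)=0$, $\varepsilon_0(1)=D(f_1,f_0)$, $\varepsilon_1(0)=D(f_0,f_1)$ make the range statement immediate. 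Note that the paper also gets \emph{strict} monotonicity by observing that $\ln l$ and $1$ are linearly independent when $f_0\neq f_1$; your variance bound as stated is only $\ge 0$, though the same observation upgrades it.

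The one step that would fail as written is the domination argument you defer to the end. The bound $l^u\le 1+l$ for $u\in[0,1]$ controls $k(u)$ itself, but differentiating twice under the integral requires an integrable dominant for $l^u(\ln l)^2 f_0$, and $(1+l)(\ln l)^2f_0=(\ln l)^2f_0+(\ln l)^2f_1$ need not be integrable: finiteness of $D(f_1,f_0)$ gives a first moment of $\ln l$ under $f_1$, not a second moment. This is precisely where the paper's proof does its real work. For $u$ in the interior it writes $(\ln l)^2w(y;u)=\frac{(\ln l)^2}{l^{(1-u)/2}}\,f_1(y)^{(1+u)/2}f_0(y)^{(1-u)/2}$, notes that $f_1^{(1+u)/2}f_0^{(1-u)/2}$ is integrable by H\"{o}lder, and that the prefactor vanishes in the relevant tails, so $\ln l\in L^2_w$ by comparison (with a separate, easier treatment of the tail where $l$ is bounded away from $0$ and $\infty$). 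You need this, or the equivalent bound $(\ln l)^2\le C_\epsilon\left(l^{\epsilon}+l^{-\epsilon}\right)$ combined with $\int f_1^s f_0^{1-s}\,\mathrm{d}\mu\le 1$ for $s\in[0,1]$, valid uniformly for $u$ in compact subsets of $(0,1)$, to legitimize the differentiation; with that repaired, your argument coincides with the paper's.
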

\begin{proof}
For $j=0$, it follows that
\begin{equation}\label{eq33}
\varepsilon(u)={-\ln\left(k(u)\right)}+\frac{u}{k(u)}\int_{\mathbb{R}}l(y)^{u}\ln(l(y))f_0(y)\mathrm{d}y\nonumber\\
\end{equation}
After manipulation, the first derivative of $\varepsilon(u)$ is
\begin{align}\label{eq34}
\frac{\partial{\varepsilon(u)}}{\partial u}=&\frac{u}{k(u)^2}\Bigg[k(u)\int_{\mathbb{R}}f_0(y)l(y)^u\ln(l(y))^2\mathrm{d}y\nonumber\\
-&\frac{\partial k(u)}{\partial u}\int_{\mathbb{R}}l(y)^u f_0(y)\ln(l(y))\mathrm{d}y\Bigg].
\end{align}
Inserting $k(u)$ and $\partial k(u)/\partial u$ and rearranging the terms yields
\begin{align}\label{eq35}
\frac{k(u)^2\partial{\varepsilon(u)}}{u\partial u}&=\int_{\mathbb{R}} l(y)^u f_0(y) \mathrm{d}y\int_{\mathbb{R}} l(y)^u f_0(y)\ln(l(y))^2 \mathrm{d}y\nonumber\\
-&\int_{\mathbb{R}} l(y)^u f_0(y)\ln(l(y)) \mathrm{d}y\int_{\mathbb{R}} l(y)^u f_0(y)\ln(l(y)) \mathrm{d}y\nonumber\\
=&\int_{\mathbb{R}} w(y;u) \mathrm{d}y\int_{\mathbb{R}} w(y;u)\ln(l(y))^2 \mathrm{d}y\nonumber\\
-&\left(\int_{\mathbb{R}} w(y;u)\ln(l(y)) \mathrm{d}y\right)^2
\end{align}
By H\"{o}lder's inequality, $w(y;u)$ is integrable over $\mathbb{R}$. Consider the weighted $L^2$ space, $L^2_w(\mathbb{R})$ equipped with the inner product
\begin{equation}\label{eq36}
(g,h)_w:=\frac{\int_{\mathbb{R}}g(y)h(y)w(y)\mathrm{d}y}{\int_{\mathbb{R}}w(y)\mathrm{d}y}
\end{equation}
and the resulting norm $||g||_w=\sqrt{(g,g)_w}$. By definition, $g$ is in $L_w^2$ if $g^2w$ is integrable over $\mathbb{R}$. Let $g(y)=\ln(l(y))$. Dividing \eqref{eq35} by $(\int_{\mathbb{R}}w(y)\mathrm{d}y)^2$ reads
\begin{align}\label{eq37}
\frac{k(u)^2\partial{\varepsilon(u)}}{u(\int_{\mathbb{R}}w(y)\mathrm{d}y)^2\partial u}=&||g||_w^2-(g,1)_w^2\nonumber\\
=&||g||_w^2||1||_w^2-(g,1)_w^2>0
\end{align}
The inequality follows from the Cauchy-Schwarz inequality for the inner product space $(g,h)_w$ and it is strict since $g$ and $1$ are linearly independent. What remains to be shown is that $g$ belongs to $L_w^2$, i.e., $\int_{\mathbb{R}}g(y)^2w(y)\mathrm{d}y<\infty$. If $g$ is bounded, the claim is obvious. If not, then, either $\lim_{|y|\rightarrow\infty}l(y)=\infty$ or $\lim_{|y|\rightarrow-\infty}l(y)=0$. Assume $\lim_{y\rightarrow\infty}l(y)=\infty$ and write
\begin{align}\label{eq38}
\ln(l(y))^2w(y)&=(\ln(l(y)))^2 l(y)^u f_0(y)\nonumber\\
&=\frac{(\ln l(y))^2}{l(y)^{\frac{1-u}{2}}}l(y)^{\frac{1+u}{2}} f_0(y)\nonumber\\
&=\frac{(\ln l(y))^2}{l(y)^{\frac{1-u}{2}}}f_1(y)^{\frac{1+u}{2}}f_0(y)^{\frac{1-u}{2}}
\end{align}
By H\"{o}lder the function $f_1(y)^{\frac{1+u}{2}}f_0(y)^{\frac{1-u}{2}}$ is integrable and since
\begin{equation}\label{eq39}
\lim_{|y|\rightarrow \infty}\frac{(\ln l(y))^2}{l(y)^{\frac{1-u}{2}}}=0,
\end{equation}
$g(y)^2 w(y)$ is integrable over $[0,\infty)$ by comparison with $f_1(y)^{\frac{1+u}{2}}f_0(y)^{\frac{1-u}{2}}$. If $\lim_{y\rightarrow -\infty} l(y)>0$, then $g$ is bounded on $(-\infty,0]$ and integrability over $(-\infty,0]$ follows. If $\lim_{y\rightarrow \infty} l(y)=0$, then as
\begin{equation}\label{eq40}
\ln(l(y))^2w(y)=(\ln(l(y)))^2 l(y)^u f_0(y),
\end{equation}
we have
\begin{equation}\label{eq41}
\lim_{y\rightarrow -\infty}=(\ln(l(y)))^2 l(y)^u=0,
\end{equation}
and integrability over $(-\infty,0]$ follows by comparison with $f_0$. In a similar way, $g(y)^2 w(y)$ is integrable over $\mathbb{R}$ if $\lim_{y\rightarrow-\infty}l(y)=\infty$ or $\lim_{y\rightarrow\infty}l(y)=0$. This completes the proof that $\partial{\varepsilon(u)}/\partial u>0$ and hence $\varepsilon_0\leq \varepsilon_0(1)=D(f_1,f_0)$. For $\varepsilon_1$, let $u^{'}=1-u$, $f_1:=f_0$ and $f_0:=f_1$. This gives $\varepsilon_1(u^{'})=\varepsilon_0(u^{'})$, which implies that $\varepsilon_1(u^{'})$ is increasing, therefore $\varepsilon_1(u)$ is decreasing. Note that for $\varepsilon_1$, with the substitutions of the densities, $l$ becomes decreasing, however $g$ still belongs to $L_w^2$ and the proof is complete.
\end{proof}
Prop.~\eqref{prop4} implies that \eqref{eq32} has a unique solution for all $\varepsilon_j\leq D(f_{1-j},f_j)$, $j\in\{0,1\}$. In particular, given a certain choice of $\varepsilon_j$, the solution of Eq.~\eqref{eq32} leads to $0\leq u^*\leq 1$. The corresponding maximum $\varepsilon_{1-j}$ is therefore obtained by $\varepsilon_{1-j}(u^*)$. From \eqref{eq32}, it also follows that
\begin{equation}\label{eq42}
\varepsilon_0(u)-\varepsilon_1(u)=\frac{1}{k(u)}\int_{\mathbb{R}}l(y)^u f_0(y)\ln(l(y))\mbox{d}y,
\end{equation}
which is bounded as $-D(f_0,f_1)\leq \varepsilon_0(u)-\varepsilon_1(u) \leq D(f_1,f_0)$ due to monotonicity. When $\varepsilon=\varepsilon_0(u)=\varepsilon_1(u)$, this reduces to
\begin{equation}\label{eq42a}
\varepsilon=\sup_{0 \leq u\leq 1}-\ln \int_{\mathbb{R}} f_1(y)^u f_0(y)^{1-u}\mbox{d}y
\end{equation}
which is the \it Chernoff distance \rm and if additionally $f_0(y)=f_1(-y)\forall y$, it further reduces to
\begin{equation}\label{eq42b}
\varepsilon=-\ln\int_{\mathbb{R}}\sqrt{f_0(y)f_1(y)}\mathrm{d}y,
\end{equation}
which is the \it Bhattacharyya distance \rm between the nominal densities.
\subsection{Limiting robustness parameters for the (h)-test}
\begin{prop}\label{proppp}
The maximum achievable pair of $(\epsilon_0,\epsilon_1)$ with respect to the $\epsilon$-contamination model are obtained by
\begin{align}
(1-\epsilon_0)(P_0[l\leq 1/b]-bP_1[l\leq 1/b])=\epsilon_1\label{equation51}
\end{align}
where $b=(1-\epsilon_1)/(1-\epsilon_0)$.
\end{prop}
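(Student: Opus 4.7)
My plan is to characterise the maximum admissible outlier budget $(\epsilon_0,\epsilon_1)$ as the critical point at which the least favorable densities of the (h)-test collapse into a single density, and then to read the claimed identity off the normalisation constraint of that density.

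First, I specialise Prop.~\ref{prop31} to the pure $\epsilon$-contamination case $\varepsilon_0=\varepsilon_1=0$, so that $\hat g_j = f_j$. In this regime the robust likelihood ratio $\hat q_1/\hat q_0$ is piecewise constant: it equals $b c_l$ on $\{l\le c_l\}$, equals $b\,l$ on $(c_l,c_u)$, and equals $b c_u$ on $\{l\ge c_u\}$. The minimax test compares this ratio to the threshold $1$, so the decision rule can separate the hypotheses only while $b c_l\le 1\le b c_u$, i.e.\ while $c_l\le 1/b\le c_u$. As the outlier budgets $(\epsilon_0,\epsilon_1)$ grow, Huber's normalisation equations force $c_l$ to move up and $c_u$ to move down, so the admissible window $[c_l,c_u]$ shrinks and eventually closes. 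The limiting situation is therefore $c_l=c_u=1/b$; at this point $\hat q_1\equiv\hat q_0$ and the hypotheses become indistinguishable.

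Second, I substitute $c_u=1/b$ into the normalisation $\int\hat q_0\,\mathrm d\mu=1$, which, using the piecewise form of $\hat q_0$ from Prop.~\ref{prop31}, reads
\[
(1-\epsilon_0)\,P_0[l<1/b]+(1-\epsilon_0)\,b\,P_1[l\ge 1/b]=1.
\]
Writing $P_1[l\ge 1/b]=1-P_1[l\le 1/b]$ and invoking the definitional identity $(1-\epsilon_0)b=1-\epsilon_1$, a one-line rearrangement yields precisely Eq.~\eqref{equation51}. The normalisation of $\hat q_1$ at $c_l=1/b$ delivers the same identity, which is a reassuring consistency check but not strictly needed.

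The genuinely delicate step, and the one I expect to require the most care, is the structural justification that the limiting degeneracy is $c_l=c_u$ rather than some other kind of boundary behaviour (for instance $c_l\downarrow 0$ or $c_u\uparrow\infty$). I would handle this by appealing to the monotonic dependence of the Huber clipping thresholds on $(\epsilon_0,\epsilon_1)$, together with the necessity that $\hat q_1/\hat q_0$ must straddle $1$ for the saddle-value condition in \eqref{eqasda} to be informative; any competing degeneracy would already force $\mathcal{F}_0$ and $\mathcal{F}_1$ to overlap at a strictly smaller pair of budgets, so the boundary of the feasible region is exactly the locus described by Eq.~\eqref{equation51}.
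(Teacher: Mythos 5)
Your proposal is correct and takes essentially the same route as the paper: monotonicity of Huber's clipping thresholds in $(\epsilon_0,\epsilon_1)$ forces $c_l$ and $c_u$ together, the degenerate point is pinned at $c=1/b$ (the paper argues this from the fact that neither density can dominate the other everywhere, you from the requirement that the window straddle $1/b$), and Eq.~\eqref{equation51} then follows from the normalisation of $\hat q_0$, which is exactly the paper's $h_2(1/b)=1/(1-\epsilon_0)$. No substantive difference.
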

\begin{proof}
By Huber \cite{hube65}, it is known that $h_1(c_l)=P_1[p_1/p_0>c_l]+c_lP_0[p_1/p_0\leq c_l]$ is an increasing function of $c_l$ and $h_2(c_u)=P_0[p_1/p_0<c_u]+(1/c_u)P_1[p_1/p_0\geq c_u]$, in a similar manner, is a decreasing function of $c_u$. This implies that $\epsilon_0$ and $\epsilon_1$ are maximized when $c_l$ is maximized and $c_u$ is minimized. The maximum of $c_l$ is equal to the minimum of $c_u$ such that the hypotheses do not overlap. As a result for $c=c_l=c_u$, it follows that $\hat{l}(y)=bc$ for all $y\in\mathbb{R}$. Since no density is greater than any other for all $y\in\mathbb{R}$, the conclusion is that $c=1/b$. Rewriting the equations, $h_1(c:=1/b)=1/(1-\epsilon_1)$ or equivalently $h_2(c:=1/b)=1/(1-\epsilon_0)$, completes the proof.
\end{proof} Let $u_0=\mbox{ess}\inf_{[\mu]} l$, $u_1=\mbox{ess}\sup_{[\mu]} l$, and let $k=1-\epsilon_0$ be known and $u=1/(1-\epsilon_1)$ to be determined. With these substitutions \eqref{equation51} can be written as $f(u)=kuP_0[l\leq ku]-P_1[l\leq ku]-u+1$.
\begin{lem}\label{lemma34}
The function $f$ is continuous, $f(u)=1-u$ for $0\leq ku\leq u_0 $, is strictly decreasing for $ku<u_1$, tends to $-\infty$
for $k<1$ and tends to $0$ for $k=1$ as $u\uparrow\infty$.
\end{lem}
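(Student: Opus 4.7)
My plan is to first reformulate $f$ in a way that exposes all four claims at once. Using that $f_1 = l \cdot f_0$, we have $P_1[l \le t] = \int_{l \le t} l\, f_0\, \mathrm{d}\mu$, so
\begin{equation*}
f(u) \;=\; \int_{l \le ku} (ku - l)\, f_0\, \mathrm{d}\mu \;-\; u + 1.
\end{equation*}
Continuity then follows from continuity of the map $u \mapsto \int_{l \le ku}(ku - l) f_0 \,\mathrm{d}\mu$ (the integrand $(ku - l)^+ f_0$ is continuous in $u$ for each $y$ and dominated by $l f_0 + k u_1 f_0$ on compact $u$-intervals where $u_1 < \infty$; if $u_1 = \infty$ use the local bound $(ku) f_0$). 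For the second claim, when $ku \le u_0 = \operatorname{ess\,inf} l$, the set $\{l \le ku\}$ differs by a $\mu$-null set from $\{l = ku\}$, where the integrand vanishes; hence the integral is zero and $f(u) = 1 - u$.

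For strict monotonicity on $\{u : ku < u_1\}$, I would differentiate the reformulated $f$. Since the integrand $(ku - l) f_0$ vanishes identically on the moving boundary $\{l = ku\}$, the Leibniz rule contributes no boundary term, yielding
\begin{equation*}
f'(u) \;=\; k\, P_0[l \le ku] \;-\; 1.
\end{equation*}
When $ku < u_1$, the definition of the essential supremum gives $P_0[l \le ku] < 1$, and since $k = 1 - \epsilon_0 \in (0,1]$ we get $k P_0[l \le ku] < 1$, so $f'(u) < 0$. To avoid technicalities with Leibniz, I can instead argue directly: for $u_1 < u_2$ with $ku_2 < u_1$, a splitting of the integral gives
\begin{equation*}
f(u_2) - f(u_1) = k(u_2 - u_1)P_0[l \le ku_1] + \int_{ku_1 < l \le ku_2}(ku_2 - l)f_0\,\mathrm{d}\mu - (u_2 - u_1),
\end{equation*}
and bounding $(ku_2 - l) \le k(u_2 - u_1)$ on the middle set yields $f(u_2) - f(u_1) \le (u_2 - u_1)(kP_0[l \le ku_2] - 1) < 0$.

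For the limits as $u \uparrow \infty$, observe that $P_0[l \le ku] \to 1$ and $P_1[l \le ku] \to 1$ in either regime. When $k < 1$, the identity $f(u) = (k-1)u + [kuP_0[l \le ku] - ku] - [P_1[l \le ku] - 1]$ shows $f(u)/u \to k - 1 < 0$, hence $f(u) \to -\infty$. When $k = 1$, I rewrite
\begin{equation*}
f(u) \;=\; P_1[l > u] \;-\; u\, P_0[l > u],
\end{equation*}
and apply Markov's inequality $u P_0[l > u] \le \int_{l > u} l f_0 \,\mathrm{d}\mu = P_1[l > u]$ to get $0 \le f(u) \le P_1[l > u]$. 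Since $P_1$ is a probability measure, $P_1[l > u] \to 0$ by dominated convergence, so $f(u) \to 0$.

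The main obstacle, as I see it, is rigorously justifying the Leibniz differentiation in the monotonicity step, since it silently uses a density for the law of $l$; the direct two-point comparison above is the safer route and works verbatim even if the law of $l$ has atoms. Everything else is bookkeeping: the reformulation via $P_1 = \int l\, \mathrm{d}F_0$ unifies the arithmetic, and the same Markov bound that controls the $k=1$ limit also underlies why the boundary term in the monotonicity argument behaves so cleanly.
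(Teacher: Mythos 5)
Your proof is correct and follows essentially the same route as the paper's: the same rewriting of $f$ as $\int_{\{l\leq ku\}}(ku-l)f_0\,\mathrm{d}\mu-u+1$ and the same two-point difference bound $f(u_2)-f(u_1)\leq (u_2-u_1)\left(kP_0[l\leq ku_2]-1\right)$ yielding continuity (via the Lipschitz sandwich) and strict monotonicity. You additionally spell out the $f(u)=1-u$ case and give an explicit Markov-inequality argument for the $k=1$ limit, both of which the paper leaves implicit; the only blemish is the reuse of the symbol $u_1$ for both a generic point and $\operatorname{ess\,sup}_{[\mu]} l$, which should be renamed.
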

\begin{proof}
we have
\begin{align}
f(u)=&kuP_0[l\leq ku]-P_1[l\leq ku]-u+1\nonumber\\
=&\int_{\{l\leq ku\}}(ku-l)p_0\mbox{d}\mu-u+1.
\end{align}
Thus,
\begin{align}
f(u+\Delta)-f(u)&=\int_{\{ku< l\leq k(u+\Delta)\}}\left(ku+k\Delta-l\right)p_0\mbox{d}\mu\nonumber\\
&+\Delta\left(k\int_{\{l\leq ku\}}p_0\mbox{d}\mu-1\right).
\end{align}
It then follows that
\begin{align}
f(u+\Delta)-f(u)&\leq k\Delta P_0[ku<l\leq k(u+\Delta)]\nonumber\\
&+k\Delta P_0[l\leq ku]-\Delta\nonumber\\
&=\Delta\left(k P_0[l\leq k(u+\Delta)]-1\right)\leq 0
\end{align}
for any positive $\Delta$ and for all $u\geq 0$. Since $f(u+\Delta)-f(u)\geq \Delta\left(k P_0[l<ku]-1\right)\geq -\Delta$, the conclusion is that
$0\geq f(u+\Delta)-f(u)\geq -\Delta$ from where continuity and monotonicity follow. For $ku\leq k(u+\Delta)<u_1$, it also follows that
$f(u+\Delta)-f(u)<0$, hence, $f(u)$ is strictly decreasing, tends to $-\infty$ for $k<1$ and tends to $0$ for $k=1$ as $u\uparrow \infty$.\\
\end{proof}
Lemma~\ref{lemma34} implies that $\epsilon_1$ can be determined uniquely for all $0 \leq\epsilon_0< 1$.
Moreover, Lemma~\ref{lemma34} extends to the case when $\epsilon_1$ is known and $\epsilon_0$ is variable due to the duality of the parameters, $\epsilon_0$ and $\epsilon_1$.

\section{Fixed sample size tests}\label{section4}
The robust version of the likelihood ratio test with respect to the uncertainty model \eqref{eq7} can be generalized to $n$ independent samples, i.e.
\begin{equation}\label{eq43}
\hat{l}(\mathbf{y})=\prod_{i=1}^n \hat{l}(y_i)\stackrel{{\cal{H}}_1}{\underset{{\cal{H}}_0}{\gtrless}}\gamma,
\end{equation}
which is equivalent to the nominal likelihood ratio test
\begin{equation}\label{eq44}
\prod_{i=1}^n l(y_i)\stackrel{{\cal{H}}_1}{\underset{{\cal{H}}_0}{\gtrless}}\left(\frac{l_u}{l_l}\right)^{\sum_{i=1}^n \hat{\delta}_i(y)}(l_l)^{n}\gamma
\end{equation}
and similarly in the logarithmic scale
\begin{equation}\label{equation58}
\sum_{i=1}^ n \ln \left(l(y_i)(l_l/l_u)^{\hat{\delta}_i(y)}\right)\stackrel{{\cal{H}}_1}{\underset{{\cal{H}}_0}{\gtrless}}n\ln l_l,
\end{equation}
for $\gamma=1$. Given the upper and lower thresholds, $l_l$ and $l_u$, if $\sum_{i=1}^n \hat{\delta}(y_i)\approx 0$, the original threshold of the nominal test is moved from $0$ to $n\ln l_l$, increasing the false alarm probability. Similarly, if $\sum_{i=1}^n \hat{\delta}(y_i)\approx 1$, the original threshold of the nominal test is moved to $n\ln l_u$, which increases the miss detection probability. Let $\mathbf{y}=[y_1,\ldots,y_n]$ be the observation vector and $b=1$. Assume that there are $n_1$ (and $n_2$) observations in $\mathbf{y}$ whose likelihood ratios are clipped to ($c_l$) (and $c_u$), respectively. Then, Huber's clipped likelihood ratio test can be represented in the log domain as
\begin{equation}\label{equation59}
\sum_{i=1}^{n-n_1-n_2} \ln \left(l(y_i)\right)\stackrel{{\cal{H}}_1}{\underset{{\cal{H}}_0}{\gtrless}}-(n_1\ln c_u+n_2\ln c_l).
\end{equation}
Eventually, the robust test based on the composite model \eqref{eqfg} can be given by
\begin{align}\label{equation60}
\sum_{i=1}^{n-n_1-n_2} \ln \left(l(y_i){l_l}^{1-\delta_i(y)}{l_u}^{-\hat{\delta}_i(y)}\right)\stackrel{{\cal{H}}_1}{\underset{{\cal{H}}_0}{\gtrless}}-(n_1\ln c_u+n_2\ln c_l).
\end{align}
where $n_1$ and $n_2$ are now due to clipping of the likelihood ratio given by \eqref{equation26}. The composite test combines the robustness properties of both the clipped likelihood ratio test as well as the robust test for modeling errors. Single sample robust tests are extended to multiple samples through multiplication of the likelihood ratios due to the independency of every measurable set of observations. Unlike Huber's robust test, there is no stochastic ordering for the LFDs of modeling errors. Hence, the composite model can be expected to be robust, but minimax robustness is not guaranteed for $n>1$.
\subsection{Asymptotic performance analysis}
Large deviations theory can be used to analyze the asymptotic performance of the robust tests. Consider the following theorem by Cram\'{e}r \cite{Cramer}:
\begin{thm}[Cram\'{e}r]\label{cramer}
Let $(Y_i)_{i\geq 1}$ be a sequence of i.i.d. random variables, $S_n=\frac{1}{n}\sum_{i=1}^n Y_i$ be their average sum and $M_{Y_1}(u):=E[e^{uY_1}]<\infty$ be the moment generating function of the r.v. $Y_1$. Then, for all $t>E[Y_1]$
\begin{equation}
\lim_{n\rightarrow \infty}\frac{1}{n}\ln P(S_n\geq t)=-I(t)
\end{equation}
where the rate function $I$ is defined by
\begin{equation}\label{equation65}
I(t):=\sup_{u}\left(tu-\ln M_{Y_1}(u)\right),
\end{equation}
which is the Legendre transform of the log moment generating function.
\end{thm}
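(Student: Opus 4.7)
The plan is to sandwich $\frac{1}{n}\ln P(S_n \geq t)$ between matching upper and lower bounds that both tend to $-I(t)$. Throughout, write $\Lambda(u) := \ln M_{Y_1}(u)$, which is convex and smooth on the interior of its (nonempty) domain. Since $t > \mathrm{E}[Y_1] = \Lambda'(0)$, the supremum in \eqref{equation65} is attained at some $u^\star > 0$ with $\Lambda'(u^\star) = t$, yielding $I(t) = u^\star t - \Lambda(u^\star) > 0$. I shall assume $M_{Y_1}$ is finite in an open neighborhood of $u^\star$, which is the generic case of interest.

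For the upper bound I would use Markov's inequality in exponentiated form: for any $u \geq 0$, independence gives
\begin{equation*}
P(S_n \geq t) = P\!\left(e^{u\sum_i Y_i} \geq e^{unt}\right) \leq e^{-unt}\,\mathrm{E}[e^{u\sum_i Y_i}] = e^{-n(ut - \Lambda(u))}.
\end{equation*}
Minimizing the right-hand side over $u \geq 0$ (the maximizer $u^\star$ is positive, so restricting to $u \geq 0$ costs nothing) delivers $\frac{1}{n}\ln P(S_n \geq t) \leq -I(t)$ for every finite $n$.

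The substantive step is the matching lower bound, obtained by an exponential tilt (Esscher transform). Define a tilted law on $\mathbb{R}$ via the Radon--Nikodym derivative $\mathrm{d}\tilde{P}/\mathrm{d}P(y) = e^{u^\star y - \Lambda(u^\star)}$; under the product measure $\tilde{P}^n$, the $(Y_i)$ remain i.i.d., now with mean exactly $t$ and finite variance. Changing measure on the sub-event $\{t \leq S_n \leq t+\epsilon\} \subseteq \{S_n \geq t\}$,
\begin{equation*}
P(S_n \geq t) \geq \mathrm{E}_{\tilde{P}^n}\!\left[e^{-u^\star n S_n + n\Lambda(u^\star)} \mathbf{1}_{\{t \leq S_n \leq t+\epsilon\}}\right] \geq e^{-n(u^\star(t+\epsilon) - \Lambda(u^\star))}\,\tilde{P}^n[\,t \leq S_n \leq t+\epsilon\,].
\end{equation*}
Because the tilted mean equals $t$, the central limit theorem (applied under $\tilde{P}^n$) forces $\tilde{P}^n[t \leq S_n \leq t+\epsilon] \to 1/2$, so $\liminf_n \frac{1}{n}\ln P(S_n \geq t) \geq -I(t) - u^\star \epsilon$, and letting $\epsilon \downarrow 0$ closes the gap.

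The main obstacle is justifying the tilt when $u^\star$ lies on the boundary of the domain of $\Lambda$, or when $M_{Y_1}$ blows up near $u^\star$: then $\tilde{P}$ is not well defined directly. The standard workaround is a truncation argument, replacing $Y_1$ by $Y_1 \mathbf{1}_{\{|Y_1| \leq M\}}$, applying the bounded-variable version of the argument above, and recovering the full rate function via monotone/Fatou convergence as $M \to \infty$. For the present paper this technicality can be bypassed, since the statistics of interest are log-likelihood ratios $\ln \hat{l}(Y_i)$ of the robust tests, which are uniformly bounded by $\ln(l_u/l_l)$ or by the clipping thresholds $\ln(bc_u), \ln(bc_l)$ of the composite rule, so $M_{Y_1}$ is automatically entire and the lower bound applies verbatim.
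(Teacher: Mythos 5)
The paper does not prove this theorem at all: it is stated as a classical result and attributed to Cram\'{e}r via a citation, so there is no in-paper argument to compare against. Your proof is the standard and correct one: the Chernoff--Markov bound $P(S_n\geq t)\leq e^{-n(ut-\Lambda(u))}$ optimized over $u\geq 0$ for the upper bound (and you correctly note that restricting to $u\geq 0$ is free because $t>\mathrm{E}[Y_1]$ forces the maximizer to be nonnegative, via $\Lambda(u)\geq u\,\mathrm{E}[Y_1]$), and the Esscher tilt to mean $t$ plus the central limit theorem on the slab $\{t\leq S_n\leq t+\epsilon\}$ for the matching lower bound. The two points that genuinely require care are the ones you flag: the existence of an interior maximizer $u^\star$ with $\Lambda'(u^\star)=t$ (which can fail when $t$ exceeds the essential supremum of $Y_1$ or when $u^\star$ sits on the boundary of the domain of $\Lambda$), and the nondegeneracy needed for the CLT step. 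Your observation that the statistics actually fed into this theorem in the paper are clipped log-likelihood ratios, bounded by $\ln(l_u/l_l)$ or the clipping thresholds, so that $M_{Y_1}$ is entire and the tilt is always legitimate, is exactly the right way to dispose of these technicalities for the intended application; the truncation remark covers the general case. In short, the proposal is a complete and correct proof of a statement the paper takes on faith, and it supplies precisely the regularity discussion one would want if the theorem were to be proved rather than cited.
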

\begin{rem}\label{rem2}
Theorem \ref{cramer} implies
\begin{equation}
\lim_{n\rightarrow \infty}\frac{1}{n}\ln P(S_n< t)=-I(t)
\end{equation}
for all $t<E[Y_1]$. To see this, take $X_i=-Y_i$ and consider
\begin{equation}
P\left(\frac{1}{n}\sum_{i=1}^n X_i>-t\right).
\end{equation}
Applying Cram\'{e}r's theorem to the r.v. $X_1$ and the threshold $-t$, it follows that $M_{Y_1}(u)=M_{X_1}(-u)$ and
\begin{equation}\label{equation655}
I(t):=\sup_{u}\left(tu-\ln M_{Y_1}(u)\right)=\sup_{u}\left(-tu-\ln M_{X_1}(-u)\right)
\end{equation}
\end{rem}
Let $S_n^*:=\frac{1}{n}\sum_{i=1}^n \ln \hat{l}(Y_i)$ with $\hat{l}(Y_i)=\hat{q}_1(Y_i)/\hat{q}_0(Y_i)$ for $Y_i\sim Q_0$ under $\mathcal{H}_0$ and for $Y_i\sim Q_1$ under $\mathcal{H}_1$ for all $i\in\{0,\ldots,n\}$. Furthermore, let the first and second type of error probabilities defined to be $P_E^0(t)=P(S_n^*\geq t)$ and $P_E^1(t)=P(S_n^*<t)$. Then, for all $E_{Q_0}[\hat{l}(Y_1)]<t<E_{Q_1}[\hat{l}(Y_1)]$ from Theorem~\ref{cramer} and Remark~\ref{rem2},
\begin{equation}\label{equation656}
\lim_{n\rightarrow \infty}\frac{1}{n}\ln P_E^j(t)=-I_j(t)\quad j=0,1,
\end{equation}
where
\begin{equation}\label{equation657}
I_j(t):=\sup_{u}\left(tu-\ln M_{Y_1}^j(u)\right)\quad j=0,1,
\end{equation}
with
\begin{equation}\label{equation658}
M_{Y_1}^j(u)=\int_{\mathbb{R}}\hat{l}(y)q_j(y)\mathrm{d}y\quad j=0,1.
\end{equation}
\begin{rem}\label{rem3}
Interestingly, if $\hat{q}_1=q_1$ and $\hat{q}_0=q_0$, the parametric curve $(\varepsilon_0(u),\varepsilon_1(u))$ for $0 \leq u\leq 1$, \eqref{eq32} with $f_0:=q_0$ and $f_1:=q_1$ implies $(I_0(t),I_1(t))$ for all $E_{Q_0}[\hat{l}(Y_1)]<t<E_{Q_1}[\hat{l}(Y_1)]$. To prove this claim, observe that in this case we have $M_{Y_1}^1(u)=M_{Y_1}^0(u+1)$. Applying this result to \eqref{equation657}, taking the derivative of $tu-\ln M_{Y_1}^j(u)$ with respect to $u$, and rewriting $I_j$ in terms of maximizing $u$ gives \eqref{eq32} with the aforementioned substitutions. Since the mapping from $0 \leq u\leq 1$ to $E_{Q_0}[\hat{l}(Y_1)]<t<E_{Q_1}[\hat{l}(Y_i)]$ is bijective, as the derivative of a convex function $\ln M_{Y_1}^j(u)$ \cite[p.77]{levy} is increasing, the proof is complete.
\end{rem}

\subsection{Limiting tests}
\subsubsection{Limiting (m)-test}
The limiting case, $\lim_{l_l\rightarrow \inf l}$ and $\lim_{l_u\rightarrow \sup l}$, is of particular interest.
For a single sample, the test becomes a pure randomized test having a success probability $\delta$ which increases with $l$ \eqref{equation25}. For $n$ independent samples, assume $l_l:=1/l_u$ and consider the normalization $\ln\hat{l}^{'}(y)=(\ln l_u-\ln \hat{l}(y))/(\ln l_u-\ln l_l)$. Then, as $l_l\downarrow 0$ and $l_u\uparrow\infty$, the test statistic $\ln\hat{l}^{'}_n(y)=\sum_{i=1}^n \ln\hat{l}^{'}(y_i)$ tends to $\sum_{i=1}^n \hat{\delta}(y_i)$, which is the \it soft \rm version of the sign test.
\subsubsection{Limiting (h)-test}
The limiting test for Huber's clipped likelihood ratio test is known to be the sign test \cite{hube65}.
\subsubsection{Limiting (a)-test}
The limiting asymptotically robust test is again a likelihood ratio test with the threshold determined by $u+v\rightarrow 1$ in \eqref{equation62q}.
\section{Robust sequential probability ratio test}\label{section5}
Sequential probability ratio tests (SPRT)s can be preferable over fixed sample size tests due to their strong optimality properties \cite{levy}. Let $S_n=\sum_{i=1}^n Y_i$. Then, for given target error probabilities of the first and second kind, $\alpha$ and $\beta$ respectively, by Wald \cite{wald}, there exist an upper threshold $t_u>1$ and a lower threshold $0<t_l<1$ such that SPRT continues taking another sample if $t_l<S_n<t_u$,
terminates and decides for $\mathcal{H}_0$ if $S_n<t_l$ and decides for the alternative hypothesis $\mathcal{H}_1$ if $S_n>t_u$, for the first
time $N=\min\{n:S_n>t_u\,\, \mbox{or}\,\,S_n<t_l\}$. Furthermore, let the binary r.v. $\upsilon$ denote the decision of the sequential test, i.e. $\upsilon=1$
to decide for $\mathcal{H}_1$ and $\upsilon=0$ for $\mathcal{H}_0$. Similar to the fixed sample size test, a robust version of the sequential test can be defined in terms of the nominal likelihood ratios and modified thresholds
\begin{equation}
m(y_i)\ln t_l<\sum_{i=1}^n \ln l(y_i)<m(y_i)\ln t_u
\end{equation}
with
\begin{equation}
m(y_i)=\ln l_l\left(n-\sum_{i=1}^n \hat{\delta}(y_i)\right)+\ln l_u\sum_{i=1}^n \hat{\delta}(y_i),
\end{equation}
for the (m)-test. Extensions to the (h)-test as well as to the (c)-test for the function $m$ follow in a straightforward manner from \eqref{equation59} and \eqref{equation60}.
However, it can be observed that all three robust tests are still some subsets of a possible design which considers two possibly different functions
$m_0:\Omega\mapsto\mathbb{R}$ and $m_1:\Omega\mapsto\mathbb{R}$ as multiplicands to the lower and upper thresholds. Hence, it can be concluded that a general design of a robust sequential test is a design of two (random) functions $m_0$ and $m_1$ such that both the expected number of samples, $E[N]$, as well as the error probabilities of the first and second kind $(\alpha,\beta)$ are bounded from above for all probability measures in the vicinity of the nominal distributions defined by a neighborhood of uncertainty. In the following, the robust tests
that have already been designed or introduced are analyzed for the sequential test.
Throughout design or analysis of a robust sequential test can be found for example in \cite{sequential}, where the probability distributions are assumed to be discrete with finite set of values, or in \cite{sequential2}, where Huber's test is rigorously shown to be asymptotically robust.\\
Let $S_n=\sum_{i=1}^n \ln \hat{l}(Y_i)$ be the test statistic where $Y_1,Y_2,\ldots,Y_n$ are again i.i.d. and $\ln \hat{l}(Y_1)$ follows a probability distribution $Q_j$, which accepts a continuous density function $q_j$, when the true hypothesis is $\mathcal{H}_j$, $j\in\{0,1\}$. Let furthermore
\begin{equation}
h_{j,n}(y)=\frac{\partial}{\partial y}P_j[S_n\leq y,S_{1},\ldots,S_{n-1}\in (\ln t_l,\ln t_u)]
\end{equation}
be the density function of $S_n$ under $\mathcal{H}_j$, $j\in\{0,1\}$ when all $S_k$, $k<n$ are in $(\ln t_l,\ln t_u)$. Hence, the distribution of $N$ can be calculated recursively by
\begin{align}
P_j[N=n]=\int_{(-\infty,\ln t_l)\cup (\ln t_u,\infty)}h_{j,n}(y)\mbox{d}y\nonumber\\
h_{j,n}(y)=\int_{\ln t_l}^{\ln t_u}h_{j,n-1}(\omega)q_j(y-\omega)\mbox{d}\omega
\end{align}
with the initial condition $h_{j,1}=q_j$, $j\in\{0,1\}$, \cite{feller}. Accordingly, it follows that
\begin{equation}\label{equation73}
\mathrm{E}_j[N]=\sum_{n=1}^\infty n P_j[N=n].
\end{equation}
Slightly modifying $P_j[N=n]$ by imposing the constraint that the test will terminate either with the rejection or acceptance $\mathcal{H}_0$,
\begin{align}\label{equation74}
P_0[N=n|\upsilon=1]=&\int_{(\ln l_u,\infty)}h_{0,n}(y)\mbox{d}y\nonumber\\
P_1[N=n|\upsilon=0]=&\int_{(-\infty,\ln l_l)}h_{1,n}(y)\mbox{d}y,
\end{align}
we get
\begin{equation}\label{equation75}
\hspace{-1mm}\alpha=\sum_{n=1}^\infty P_0[N=n|\upsilon=1],\,\,\, \beta=\sum_{n=1}^\infty P_1[N=n|\upsilon=0].
\end{equation}
Herein, $\alpha$, $\beta$, and $E_j[N]$ are all implicit functions of $(t_l,t_u)$, and $Q_j$, $j\in\{0,1\}$.
When the notations are made explicit, a minimax robust sequential test must satisfy
\begin{align}\label{equation76}
\alpha_{\hat{Q}_0}[t_l,t_u]&\geq \alpha_{Q_0}[t_l,t_u]\nonumber\\
\beta_{\hat{Q}_1}[t_l,t_u]&\geq \beta_{Q_1}[t_l,t_u]
\end{align}
and
\begin{align} \label{equation77}
\mathrm{E}_{\hat{Q}_0}[N;t_l,t_u]&\geq \mathrm{E}_{Q_0}[N;t_l,t_u]\nonumber\\
\mathrm{E}_{\hat{Q}_1}[N;t_l,t_u]&\geq \mathrm{E}_{Q_1}[N;t_l,t_u]
\end{align}
for all $(Q_0,Q_1)\in \mathcal{Q}_0\times \mathcal{Q}_1$ and for all $(t_l,t_u)$.\\
The sequential (m)-test does not satisfy \eqref{equation76} and \eqref{equation77} even asymptotically, i.e. when $t_l\rightarrow 0$ and $t_u\rightarrow \infty$, or equivalently $Q_0\rightarrow Q_1$ or $Q_1\rightarrow Q_0$. This is due to the lack of stochastic ordering between $\hat{Q}_0$ and $Q_0$, likewise between $\hat{Q}_1$ and $Q_1$.\\
Similarly, the sequential (a)-test \eqref{equation61} does not satisfy \eqref{equation76} (asymptotically) either. Again, asymptotically, the behavior of the cumulative sums are determined by their non-random drift, i.e., $S_n\sim N E_Q[\ln \hat{l}(Y)]$ or $\hat{S}_n\sim N E_{\hat{Q}}[\ln \hat{l}(Y)]$ and Wald's approximations become exact, i.e., $E[S_n]\approx \ln t_l$ under $\mathcal{H}_0$ and $E[S_n]\approx \ln t_u$ under $\mathcal{H}_1$. Combining both conditions, it follows that
\begin{align}\label{equation78}
\mathrm{E}_{Q_0}[N]\sim \frac{\ln t_l}{\mathrm{E}_{Q_0}[\ln \hat{l}(Y)]},\quad \mathrm{E}_{Q_1}[N]\sim \frac{\ln t_u}{\mathrm{E}_{Q_1}[\ln \hat{l}(Y)]}.
\end{align}
From \eqref{equation62t}, it is known that \eqref{equation61} maximizes the right hand sides of \eqref{equation78}. Therefore, the sequential (a)-test satisfies \eqref{equation77} asymptotically.\\
For the sequential (h)-test, it is known that \eqref{equation76} and \eqref{equation77} are satisfied asymptotically \cite{hube65}. Additionally in \cite{hube81}, a counterexample is given, which shows that \eqref{equation77} does not hold in general, i.e., for all $(t_l,t_u)$. In the following, it is shown that the sequential (h)-test satisfies \eqref{equation76} for all $(t_l,t_u)$.\\

\begin{thm}[Coupling]\label{prop31}
Let $(X,Y)$ be a pair of random variables on $(\Omega,{\mathscr{A}},P)$ with $X\succ_{ST}Y$. On the same probability space there exist another pair of random variables $(X^{'},Y^{'})$ such that $X^{'}=X$ in distribution, $Y^{'}=Y$ in distribution and $X^{'}\geq Y^{'}$ almost surely.
\end{thm}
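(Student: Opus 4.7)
The plan is to construct the coupling explicitly via the quantile (inverse CDF) transform, which is the standard route to Strassen's coupling theorem. Let $F_X$ and $F_Y$ denote the distribution functions of $X$ and $Y$, and define their generalized inverses
\begin{equation}
F_X^{-1}(u) := \inf\{x \in \mathbb{R} : F_X(x) \geq u\}, \qquad F_Y^{-1}(u) := \inf\{y \in \mathbb{R} : F_Y(y) \geq u\},
\end{equation}
for $u \in (0,1)$. I would then take a single auxiliary random variable $U$ uniformly distributed on $(0,1)$, and set $X' := F_X^{-1}(U)$ and $Y' := F_Y^{-1}(U)$. The whole construction hinges on $X'$ and $Y'$ being driven by the \emph{same} $U$, which is what produces the pointwise ordering.

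Next I would check the marginal distributions. Using the standard equivalence $\{u \in (0,1) : F_X^{-1}(u) \leq x\} = \{u \in (0,1) : u \leq F_X(x)\}$ and the fact that $U$ is uniform, one obtains $P[X' \leq x] = F_X(x)$ for all $x$, so $X' \stackrel{d}{=} X$; the same argument gives $Y' \stackrel{d}{=} Y$. This step is routine and does not use stochastic dominance.

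The decisive step is to turn the hypothesis $X \succ_{ST} Y$, i.e.\ $F_X(t) \leq F_Y(t)$ for every $t \in \mathbb{R}$, into the pointwise inequality $F_X^{-1}(u) \geq F_Y^{-1}(u)$ for every $u \in (0,1)$. This follows from the definition of the generalized inverse: if $F_Y^{-1}(u) = t_0$, then $F_Y(t_0) \geq u$, hence $F_X(t_0) \leq F_Y(t_0)$ does \emph{not} directly give $F_X(t_0) \geq u$, so one argues by contradiction — if $F_X^{-1}(u) < t_0$ then there would exist $s < t_0$ with $F_X(s) \geq u$, and by $F_Y(s) \geq F_X(s) \geq u$ we would have $F_Y^{-1}(u) \leq s < t_0$, a contradiction. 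Applying this with $u = U(\omega)$ gives $X'(\omega) \geq Y'(\omega)$ for all $\omega$ in the full-measure set $\{U \in (0,1)\}$, which is the claim.

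The only real obstacle is a technical one: the statement places $(X', Y')$ on the \emph{same} probability space $(\Omega, \mathscr{A}, P)$ as $(X, Y)$, but nothing in the excerpt guarantees that this space carries a uniform random variable independent of everything else. The standard remedy is to enlarge $(\Omega, \mathscr{A}, P)$ to the product with $([0,1], \mathscr{B}, \mathrm{Leb})$, which preserves the distributions of $X$ and $Y$ and provides the required $U$; since the conclusion is purely distributional plus an almost-sure inequality involving only $X'$ and $Y'$, this enlargement is harmless. Aside from this, one must be mildly careful at jumps of $F_X$ and $F_Y$, but the generalized inverse is designed precisely to handle those, and the monotonicity argument above goes through verbatim.
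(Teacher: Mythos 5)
Your proof is correct, and it rests on the same underlying idea as the paper's --- a monotone (quantile) coupling --- but the two constructions source the driving uniform variable differently, and this matters for the ``same probability space'' clause you worry about. The paper takes $X' = X$ itself and $Y' = G^{-1}(F(X))$, where $F$ and $G$ are the distribution functions of $X$ and $Y$; since the random variables in this paper are assumed to have continuous distribution functions, $F(X)$ is already uniform on $(0,1)$, so the coupling lives on the original space $(\Omega,\mathscr{A},P)$ with no enlargement, and moreover $X'$ equals $X$ pointwise rather than merely in distribution. Your version, driven by an external uniform $U$, is the more general statement (it does not need continuity of $F_X$, thanks to the generalized inverse), but it buys that generality at the cost of the product-space enlargement you describe, which strictly speaking delivers the conclusion on an extension of $(\Omega,\mathscr{A},P)$ rather than on the space itself. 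If you set $U := F_X(X)$ in your construction you recover the paper's proof verbatim. Your key monotonicity step --- that $F_X \leq F_Y$ pointwise forces $F_X^{-1}(u) \geq F_Y^{-1}(u)$ for all $u$ --- is argued correctly, and is the step the paper performs implicitly by applying the increasing map $G^{-1}$ to the inequality $F(X) \leq G(X)$. (Incidentally, the paper's displayed chain contains sign typos, writing $P[Y' \geq X'] = P[F(X)\geq G(X)] = 1$ where the inequalities should read $\leq$; your direction of the inequality is the correct one.)
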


\begin{proof}
Take $X^{'}=X$ and $Y^{'}=G^{-1}(F(X))$. Then, $X^{'}=X$ in distribution, $P[G^{-1}(F(X))\leq x]=P[F(X)\leq G(x)]=P[X\leq F^{-1}(G(x))]=F[F^{-1}G(x)]=G(x)=:P[Y\leq x]$, so $Y^{'}=Y$ in distribution and since $P[Y^{'}\geq X^{'}]=P[G^{-1}F(X)\geq X]=P[F(X)\geq G(X)]=1$, $X^{'}\geq Y^{'}$ almost surely.
\end{proof}

\begin{prop}\label{prop33}
Let $X_i$ and $Y_i$ be two continuous random variables on $\mathbb{R}$ having distribution functions $F$ and $G$, respectively and satisfying $G(y)\geq F(y)$ for all $y$. Furthermore, let $S^X_n=\sum_{i=1}^n X_i$, $S^Y_n=\sum_{i=1}^n Y_i$, $A>0$, and $B<0$. Denote $\tau_A=\inf\{n\geq 0:S_n\geq A\}$ and $\tau_B=\inf\{n\geq 0:S_n\leq B\}$ the hitting/stopping times of $S_n$ at the upper and lower thresholds respectively. Then,
\begin{equation}
P_{S_n^X}[\tau_A>\tau_B]\geq P_{S_n^Y}[\tau_A>\tau_B].
\end{equation}
\end{prop}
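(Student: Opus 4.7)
The natural approach is monotone coupling via the Coupling Theorem stated just above. Since $G\ge F$ is precisely $X\succ_{ST}Y$ in the notation of Remark~\ref{rem2}, I would construct coupled i.i.d.\ sequences $(X_i',Y_i')$ on a single probability space whose marginals match those of $(X_i,Y_i)$ and which satisfy $X_i'\ge Y_i'$ almost surely; the standard quantile coupling $X_i'=F^{-1}(U_i)$, $Y_i'=G^{-1}(U_i)$ with i.i.d.\ uniforms $U_i$ makes this explicit, and it agrees coordinatewise with the construction given in the Coupling Theorem. Summing gives $S_n^{X'}\ge S_n^{Y'}$ almost surely for every $n$.

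To transfer the pathwise domination to hitting times, I would invoke the elementary inclusions $\{S_n^{Y'}\ge A\}\subseteq\{S_n^{X'}\ge A\}$ and $\{S_n^{X'}\le B\}\subseteq\{S_n^{Y'}\le B\}$, which imply $\tau_A^{X'}\le\tau_A^{Y'}$ and $\tau_B^{X'}\ge\tau_B^{Y'}$ almost surely. Chaining these on the event $\{\tau_A^{X'}>\tau_B^{X'}\}$ yields $\tau_A^{Y'}\ge\tau_A^{X'}>\tau_B^{X'}\ge\tau_B^{Y'}$, i.e.\ the pathwise inclusion $\{\tau_A^{X'}>\tau_B^{X'}\}\subseteq\{\tau_A^{Y'}>\tau_B^{Y'}\}$; passing to probabilities and using equality in distribution to revert to the original sequences closes the argument.

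The main obstacle is precisely the direction of the resulting inequality: this coupling derivation produces $P_{S_n^X}[\tau_A>\tau_B]\le P_{S_n^Y}[\tau_A>\tau_B]$, opposite to what is claimed. The sign is probabilistically forced, because with $X$ stochastically larger the walk $S_n^X$ drifts upward more and so hits $A$ before $B$ \emph{more} often, not less. My plan would therefore be to flag this as a sign inconsistency in the hypothesis or in the labeling of $X$ and $Y$: the coupling argument above delivers the inequality in exactly the form needed once either the hypothesis is read as $F\ge G$ (equivalently $Y\succ_{ST}X$) or the roles of $X$ and $Y$ are interchanged, which is what appears to be required by the downstream application to the sequential (h)-test, where the LFD is associated with the stochastically larger log-likelihood ratio. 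Any alternative route (martingale optional stopping, reflection, induction on $n$) remains constrained by the same pathwise monotonicity and so cannot reverse the inequality without reversing the hypothesis.
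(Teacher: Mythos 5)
Your construction is exactly the paper's: monotone (quantile) coupling of the two i.i.d.\ sequences, pathwise domination $S_n^{X'}\geq S_n^{Y'}$ for all $n$, and the resulting hitting-time comparison $\tau_A^{X'}\leq\tau_A^{Y'}$, $\tau_B^{X'}\geq\tau_B^{Y'}$. Your diagnosis of the sign is also correct, and the paper's own proof tacitly concedes it: its final display, Eq.~\eqref{equation81}, establishes $P_{S_n^X}[\tau_A<\tau_B]\geq P_{S_n^Y}[\tau_A<\tau_B]$, which concerns the \emph{complementary} event to the one in the proposition statement (the two events partition the sample space up to $\{\tau_A=\tau_B=\infty\}$, which the proof excludes at the outset). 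So the inequality as printed, with the event $\{\tau_A>\tau_B\}$ and the direction $\geq$, is reversed --- a stochastically larger increment makes the walk exit at the upper boundary first \emph{more} often, as your counter-intuition argument and a trivial example with widely separated Gaussians both confirm. The version actually proved and actually used downstream is the $\{\tau_A<\tau_B\}$ form: with $X\sim\hat{Q}_0$, $Y\sim Q_0$ it gives $\alpha_{\hat{Q}_0}\geq\alpha_{Q_0}$ in \eqref{equation76}, and with the labeling $X\sim Q_1$, $Y\sim\hat{Q}_1$ it gives $\beta_{\hat{Q}_1}\geq\beta_{Q_1}$ after passing to complements. Your proof of the corrected statement is complete; there is no gap on your side, only a misstatement in the proposition itself.
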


\begin{proof}
For a well defined comparison, exclude the cases $X\not\equiv 0$ and $Y\not\equiv 0$ s.t. at least $\tau_A<\infty$ or $\tau_B<\infty$ almost surely and $\tau_A>\tau_B$ is well defined.
The argument $G(y)\geq F(y)$ for all $y$ implies $X\succ_{ST}Y$ and from Prop.~\ref{prop31}, there exists $(X^{'},Y^{'})$ such that $X^{'}=X$, $Y^{'}=Y$ in distribution and $X^{'}>Y^{'}$ almost surely (a.s.)
Consider the sequence of i.i.d. random variables $(X^{'}_n,Y^{'}_n)_{n\geq 1}$ s.t. $(X_1^{'},Y_1^{'})=(X^{'},Y^{'})$ in distribution.
Then, $(X^{'}_n)_{n\geq 1}=(X_n)_{n\geq 1}$ and $(Y^{'}_n)_{n\geq 1}=(Y_n)_{n\geq 1}$ in distribution. Defining $S^{X^{'}}_n=\sum_{i=1}^n X_i^{'}$ and $S^{Y^{'}}_n=\sum_{i=1}^n Y_i^{'}$,
we also have $S^{X^{'}}_n=S^{X}_n$ and $S^{Y^{'}}_n=S^{Y}_n$ in distribution. Since $X^{'}>Y^{'}$ a.s. and accordingly $X_i^{'}>Y_i^{'}$ a.s. for all $i$, $S^{X^{'}}_n\geq S^{Y^{'}}_n$ a.s. Let $\tau_A^{X^{'}}=\inf \{n\geq 0:S_n^{X^{'}}\geq A\}$ and define $\tau_A^{Y^{'}}$, $\tau_B^{X^{'}}$ and $\tau_B^{Y^{'}}$ in the same way. Then $S_n^{Y^{'}}\geq A$ implies $S_n^{X^{'}}\geq A$ for all $n$, so $\tau_A^{X^{'}}\leq \tau_A^{Y^{'}}$ and in the same way $\tau_B^{X^{'}}\geq \tau_B^{Y^{'}}$. Hence,
\begin{align}\label{equation81}
&P_{(S_n^X)} (\tau_A < \tau_B) = P (\tau_A^{X'} < \tau_B^{X'})\geq \nonumber\\
&P (\tau_A^{Y'} < \tau_B^{Y'}) = P_{(S_n^Y)} (\tau_A < \tau_B)
\end{align}
\end{proof}
Let $X\sim \hat{Q}_0$ and $Y\sim Q_0$, likewise $X\sim Q_1$ and $Y\sim \hat{Q}_1$ with $A=\ln t_u$ and $B=\ln t_l$.
Then, it is easy to see that \eqref{equation81} is equivalent to \eqref{equation76} for any pair $(t_l,t_u)$. This result includes not only the (h)-test, but also all tests in \cite{hube68}, \cite{hube73}.\\
For the expected number of samples, the requirement is
\begin{equation}
E[\min\{\tau_A^{X^{'}},\tau_B^{X^{'}}\}]\geq E[\min\{\tau_A^{Y^{'}},\tau_B^{Y^{'}}\}].
\end{equation}
This inequality does not have to hold in general. Intuitively, however, it is expected that it holds for the majority of the cases, especially when $t_l$ is small enough and $t_u$ is large enough.
\section{Robust estimation}\label{section6}
The composite uncertainty model given in equation \eqref{eqfg} extends to robust estimation problems. Let $f_{\boldsymbol\theta}$ be a nominal probability density function corresponding to the distribution function $F$ with parameters $\boldsymbol \theta=[\theta_1,\theta_2,\ldots,\theta_N]$. In a general estimation framework, some parameters, possibly a sub-vector of $\boldsymbol \theta$ can be estimated well whereas some other parameters might not be, possibly due to a fast change of the parameters with time or due to the random nature of the parameters whose distributions are unknown. It is also possible that the known parameters might deviate slightly from the true values depending on the nature of the application or without regarding the parametric model, the shape of the distribution might be slightly different than expected, e.g. when there is lack of data but the CLT is assumed. In such cases, we have modeling errors that go unmodeled in addition to the outliers caused by some unexpected
events. Therefore, it is desirable to design robust estimators which are not only able to deal with outliers but also with modeling errors, as given by \eqref{eqfg}.\\
To account for the composite model, let $T_n(F)$ be a functional $T_n:Y^n\mapsto\mathbb{R}$ of $\mathbb{R}^n$-valued random variable $Y^n$ with i.i.d. replicas following a certain distribution $F$, i.e., for $Y^n=[Y_1,Y_2,\ldots,Y_n]$ each pair of r.v.s $(Y_i,Y_k)$ with $i\neq k$ are i.i.d., having a distribution function $F$. Then, it is desirable that $\lim_{n\rightarrow\infty}T_n(F)=\theta$ for some parameter $\theta$ when $F$ is the nominal distribution. Let $F_{T_n}$ and $Q_{T_n}$ be the distribution functions of $T_n$ when $F$ and $Q$ are the distribution functions of $Y_1$, respectively. Then, it is also expected that for every $\epsilon>0$, there exist $\delta>0$ and an $n_0>0$, such that for all $n>n_0$ and $Q\in\mathcal{F}$, $D(F_{T_n},Q_{T_n})<\epsilon$ whenever $D(F,Q)<\delta$ for some metric $D$. This is a straightforward extension of Hampel's equicontinuity theorem of robustness for the composite uncertainty model. Accordingly, the influence function can be modified as
\begin{equation}
IF(y,T)=\lim_{\epsilon\rightarrow 0}\sup_{G\in\mathcal{G}}\frac{T((1-\epsilon)G+\epsilon\delta_x)-T(F)}{\epsilon}
\end{equation}
to account for the modeling errors in addition to the outliers. Similarly, the maximum bias as being another important metric to measure the robustness of an estimator can be obtained as
\begin{align}
b(\epsilon)=&\sup_{Q,G}|T(Q)-T(G)|\nonumber\\
=&\sup_{G,H}|T((1-\epsilon)G+\epsilon H)-T(G)|.
\end{align}
\section{Simulations}\label{simulations}
In this section, simulations are performed in order to visualize and validate the theoretical findings. Observations are assumed to be real valued. The formulations are general, therefore, the observation space can be any discrete, continuous, finite or infinite set, with slight modifications for the discrete case. It can also be extended to the multidimensional case, but for large $n$, Monte-Carlo simulations may be required in order to solve the non-linear equations, c.f., \cite{gul4}.\\In the first simulation, the composite uncertainty model \eqref{eqfg} with mean and variance shifted nominal distributions, $F_0\sim\mathcal{N}(-1,1)$ and $F_1\sim\mathcal{N}(1,2)$, and the uncertainty parameters $\{\varepsilon_0=0.15,\, \varepsilon_1=0.05, \epsilon_0=0.02,\,\epsilon_1=0.02\}$ is considered. Note that for this choice of nominal distributions, neither $\partial F_1/\partial F_0$ is monotone nor they are symmetric with respect to any point on their domain or codomain. In addition
to this, $\varepsilon_0\neq\varepsilon_1$ is chosen so that the given example is general enough for the solution of Equations \eqref{equation29} and \eqref{equation30}. Regarding the $\epsilon-$contamination part of the composite model, $\epsilon_0=\epsilon_1$ is chosen to be consistent with $\rho=1$ for the uncertainty model based on relative entropy. Accordingly, in Fig.~\ref{fig1} the LFDs together
with their nominal distributions are shown, whereas in Fig.~\ref{fig2} the log-likelihood ratios of the nominal distributions, the least favorable densities $(\hat{g}_0,\hat{g}_1)$ when $\{\varepsilon_0=0.15,\, \varepsilon_1=0.05, \epsilon_0=\epsilon_1=0\}$ and the least favorable densities $(\hat{q}_0,\hat{q}_1)$ when $\{\varepsilon_0=0.15,\, \varepsilon_1=0.05, \epsilon_0=0.02,\,\epsilon_1=0.02\}$ are shown.
\begin{figure}[ttt]
  \centering
  \psfrag{x}[t][]{$\epsilon=\epsilon_0=\epsilon_1$}
  \psfrag{y}{$y_u$}
  \centerline{\includegraphics[width=8.5cm]{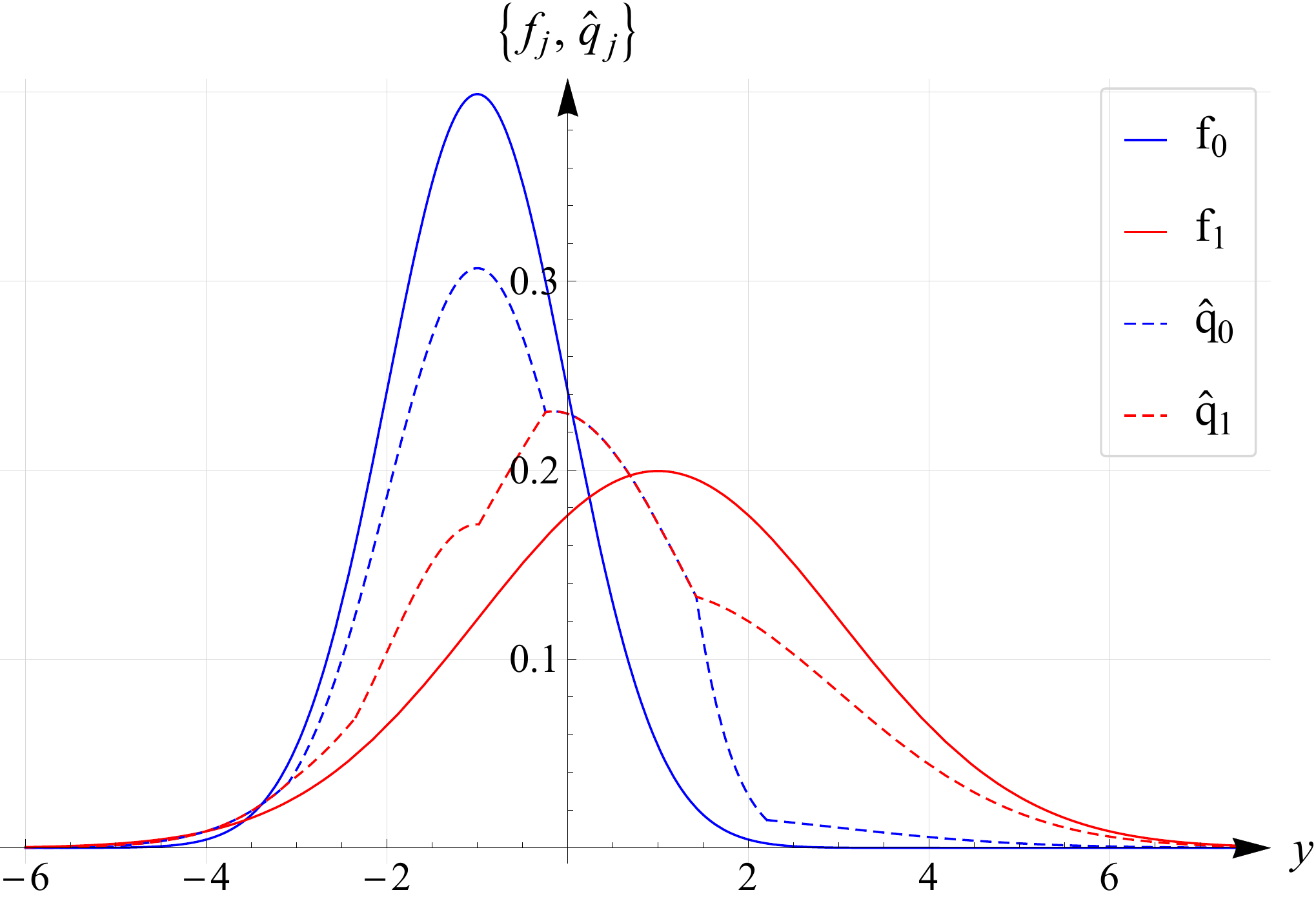}}
\vspace{-3mm}
\caption{Least favorable density functions, $\hat{q}_0$ and $\hat{q}_1$, for $\{\varepsilon_0=0.15,\, \varepsilon_1=0.05, \epsilon_0=0.02,\,\epsilon_1=0.02\}$ together with their corresponding nominal density functions, $f_0$ and $f_1$, for $F_0\sim\mathcal{N}(-1,1)$ and $F_1\sim\mathcal{N}(1,2)$.}\label{fig1}
\end{figure}
\begin{figure}[ttt]
  \centering
  \psfrag{x}[t][]{$\epsilon=\epsilon_0=\epsilon_1$}
  \psfrag{y}{$y_u$}
  \centerline{\includegraphics[width=8.5cm]{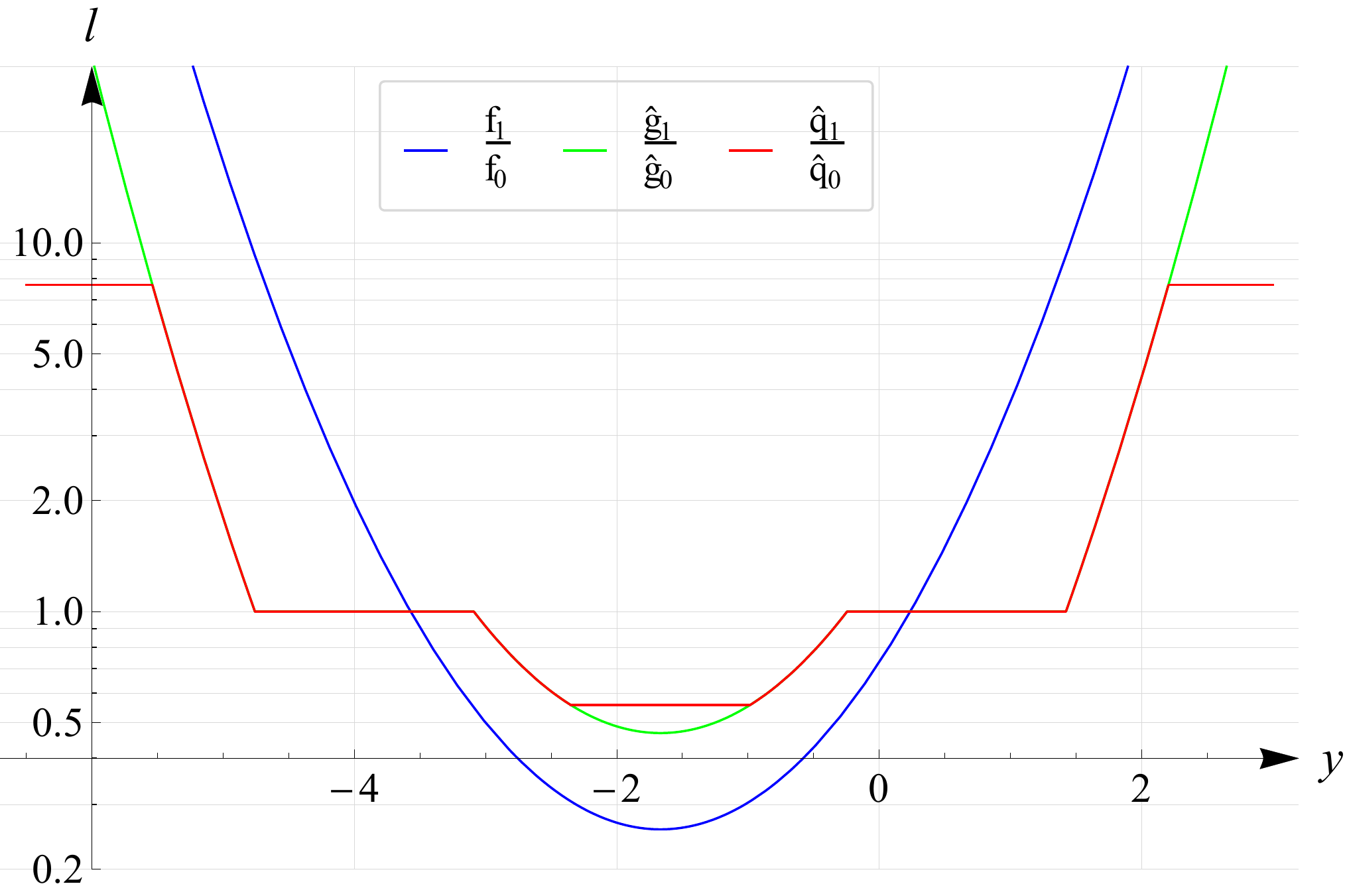}}
\vspace{-3mm}
\caption{Likelihood ratios of the nominal density functions, $f_0$ and $f_1$, least favorable density functions, $\hat{g}_0$ and $\hat{g}_1$, with $\{\varepsilon_0=0.15,\, \varepsilon_1=0.05\}$ and composite least favorable density functions, $\hat{q}_0$ and $\hat{q}_1$, with $\{\varepsilon_0=0.15,\, \varepsilon_1=0.05,\epsilon_0=0.02,\,\epsilon_1=0.02\}$ for $F_0\sim\mathcal{N}(-1,1)$ and $F_1\sim\mathcal{N}(1,2)$.}\label{fig2}
\end{figure}\\In the second simulation, the mean shifted Gaussian distributions $F_0\sim\mathcal{N}(-1,1)$ and $F_1\sim\mathcal{N}(1,1)$ are considered when the closed balls are formed with respect to the symmetrized $\chi^2$ distance with $(\varepsilon_0=\varepsilon_1=0.08)$ and a relative entropy distance $D$ with $(\varepsilon_0=\varepsilon_1\approx 0.0087)$. The parameters are chosen such, such that the LFDs resulting from both distances have equal relative entropy relative to the nominal density functions. Figure~\ref{fig3} illustrates $\hat{l}/l$, the ratio of the likelihood ratios. It can be seen that there is a significant difference when the $\chi^2$ distance is considered instead of the KL-divergence. While this ratio tends to $1$ as $\hat{\delta}\rightarrow 0$ and $\hat{\delta}\rightarrow 1$ for the symmetrized $\chi^2$ distance, meaning that the tails of the density functions are preserved, it is a constant $l_l<1$ when $\hat{\delta}=0$ and another constant $l_u>1$ when $\hat{\delta}=1$ for the KL-divergence.
\begin{figure}[ttt]
\centering
\psfrag{x}[t][]{$\epsilon=\epsilon_0=\epsilon_1$}
\psfrag{y}{$y_u$}
\centerline{\includegraphics[width=8.5cm]{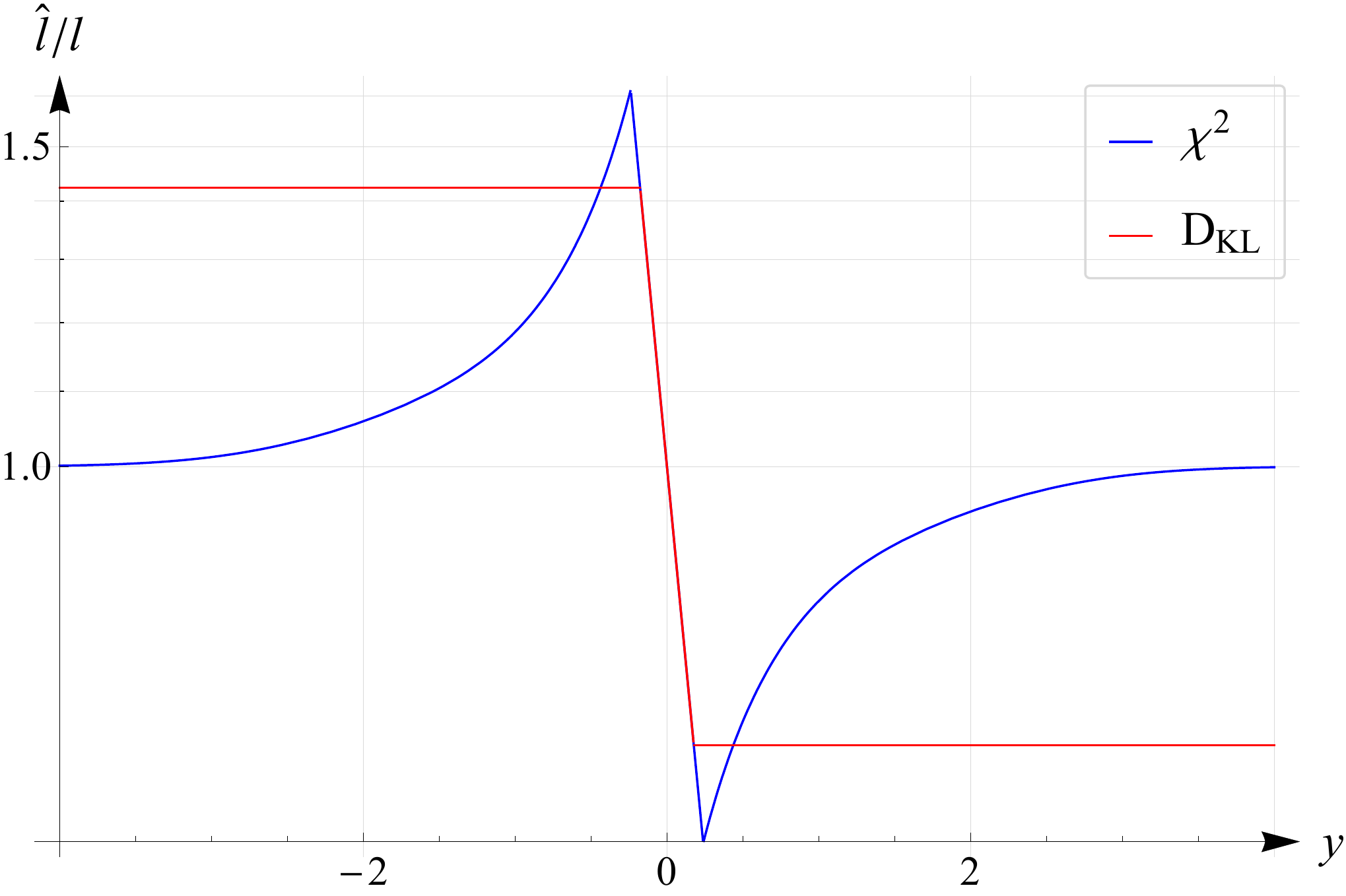}}
\vspace{-3mm}
\caption{The ratio of the likelihood ratio of the least favorable densities $\hat{l}=\hat{g}_1/\hat{g}_0$ to the likelihood ratio of the nominal distributions $l=f_1/f_0$ for $F_0\sim\mathcal{N}(-1,1)$ and $F_1\sim\mathcal{N}(1,1)$, when the LFDs are based on the symmetrized $\chi^2$ distance with $\varepsilon=\varepsilon_0=\varepsilon_1\approx 0.08$ and when the LFDs are based on the KL-divergence with $\varepsilon\approx 0.0087$.}\label{fig3}
\end{figure}\\
In the third simulation, again the same mean shifted Gaussian distributions are considered. Of interest is the curvature of the maximum robustness parameters for the (h)-test \eqref{equation51} versus the (m)-test \eqref{eq32}. Figure~\ref{fig4} illustrates the outcome of this simulation.
\begin{figure}[ttt]
  \centering
  \psfrag{x}[t][]{$\epsilon=\epsilon_0=\epsilon_1$}
  \psfrag{y}{$y_u$}
  \centerline{\includegraphics[width=8.5cm]{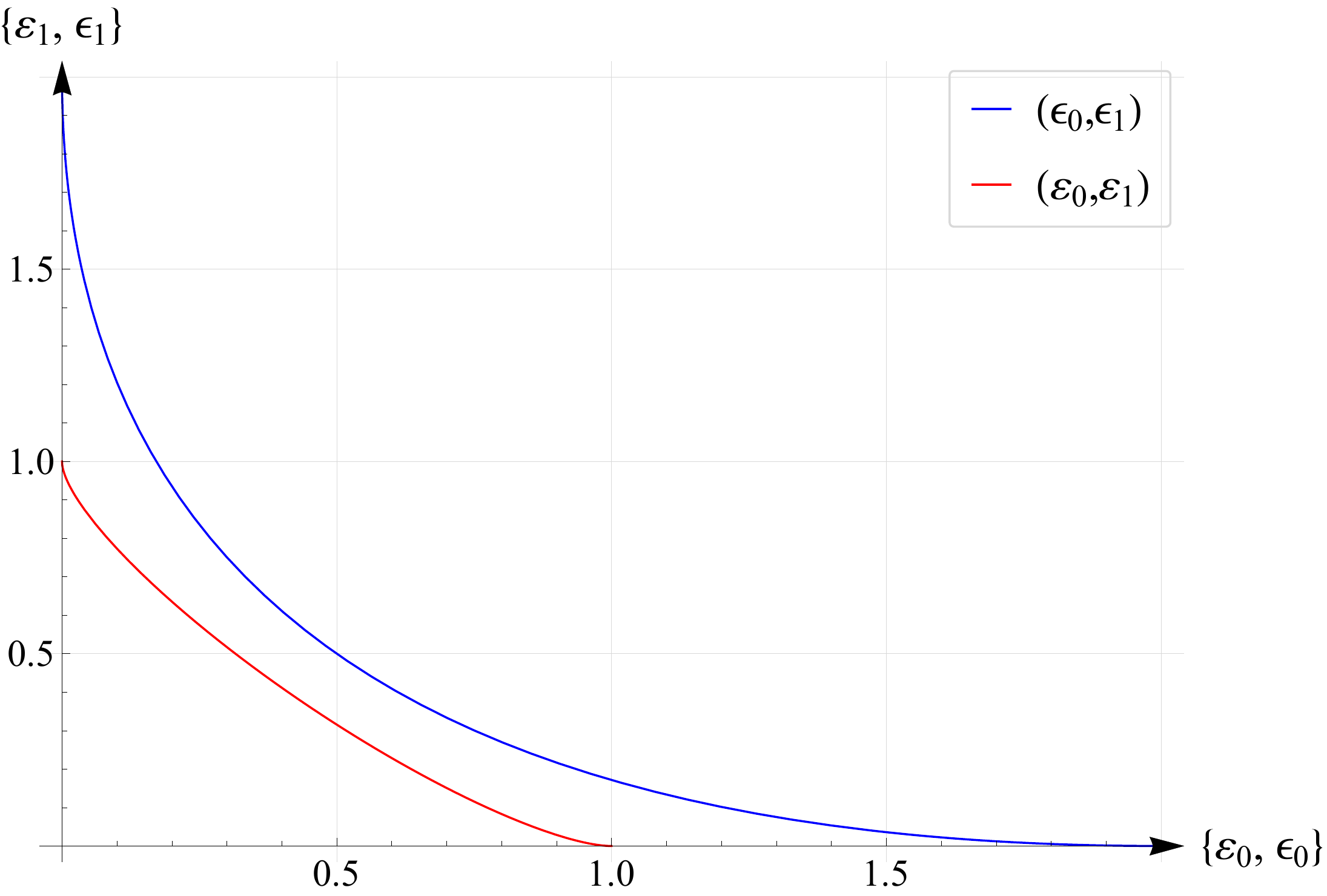}}
\vspace{-3mm}
\caption{Maximum achievable robustness parameters with respect to the (h)-test and the (m)-test when the nominal distributions are $F_0\sim\mathcal{N}(-1,1)$ and $F_1\sim\mathcal{N}(1,1)$.\hspace{-3mm}}\label{fig4}
\end{figure}\\
In the fourth simulation, asymptotic decrease rates, $I_0$ and $I_1$ \eqref{equation657}, of the type I and type II error probabilities are considered. The log-likelihood ratio test is built based on LFDs of the composite model $\hat{l}(Y)=\hat{q}_1(Y)/\hat{q}_0(Y)$  with parameters $\varepsilon_0=\varepsilon_1=0.01$ and $\epsilon_0=\epsilon_1=0.01$. The r.v.s $Y_0,Y_1,\ldots,Y_n$, which are consistent with the observations $y_0,y_1,\ldots,y_n$, are i.i.d. The simulation is performed for six different distributions of $Y_1$ for each hypothesis. Under $\mathcal{H}_j$, $Y_1$ is distributed as one of the following distributions: the nominal distribution $F_i$ denoted by ($n$), LFD $\hat{Q}_j$ with parameters $\varepsilon_0=\varepsilon_1=0.01$ denoted by ($m$), LFD $\hat{Q}_i$ with parameters $\epsilon_0=\epsilon_1=0.01$ denoted by ($h$), LFD $\bar{G}_i$ of the asymptotically robust test with $\varepsilon_0=\varepsilon_1=0.01$ denoted by ($a$), LFD of the composite model $\hat{Q}_j$ with parameters $\varepsilon_0=\varepsilon_1=0.01$ and $\epsilon_0=\epsilon_1=0.01$ denoted by ($c$), $j\in\{0,1\}$. For comparison reasons, the sixth LFD is
introduced with respect to the composite uncertainty set. The LFD of the asymptotically robust test $\bar{G}_j$ for $\varepsilon_0=\varepsilon_1=0.01$
are first obtained. Then, $\hat{Q}_j$ with $\epsilon_0=\epsilon_1=0.01$ is determined when $\bar{G}_j$ is the nominal distribution, $j\in\{0,1\}$.
This test is denoted by $(c^*)$. Figure~\ref{fig5} and Fig.~\ref{fig6} illustrate $I_0$ and $I_1$ when $Y_1$ follows various distributions, as described above. The notation $|_a^b$ indicates that the robust test is performed by the LFDs of the (a)-test and the observations follow the LFD of (b)-test. In general, the composite test is not claimed to be asymptotically minimax robust since the LFDs of the (m)-test are not asymptotically robust. However, for this example, the (c)-test asymptotically does not degrade its performance for all observation models, when $t$ is small enough in its allowable limits. This test corresponds to the type I Neyman-Pearson test, cf.~\cite{levy}.
\begin{figure}[ttt]
  \centering
  \psfrag{x}[t][]{$\epsilon=\epsilon_0=\epsilon_1$}
  \psfrag{y}{$y_u$}
  \centerline{\includegraphics[width=8.5cm]{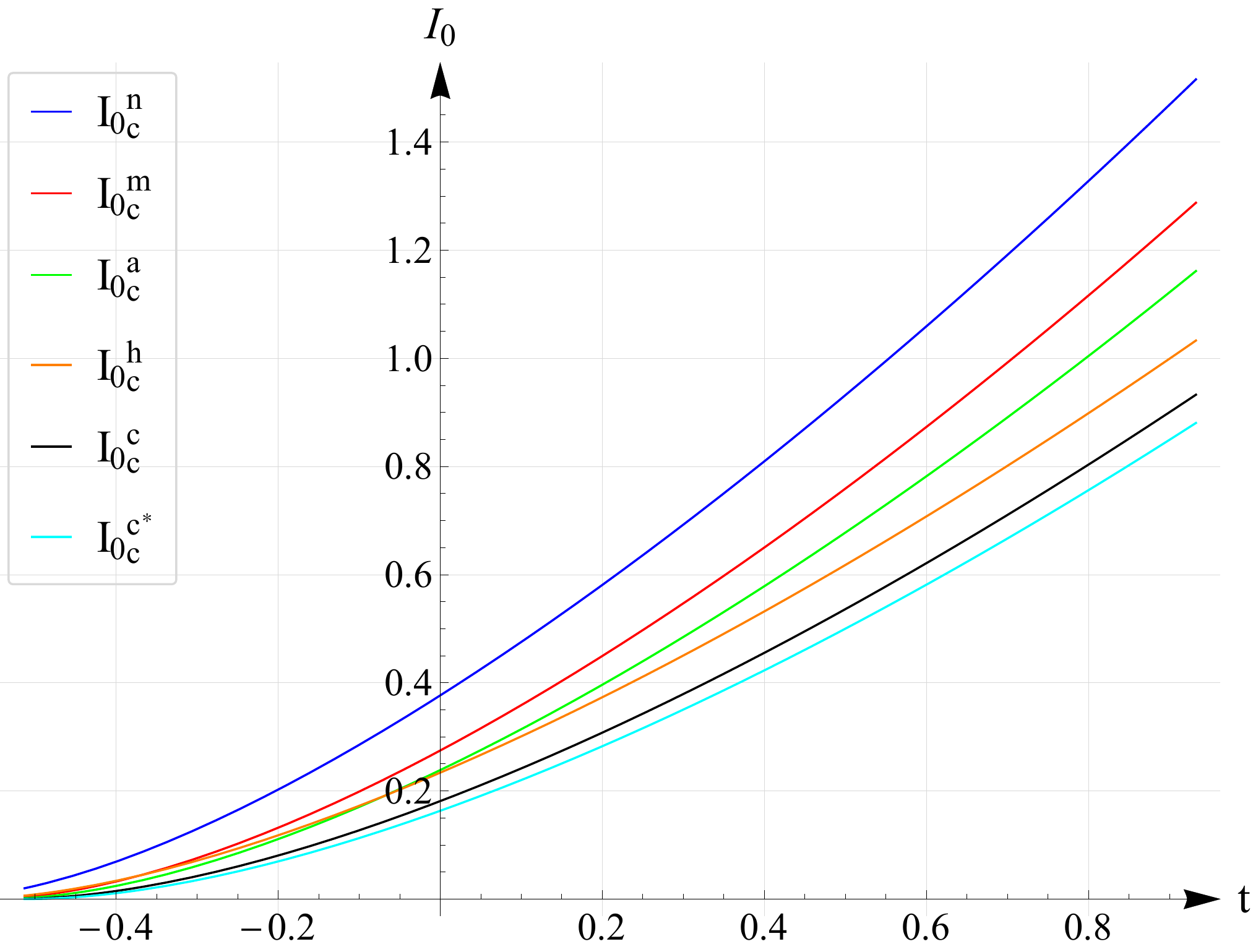}}
\vspace{-3mm}
\caption{Asymptotic decrease rate $I_0$ of the composite test when the observations follow the nominal distributions $F_0\sim\mathcal{N}(-1,1)$ and $F_1\sim\mathcal{N}(1,2)$, LFDs of the (m)-test, LFDs of the (a)-test, $\bar{g}_0$ and $\bar{g}_1$, with $\varepsilon_0=\varepsilon_1=0.01$, LFDs of the (h)-test, $\hat{q}_0$ and $\hat{q}_1$, with $\{\epsilon_0=\epsilon_1=0.01,\varepsilon_0=\varepsilon_1=0\}$, LFDs of the (c)-test, $\hat{q}_0$ and $\hat{q}_1$, with $\{\epsilon_0=\epsilon_1=0.01,\varepsilon_0=\varepsilon_1=0.01\}$ and LFDs of the ($c^*$)-test .\hspace{-3mm}}\label{fig5}
\end{figure}
\begin{figure}[ttt]
  \centering
  \psfrag{x}[t][]{$\epsilon=\epsilon_0=\epsilon_1$}
  \psfrag{y}{$y_u$}
  \centerline{\includegraphics[width=8.5cm]{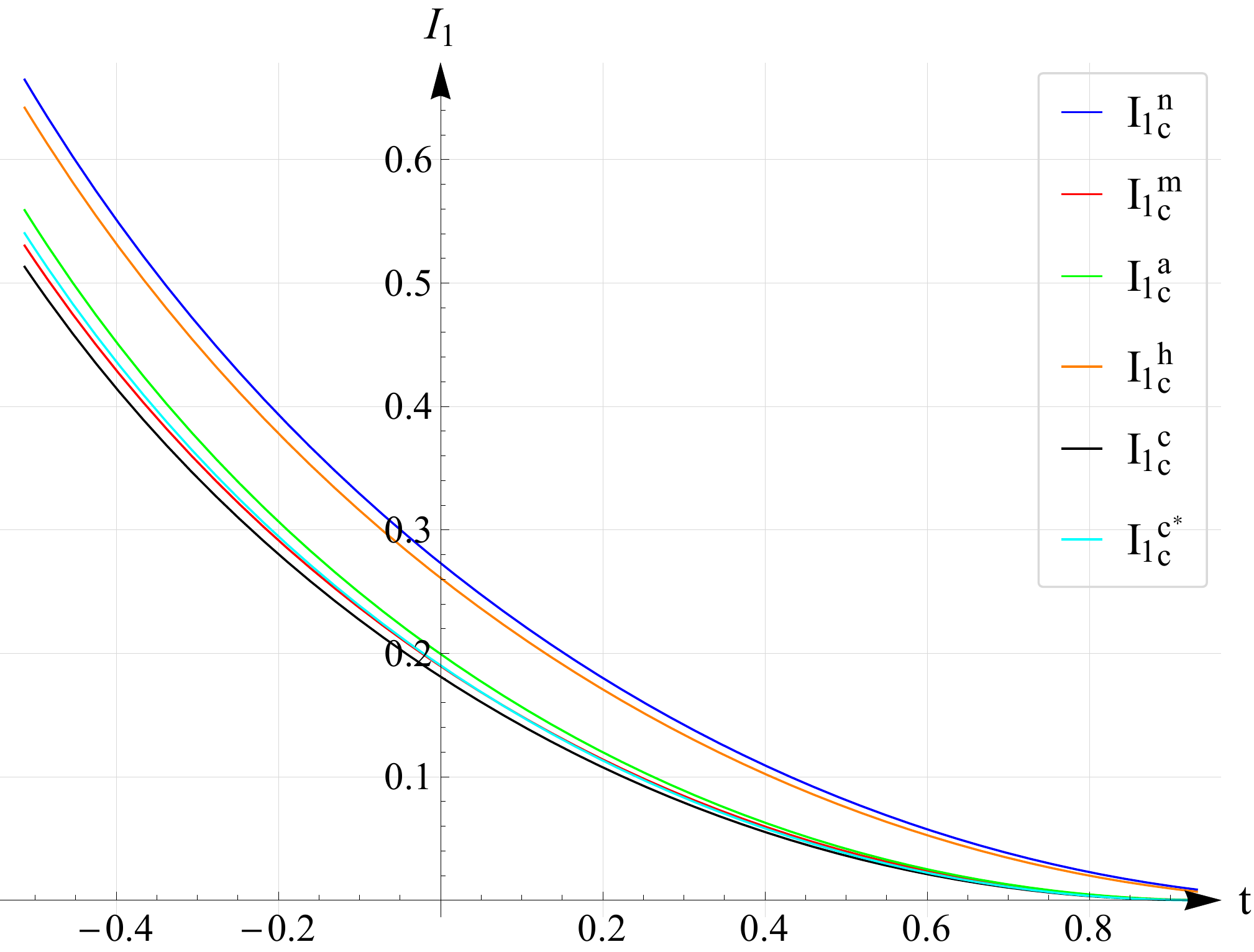}}
\vspace{-3mm}
\caption{Asymptotic decrease rate $I_1$ of the composite test when the observations follow nominal distributions $F_0\sim\mathcal{N}(-1,1)$ and $F_1\sim\mathcal{N}(1,2)$ LFDs of the (m)-test, LFDs of the (a)-test, $\bar{g}_0$ and $\bar{g}_1$, with $\varepsilon_0=\varepsilon_1=0.01$, LFDs of the (h)-test, $\hat{q}_0$ and $\hat{q}_1$, with $\{\epsilon_0=\epsilon_1=0.01,\varepsilon_0=\varepsilon_1=0\}$, LFDs of the (c)-test, $\hat{q}_0$ and $\hat{q}_1$, with $\{\epsilon_0=\epsilon_1=0.01,\varepsilon_0=\varepsilon_1=0.01\}$ and LFDs of the ($c^*$)-test .\hspace{-3mm}}\label{fig6}
\end{figure}\\
In the fifth simulation, a single sample (m)-test \eqref{equation25} is considered, when the nominal distributions are mean shifted and mean and variance shifted Gaussian distributions as defined before.
Robustness parameters are chosen to be equal ($\varepsilon=\varepsilon_0=\varepsilon_1$). For this choice, from \eqref{eq32}, it follows that $\varepsilon\in[0,0.5]$ for the mean shifted Gaussian distributions and $\varepsilon\in[0,0.338]$
for the mean and variance shifted Gaussian distributions s.t. the LFDs do not fully overlap. For all possible choices of $\varepsilon$, the performance of this robust test was calculated when the observations are due to LFDs of the (m)-test \eqref{equation24} and the LFDs of the (a)-test \eqref{equation61}, which are determined for the same $\varepsilon$ of the robust test.
The rationale behind this simulation is to test the minimax property defined by \eqref{eq9} and \eqref{equation12}.
The choice of the (a)-test as a competitor to the (m)-test is not arbitrary.
First, the LFDs of both tests lie on the boundary of the closed ball and second, the (a)-test is claimed to be asymptotically robust for large enough $n$ \cite{dabak}. Figure~\ref{fig7} illustrates the outcome of this simulation for the mean shifted Gaussian distributions. Due to the symmetry of the nominal distributions and the equal choice of the robustness parameters, we have $P_E=P_E^0=P_E^1$.
It can be seen that the robust test doesn't degrade its performance as expected. Similarly, in Fig.~\ref{fig8} the result of the same simulation for the mean and variance shifted Gaussian distributions is given. Since the nominal distributions are not symmetric, the error probabilities ($P_E^0$ and $P_E^1$) are unequal. More interestingly, as illustrated in Fig.~\ref{fig9}, the false alarm probability first increases with $\varepsilon$ and then starts decreasing. In all cases, it can be seen that \eqref{eq9} and \eqref{equation12} are valid.\\
The last part of the simulations is related to the robustness of the sequential probability ratio test based on the likelihood ratio between LFDs obtained by single sample robust tests. The robustness of the composite model strictly depends on the robustness of each single model: the sequential (m)-test and the sequential (h)-test. If one of them fails to be minimax robust, then the composite model is not minimax robust either. This makes the analysis of the test of robustness for the sequential (m)-test and the sequential (h)-test general enough to have conclusions about the composite test. In the sequel, Monte-Carlo simulations have been performed with $10^5$ samples. The threshold space $(\ln t_l,\ln t_u)\in\mathbb{R}^{-}\times\mathbb{R}^{+}$ is first cropped to $[-6,0]\times[0,6]$ and then discretized with a step parameter of $0.01$ in both directions, leading to $60\times 60$ pairs of $(\ln t_l,\ln t_u)$. The nominal distributions are selected to be the mean and variance shifted Gaussian
distributions as before. For $\varepsilon=\varepsilon_0=\varepsilon_1=0.01$, the LFDs of the (m)-test $(\hat{g}_0,\hat{g}_1)$ and the (a)-test $(\bar{g}_0,\bar{g}_1)$ are determined by solving \eqref{equation29}, \eqref{equation30} and \eqref{equation62}. Accordingly, the likelihood ratio is formed by $\hat{l}=\hat{g}_1/\hat{g}_0$ or $\bar{l}=\bar{g}_1/\bar{g}_0$. The tests considered are $\hat{S}_n=\sum_{i=1}^n \hat{l}(Y_i)$ and $\bar{S}_n=\sum_{i=1}^n \bar{l}(Y_i)$ where every $Y_i$ is distributed either as $\hat{g}_0$ or $\bar{g}_0$ under $\mathcal{H}_0$ and either $\hat{g}_1$ or $\bar{g}_1$ under $\mathcal{H}_1$. For every pair of thresholds $(\ln t_l,\ln t_u)$, the sequential test is run and the false alarm probability, miss detection probability and expected number of samples under $\mathcal{H}_0$ and $\mathcal{H}_1$ are calculated. Figure~\ref{fig10} illustrates the ratio of the false alarm probability $\alpha_m^a$ to the false alarm probability $\alpha_m^m$. Clearly, the
performance of $\hat{S}_n$ for $Y_i \sim \hat{g}_0$ degrades for almost all simulation points if actually $Y_i \sim \bar{g}_0$. Figure~\ref{fig11} illustrates similar results for the miss detection probability when the robust test is $\bar{S}_n=\sum_{i=1}^n \bar{l}(Y_i)$. Again, the test doesn't satisfy the bounded error probability condition. Figures~\ref{fig12}-\ref{fig15} illustrate the same type of simulations for the expected number of samples where similar observations can be made. In conclusion, one can see that the sequential (m)-test is not robust for the error probability as well as for the expected number of samples, whereas the sequential (a)-test is only asymptotically robust for the expected number of samples. The simulation results are in agreement with the theoretical findings. A short comparison of the (m)-test, the (a)-test and the (h)-test are given in Table~\ref{tab1}.

\begin{figure}[ttt]
  \centering
  \psfrag{x}[t][]{$\epsilon=\epsilon_0=\epsilon_1$}
  \psfrag{y}{$y_u$}
  \centerline{\includegraphics[width=8.5cm]{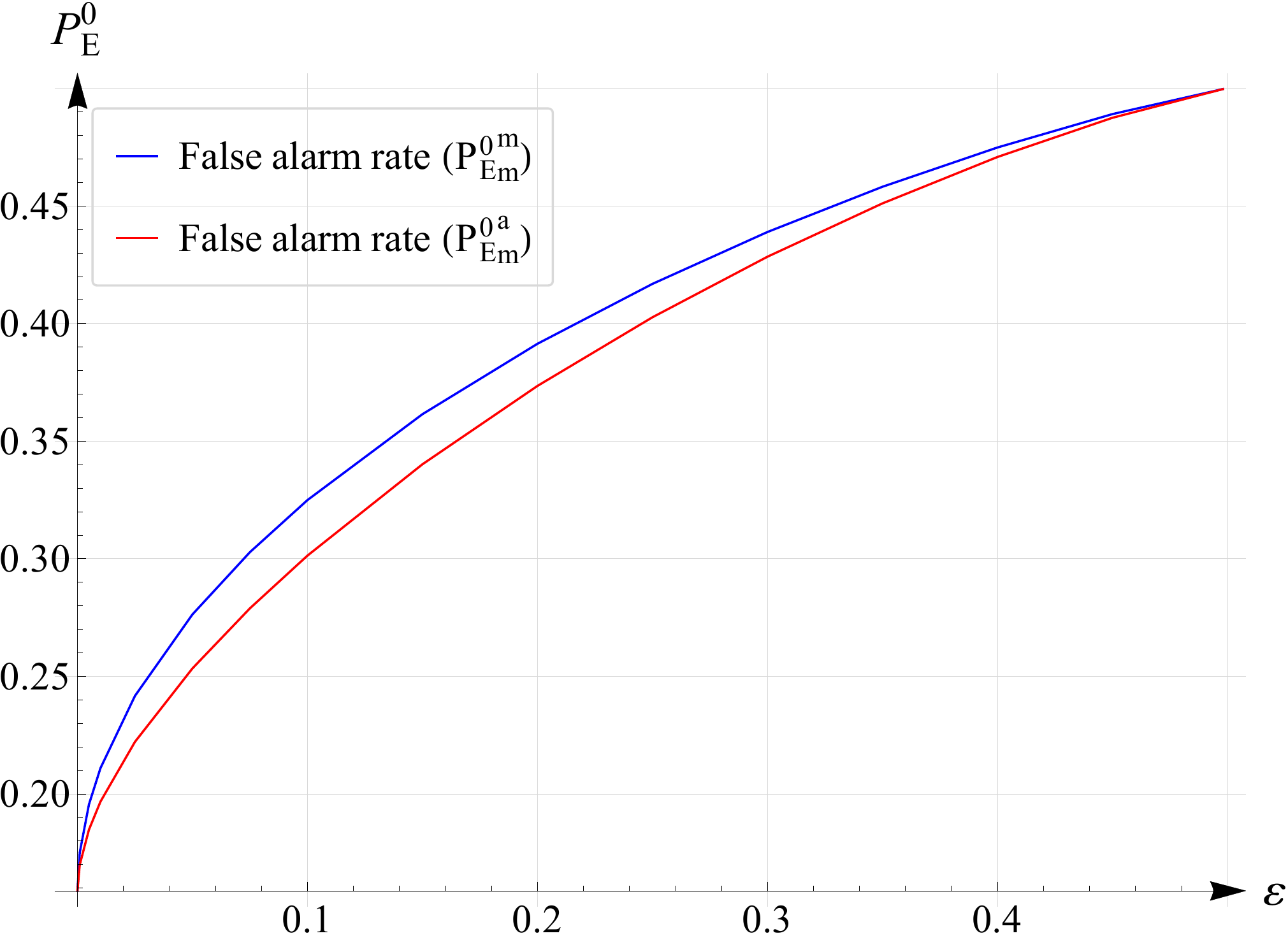}}
\vspace{-3mm}
\caption{The performance of the robust decision rule \eqref{equation25} for all achievable least favorable distributions of the (m)-test and the (a)-test with $\varepsilon=\varepsilon_0=\varepsilon_1$, when the nominal distributions are $F_0\sim\mathcal{N}(-1,1)$ and $F_1\sim\mathcal{N}(1,1)$.\hspace{-3mm}}\label{fig7}\hspace{-2mm}
\vspace{-3mm}
\end{figure}

\begin{figure}[ttt]
  \centering
  \psfrag{x}[t][]{$\epsilon=\epsilon_0=\epsilon_1$}
  \psfrag{y}{$y_u$}
  \centerline{\includegraphics[width=8.5cm]{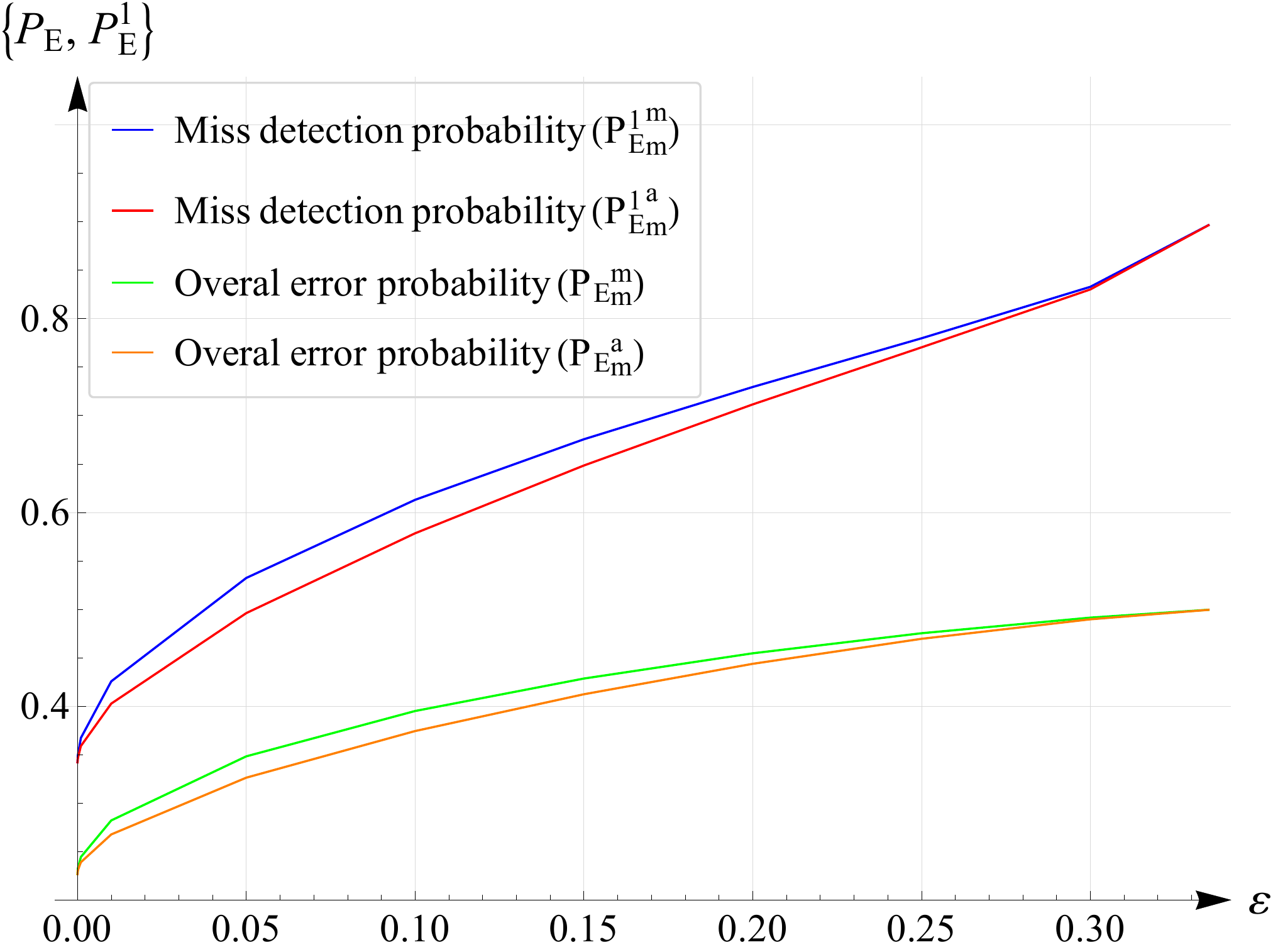}}
\vspace{-3mm}
\caption{The performance of the robust decision rule \eqref{equation25} for all achievable least favorable distributions of the (m)-test and the (a)-test with $\varepsilon=\varepsilon_0=\varepsilon_1$, when the nominal distributions are $F_0\sim\mathcal{N}(-1,1)$ and $F_1\sim\mathcal{N}(1,2)$.\hspace{-3mm}}\label{fig8}\hspace{-2mm}
\vspace{-3mm}
\end{figure}

\begin{figure}[ttt]
  \centering
  \psfrag{x}[t][]{$\epsilon=\epsilon_0=\epsilon_1$}
  \psfrag{y}{$y_u$}
  \centerline{\includegraphics[width=8.5cm]{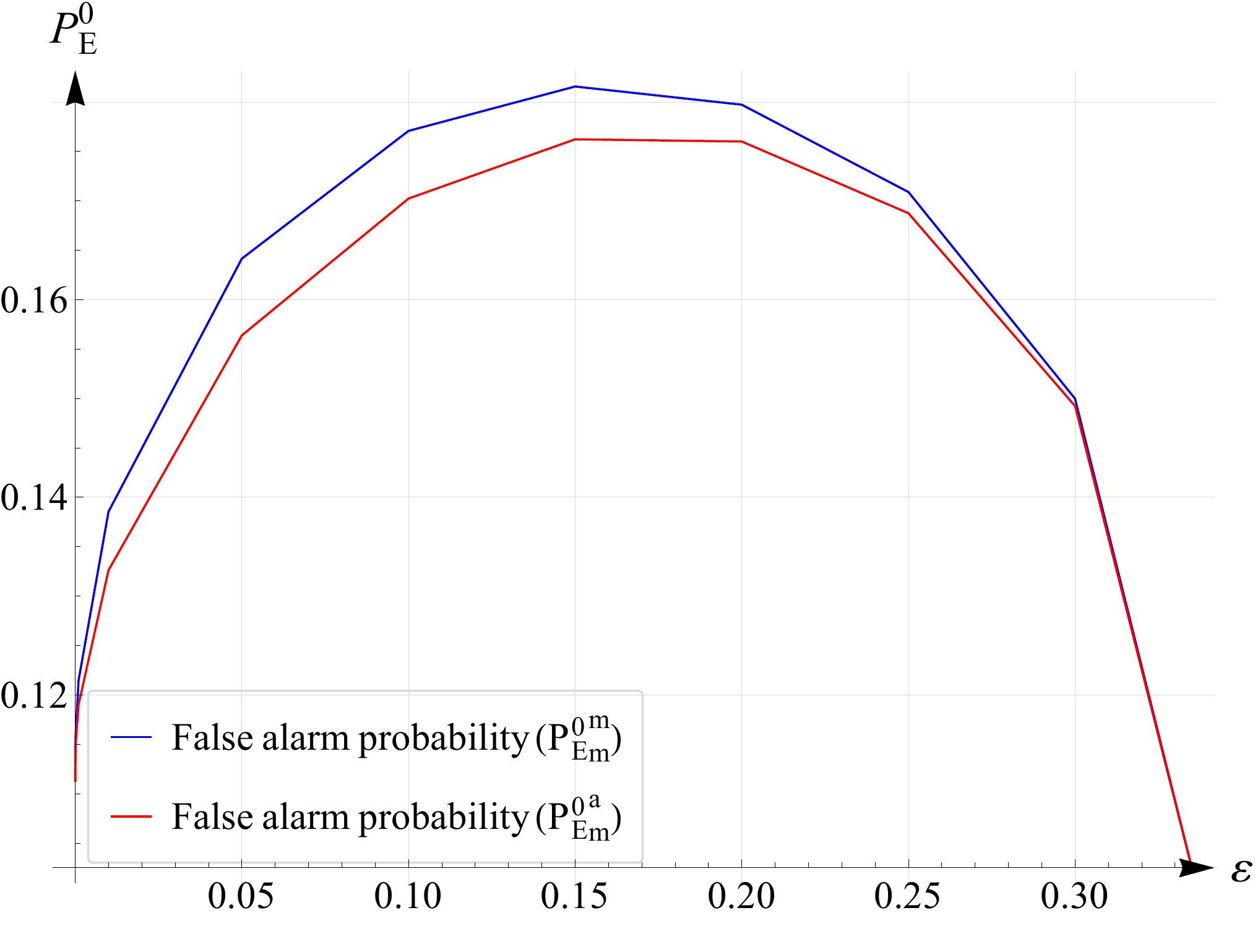}}
\vspace{-3mm}
\caption{False alarm probability of the robust decision rule \eqref{equation25} for all achievable least favorable distributions of the (m)-test and the (a)-test with $\varepsilon=\varepsilon_0=\varepsilon_1$, when the nominal distributions are $F_0\sim\mathcal{N}(-1,1)$ and $F_1\sim\mathcal{N}(1,2)$.\hspace{-3mm}}\label{fig9}\hspace{-2mm}
\vspace{-3mm}
\end{figure}

%\begin{figure}[ttt]
%  \centering
%  \psfrag{x}[t][]{$\epsilon=\epsilon_0=\epsilon_1$}
%  \psfrag{y}{$y_u$}
%  \centerline{\includegraphics[width=8.5cm]{figures/loglikelihoodpdfs.pdf}}
%\vspace{-3mm}
%\caption{blabla.\hspace{-3mm}}\label{fig9}\hspace{-2mm}
%\vspace{-3mm}
%\end{figure}

\begin{figure}[thb]
  \centering
  \psfrag{x}[t][]{$\epsilon=\epsilon_0=\epsilon_1$}
  \psfrag{y}{$y_u$}
  \centerline{\includegraphics[width=8.5cm]{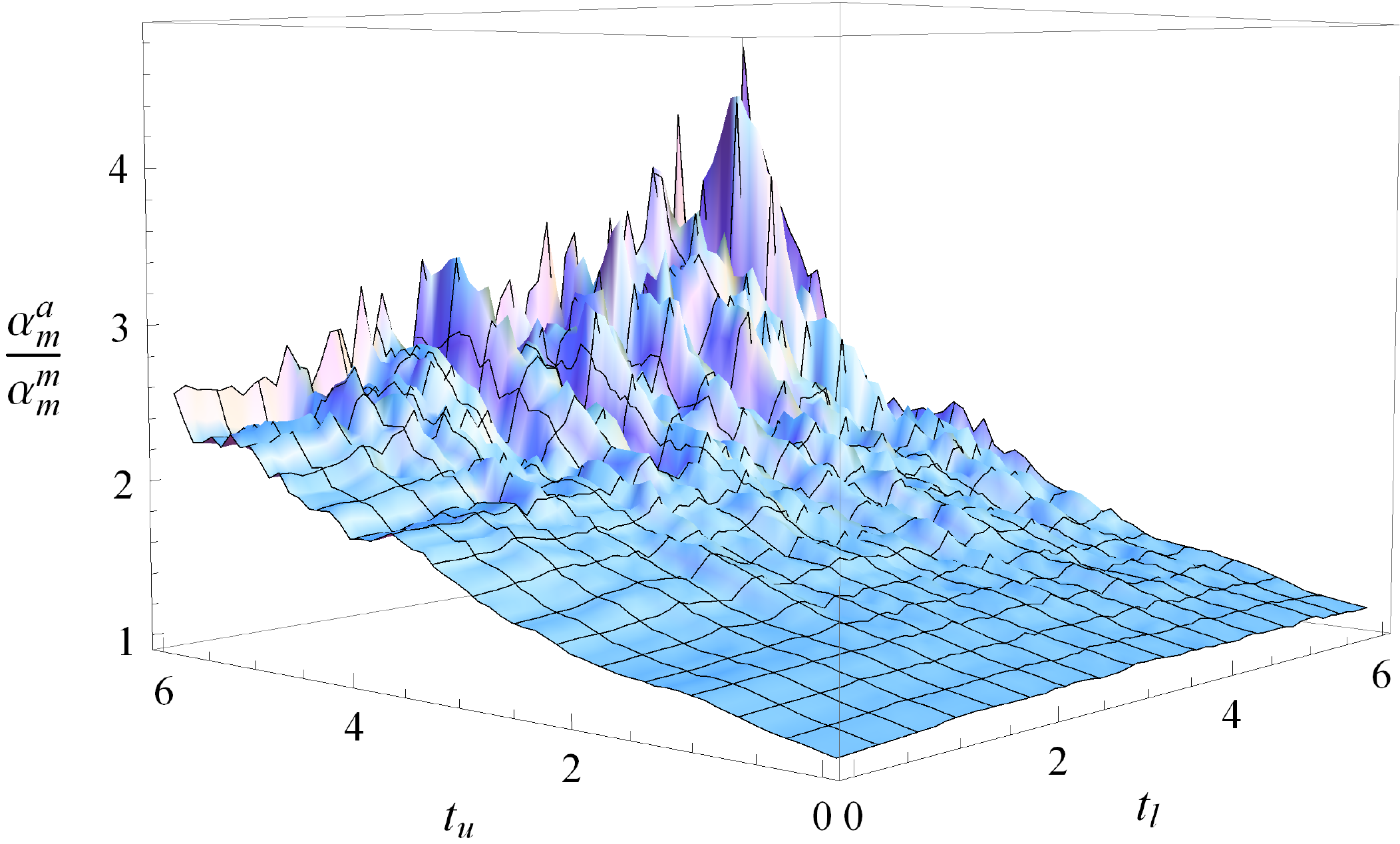}}
\vspace{-3mm}
\caption{The ratio of the false alarm probabilities of the (m)-test when the observations follow the LFD of the (a)-test ($\hat{G}_0$) and the LFD of the (m)-test ($\bar{G}_0$).\hspace{-3mm}}\label{fig10}\hspace{-2mm}
\vspace{-3mm}
\end{figure}

\begin{figure}[thb]
  \centering
  \psfrag{x}[t][]{$\epsilon=\epsilon_0=\epsilon_1$}
  \psfrag{y}{$y_u$}
  \centerline{\includegraphics[width=8.5cm]{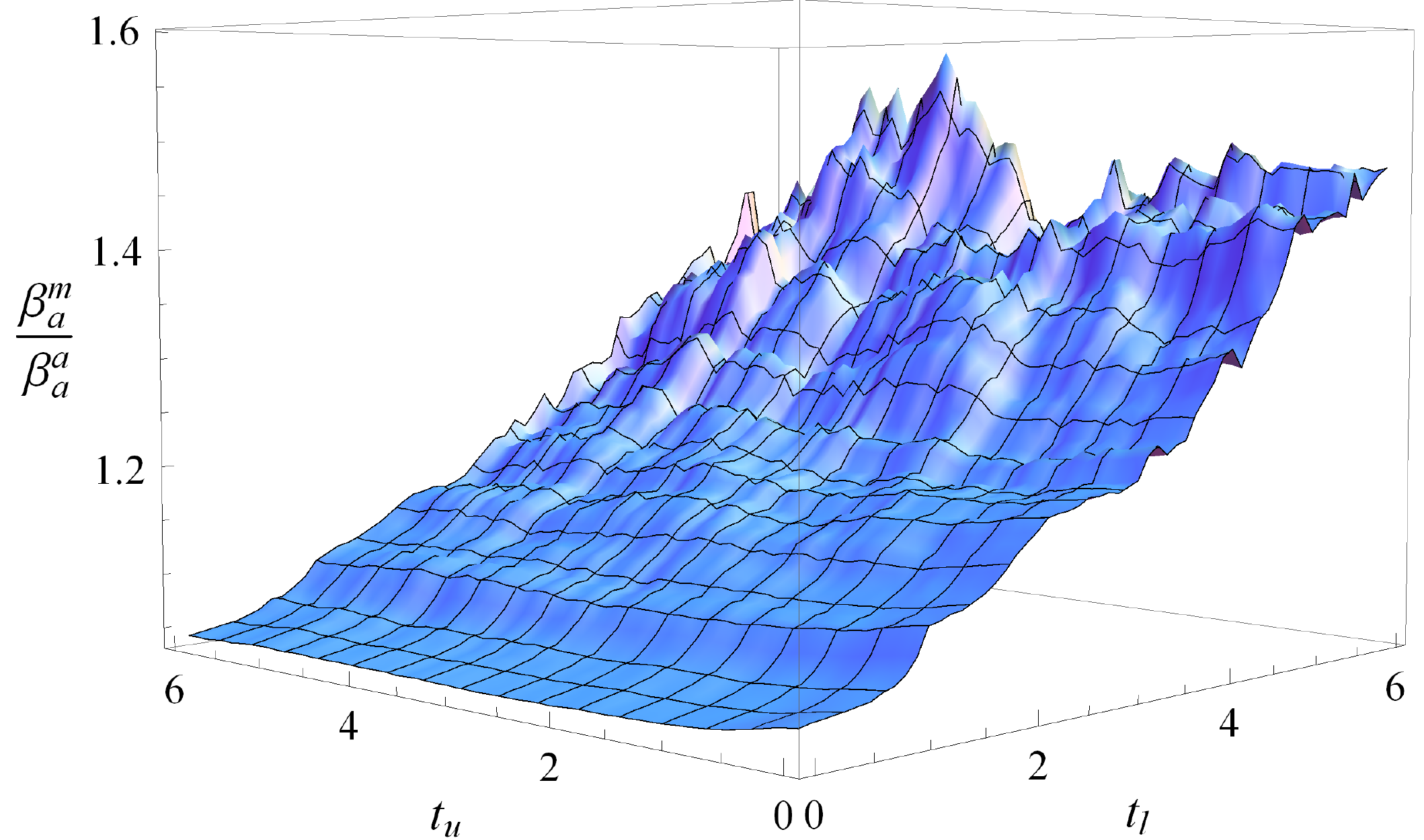}}
\vspace{-3mm}
\caption{The ratio of the miss detection probabilities of the (a)-test when the observations follow the LFD of the (m)-test ($\hat{G}_0$) and the LFD of the (a)-test ($\bar{G}_0$).\hspace{-3mm}}\label{fig11}\hspace{-2mm}
\vspace{-3mm}
\end{figure}

\begin{figure}[thb]
  \centering
  \psfrag{x}[t][]{$\epsilon=\epsilon_0=\epsilon_1$}
  \psfrag{y}{$y_u$}
  \centerline{\includegraphics[width=8.5cm]{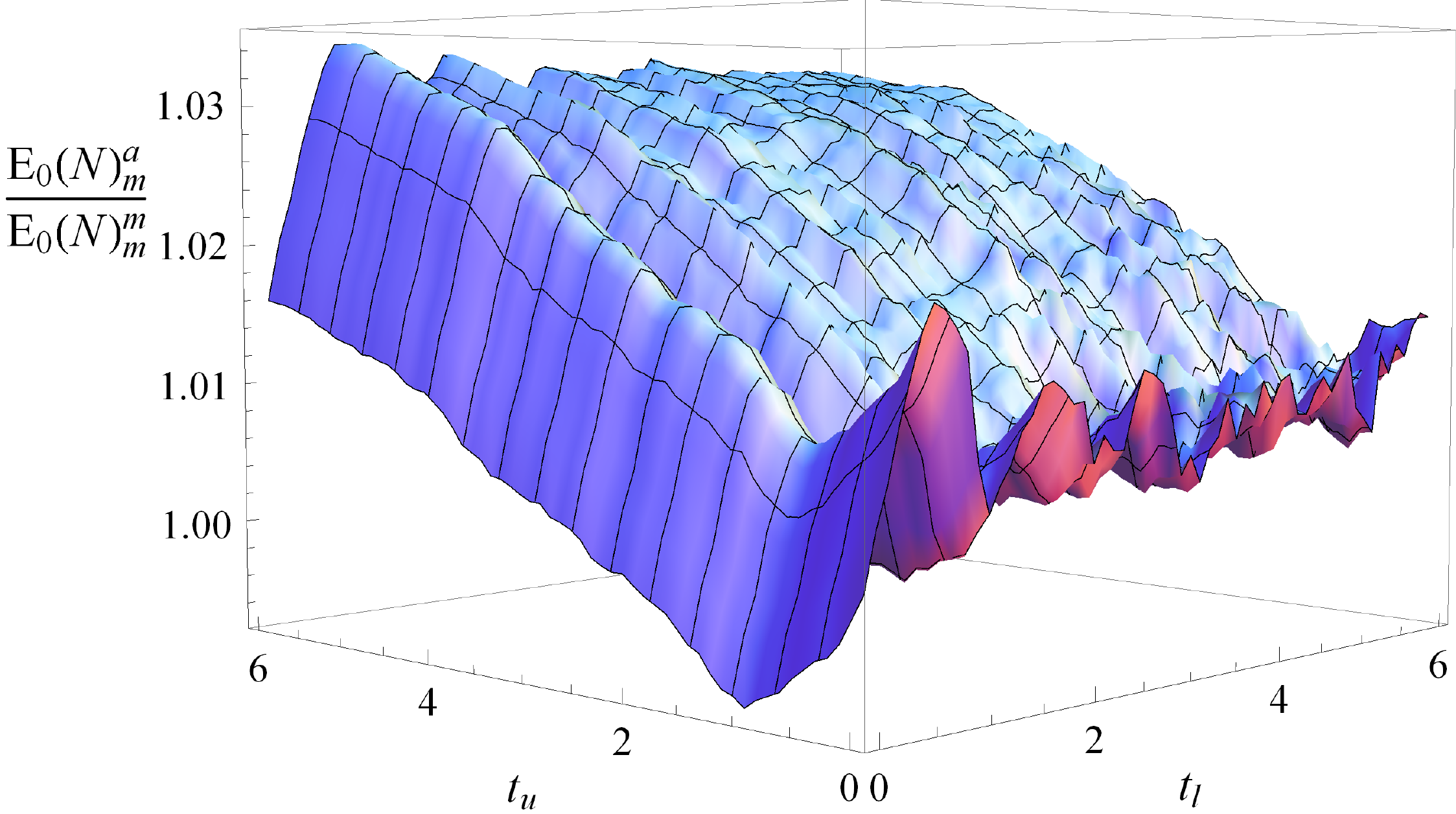}}
\vspace{-3mm}
\caption{The ratio of the expected number of samples of the (m)-test when the observations follow the LFD of the (a)-test ($\hat{G}_0$) and the LFD of the (m)-test ($\bar{G}_0$).\hspace{-3mm}}\label{fig12}\hspace{-2mm}
\vspace{-3mm}
\end{figure}

\begin{figure}[thb]
  \centering
  \psfrag{x}[t][]{$\epsilon=\epsilon_0=\epsilon_1$}
  \psfrag{y}{$y_u$}
  \centerline{\includegraphics[width=8.5cm]{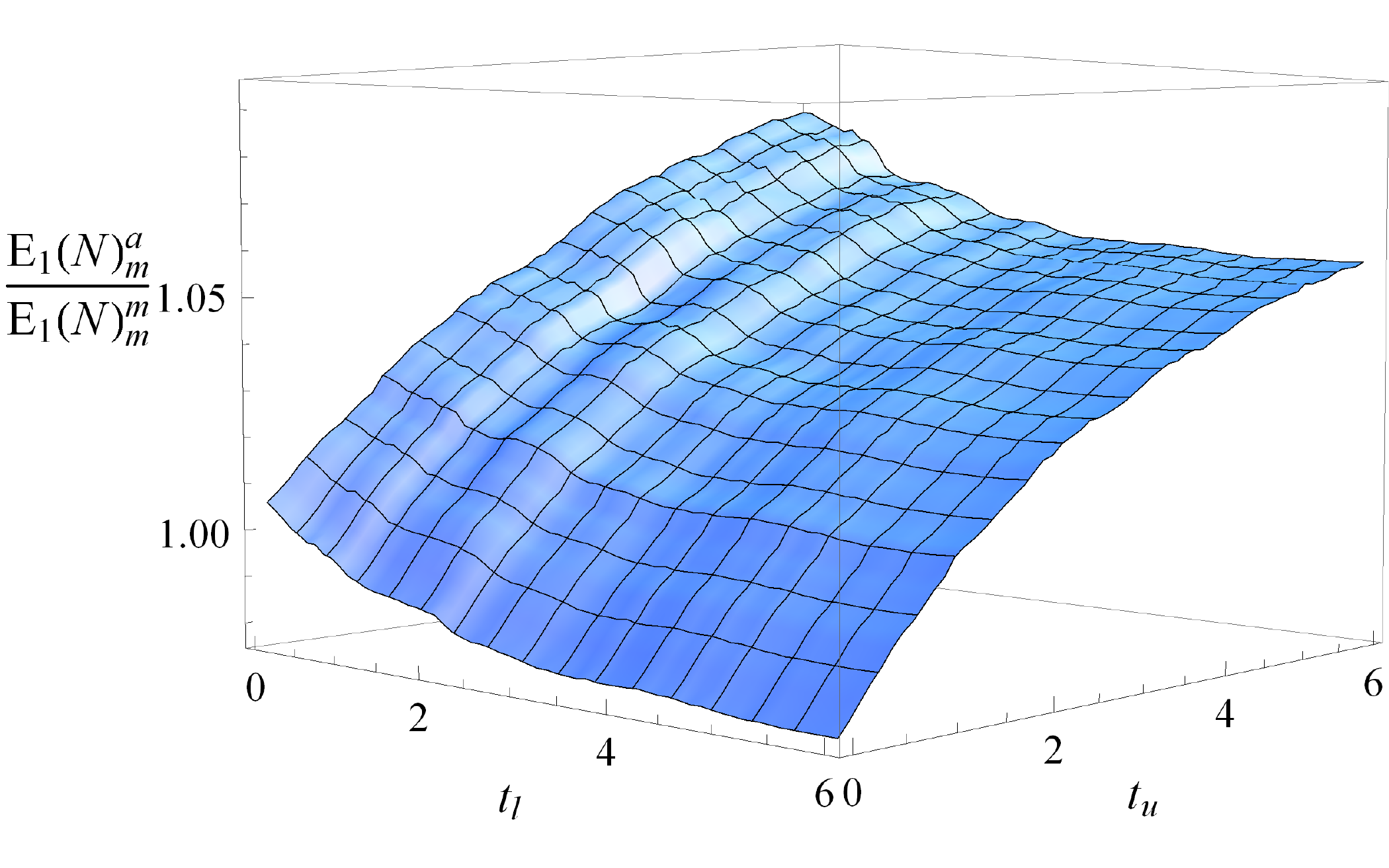}}
\vspace{-3mm}
\caption{The ratio of the expected number of samples of the (m)-test when the observations follow the LFD of the (a)-test ($\hat{G}_1$) and the LFD of the (m)-test ($\bar{G}_1$).\hspace{-3mm}}\label{fig13}\hspace{-2mm}
\vspace{-3mm}
\end{figure}

\begin{figure}[thb]
  \centering
  \psfrag{x}[t][]{$\epsilon=\epsilon_0=\epsilon_1$}
  \psfrag{y}{$y_u$}
  \centerline{\includegraphics[width=8.5cm]{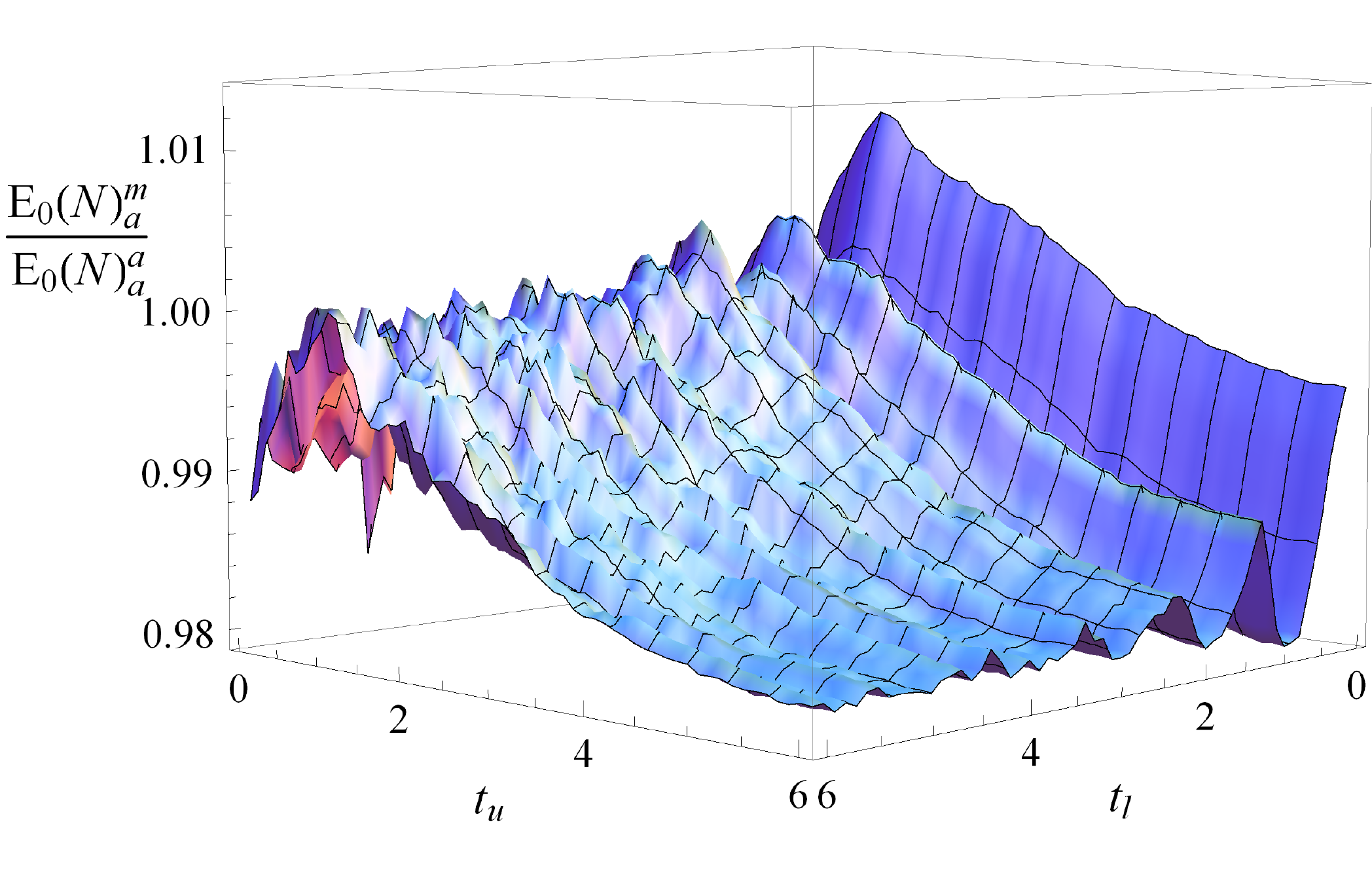}}
\vspace{-3mm}
\caption{The ratio of the expected number of samples of the (a)-test when the observations follow the LFD of the (m)-test ($\hat{G}_0$) and the LFD of the (a)-test ($\bar{G}_0$).\hspace{-3mm}}\label{fig14}\hspace{-2mm}
\vspace{-3mm}
\end{figure}

\begin{figure}[thb]
  \centering
  \psfrag{x}[t][]{$\epsilon=\epsilon_0=\epsilon_1$}
  \psfrag{y}{$y_u$}
  \centerline{\includegraphics[width=8.5cm]{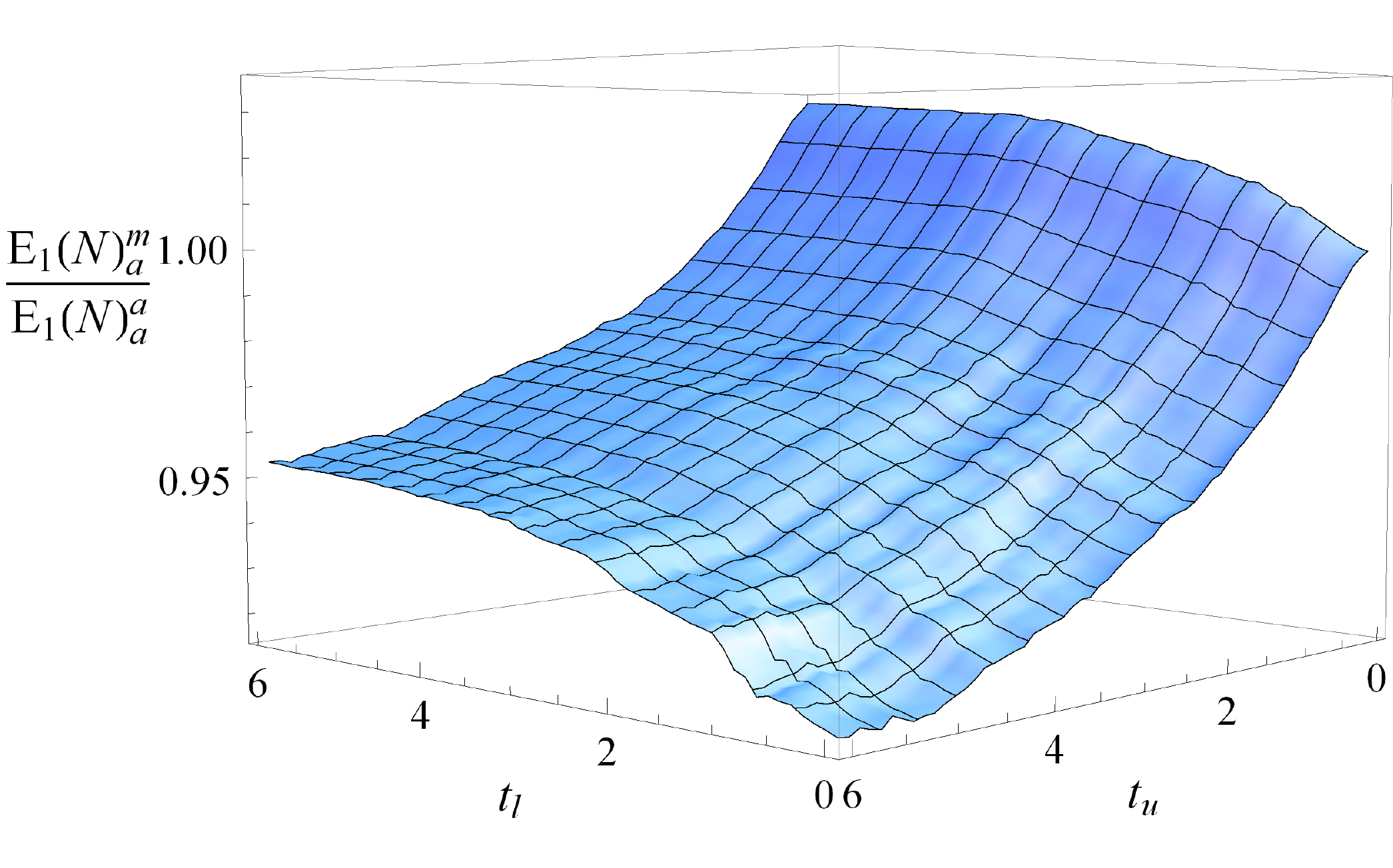}}
\vspace{-3mm}
\caption{The ratio of the expected number of samples of the (a)-test when the observations follow the LFD of the (m)-test ($\hat{G}_1$) and the LFD of the (a)-test ($\bar{G}_1$).\hspace{-3mm}}\label{fig15}\hspace{-2mm}
\vspace{-3mm}
\end{figure}

\section{Conclusion}\label{conclusions}
A minimax robust hypothesis testing scheme between two composite hypotheses based on the KL-divergence has been proposed. It has been shown that the proposed model reduces to Levy's robust test \cite{levy09} when the nominal likelihood ratio is monotone and the nominal probability density functions are symmetric. For comparison purposes, Dabak's asymptotically robust test \cite{dabak} has been introduced and the existence of LFDs for this test has been proven without consideration of the geometrical aspects of hypothesis testing. It has been shown that the proposed minimax robust test, the (m)-test, can be combined with Huber's clipped likelihood ratio test, the (h)-test, in a composite uncertainty model. Hence, the composite test, the (c)-test, provides minimax robustness both for outliers as well as for modeling errors. The existence of LFDs for the composite uncertainty model has also been proven. It has been demonstrated that the proposed composite model reduces to the individual robust tests via a suitable choice of the parameters.\\
To design a robust test for modeling errors, the uncertainty sets can be constructed by choosing distances different from the KL-divergence. It has been shown that the choice of a distance plays a crucial role in designing the robust tests. Although the robust version of the likelihood ratio test remains the same for many distances, there are examples where this assertion is not true.
Among several distances discussed, the symmetrized $\chi^2$ has been found to be more suitable
for the design of a robust hypothesis test if the tail structures of the nominal distributions are needed to be roughly preserved. It has been also shown that the maximum robustness parameters are bounded from above. Both for the (m)-test as well as for the (h)-test, the problem of determining the maximum robustness parameters is proven to be a convex optimization problem, and therefore the related equations can be solved by a polynomial time algorithm.\\
Next, the single sample robust tests have been extended to fixed sample size tests. Cram\'{e}r's theorem has been adopted to characterize the asymptotic behavior of the robust tests. Interestingly, it has been found that the formulation of the asymptotic decrease rate of the error probability for the fixed sample size test coincides with the formulation to determine the maximum robustness parameters for the (m)-test. Later, single sample robust tests have been extended to the sequential hypothesis test. The minimax properties of the considered robust tests have either been proven or disproven analytically or with simulations. Finally, we have justified that the proposed composite model is applicable for robust estimation problems. Various simulation results show the agreement with theoretical findings.

\begin{table}[ht]
\caption{Comparison between the robust tests}
\begin{center}
\begin{tabular}{|l|l|l|l|}
\hline
&(m)-test & (a)-test & (h)-test \\
\hline \hline
Unique LFDs & Yes & Yes &No  \cite{hube65}\\
\hline
Unique test & Yes & Yes &Yes\\
\hline
Limiting test &  Soft sign test & Like. ratio test & Sign test  \\
\hline
Suitable for &  Model. errors & Model. errors & Outliers  \\
\hline
Non-linear equations &  Two coupled & Two distinct & Two distinct \\
\hline
Number of samples &  $n=1$ & $n\rightarrow\infty$ & $1<n<\infty$  \\
\hline
Fixed sample size test  &Not robust & Asymp. rob. \cite{dabak}& Robust  \\
\hline
Sequential test, $(\alpha,\beta)$  &Not robust & Not robust& Robust \\
\hline
Sequential test, $E[N]$  &Not robust & Asymp. rob.& Asymp. rob.\\
\hline
\end{tabular}
\end{center}
\label{tab1}
\end{table}

%\begin{table}[ht]
%\caption{Performance Degradation of The Composite model}
%\begin{center}
%\begin{tabular}{|l|l|l|l|l|l|}
%\hline
%&Nominal&Mod. Errors & Asymp. Rob. & Outliers \\
%\hline \hline
%False alarm&$0$& $0$ & $0$ & $0$  \\
%\hline
%Miss detection&$0$& $0.05$ & $0$ & $0$  \\
%\hline
%$E_0(N)$&$0.0044$& $0.3006$ & $0.4269$ & $0$    \\
%\hline
%$E_1(N)$&$0.0253$& $0.1897$ & $0.4397$ & $0.0181$   \\
%\hline
%\end{tabular}
%\end{center}
%\label{my_table1}
%\end{table}

\section*{Acknowledgment}
\label{sec:acknowledge}
This work was supported by the LOEWE Priority Program Cocoon (http://www.cocoon.tu-darmstadt.de).

\bibliographystyle{IEEEtran}
\bibliography{strings3}

% Generated by IEEEtran.bst, version: 1.13 (2008/09/30)
\begin{thebibliography}{10}
\providecommand{\url}[1]{#1}
\csname url@samestyle\endcsname
\providecommand{\newblock}{\relax}
\providecommand{\bibinfo}[2]{#2}
\providecommand{\BIBentrySTDinterwordspacing}{\spaceskip=0pt\relax}
\providecommand{\BIBentryALTinterwordstretchfactor}{4}
\providecommand{\BIBentryALTinterwordspacing}{\spaceskip=\fontdimen2\font plus
\BIBentryALTinterwordstretchfactor\fontdimen3\font minus
  \fontdimen4\font\relax}
\providecommand{\BIBforeignlanguage}[2]{{%
\expandafter\ifx\csname l@#1\endcsname\relax
\typeout{** WARNING: IEEEtran.bst: No hyphenation pattern has been}%
\typeout{** loaded for the language `#1'. Using the pattern for}%
\typeout{** the default language instead.}%
\else
\language=\csname l@#1\endcsname
\fi
#2}}
\providecommand{\BIBdecl}{\relax}
\BIBdecl

\bibitem{kay}
S.~M. Kay, \emph{Fundamentals of Statistical Signal Processing, Volume 2:
  Detection Theory}.\hskip 1em plus 0.5em minus 0.4em\relax Prentice Hall PTR,
  Jan. 1998.

\bibitem{levy}
B.~C. Levy, \emph{Principles of Signal Detection and Parameter Estimation},
  1st~ed.\hskip 1em plus 0.5em minus 0.4em\relax Springer Publishing Company,
  Incorporated, 2008.

\bibitem{nonparametric}
\BIBentryALTinterwordspacing
J.~D. Gibson and J.~L. Melsa, \emph{Introduction to nonparametric detection
  with applications}, ser. Mathematics in science and engineering.\hskip 1em
  plus 0.5em minus 0.4em\relax New York, San Francisco, London: Academic Press,
  1975. [Online]. Available: \url{http://opac.inria.fr/record=b1128538}
\BIBentrySTDinterwordspacing

\bibitem{hube81}
P.~J. Huber, \emph{Robust statistics}.\hskip 1em plus 0.5em minus 0.4em\relax
  Wiley New York, 1981.

\bibitem{hube65}
------, ``A robust version of the probability ratio test,'' \emph{Ann. Math.
  Statist.}, vol.~36, pp. 1753--1758, 1965.

\bibitem{dabak}
A.~G. Dabak and D.~H. Johnson, ``Geometrically based robust detection,'' in
  \emph{Proceedings of the Conference on Information Sciences and Systems},
  Johns Hopkins University, Baltimore, MD, May 1994, pp. 73--77.

\bibitem{levy09}
B.~C. Levy, ``Robust hypothesis testing with a relative entropy tolerance,''
  \emph{IEEE Transactions on Information Theory}, vol.~55, no.~1, pp. 413--421,
  2009.

\bibitem{hube68}
P.~J. Huber and V.~Strassen, ``Robust confidence limits,'' \emph{Z.
  Wahrcheinlichkeitstheorie verw. Gebiete}, vol.~10, p. 269�278, 1968.

\bibitem{hube73}
------, ``Minimax tests and the \mbox{Neyman-Pearson} lemma for capacities,''
  \emph{Ann. Statistics}, vol.~1, pp. 251--263, 1973.

\bibitem{gul4}
G.~G{\"u}l and A.~M. Zoubir, ``Robust hypothesis testing with composite
  distances,'' in \emph{IEEE Workshop on Statistical Signal Processing}, Gold
  Coast, Australia, Jun. 2014.

\bibitem{gul2}
G.~G\"{u}l and A.~M. Zoubir, ``Robust detection under communication
  constraints,'' in \emph{Proceedings of the IEEE 14th International Workshop
  on Advances in Wireless Communications (SPAWC)}, Vancouver, Canada, Jun.
  2013, p. 109.

\bibitem{gul}
------, ``Robust hypothesis testing for modeling errors,'' in \emph{IEEE Int.
  Conf. on Acoustics, Speech, and Signal Processing (ICASSP)}, Vancouver,
  Canada, May 2013, pp. 5514--5518.

\bibitem{gul3}
------, ``Robust hypothesis testing with squared \mbox{H}ellinger distance,''
  in \emph{Proceedings of the 22nd European Signal Processing Conference
  (EUSIPCO)}, Lisbon, Portugal, 2014, p. 1083.

\bibitem{Rudin}
W.~Rudin, \emph{Principles of Mathematical Analysis}.\hskip 1em plus 0.5em
  minus 0.4em\relax New York:McGraw-Hill, 1964.

\bibitem{sion}
M.~Sion, ``On general minimax theorems.'' \emph{Pacific Journal of
  Mathematics}, vol.~8, no.~1, pp. 171--176, 1958.

\bibitem{ekeland}
J.-P. Aubin and I.~Ekeland, \emph{Applied Nonlinear Analysis}.\hskip 1em plus
  0.5em minus 0.4em\relax New York: J. Wiley, 1984.

\bibitem{Wolfstetter96}
E.~Wolfstetter, ``Stochastic dominance: Theory and applications,'' 1996.

\bibitem{Gibbs}
A.~L. Gibbs and F.~E. Su, ``\BIBforeignlanguage{English}{{On choosing and
  bounding probability metrics.}}'' \emph{\BIBforeignlanguage{English}{Int.
  Stat. Rev.}}, vol.~70, no.~3, pp. 419--435, 2002.

\bibitem{Cramer}
H.~Cram\'{e}r, ``Sur un nouveau th\'{e}or\`{e}me-limite de la th\'{e}orie des
  probabilit\'{e}s,'' \emph{Actualit\'{e}s Scientifiques et Industrielles}, no.
  736, Hermann C$^{ie}$, Paris, 1938.

\bibitem{wald}
A.~Wald, ``Sequential tests of statistical hypotheses,'' \emph{The Annals of
  Mathematical Statistics}, vol.~16, no.~2, pp. 117--186, 06 1945.

\bibitem{sequential}
K.~D. Kharin, A., ``Robust sequential testing of hypothesis on discrete
  probability distributions,'' \emph{Austrian Journal of Statistics}, vol.~34,
  no.~2, pp. 153--162, 2005.

\bibitem{sequential2}
P.~X. Quang, ``Robust sequential testing,'' \emph{Annals of Statistics},
  vol.~13, no.~2, pp. 638--649, 1985.

\bibitem{feller}
W.~Feller, \emph{An Introduction to Probability Theory and Its
  Applications}.\hskip 1em plus 0.5em minus 0.4em\relax Wiley, January 1968,
  vol.~1.

\end{thebibliography}
\end{document}